\newtheorem{lemma}{Lemma}[section]
\newtheorem{corollary}{Corollary}[section]
\newtheorem{proposition}{Proposition}[section]
\newtheorem{theorem}{Theorem}[section]
\newtheorem*{theorem*}{Theorem}
\newtheorem{remark}{Remark}[section]
\def\bma{{\bm a}}
\def\bmb{{\bm b}}
\def\bmc{{\bm c}}
\def\bmd{{\bm d}}
\def\bme{{\bm e}}
\def\bmf{{\bm f}}
\def\bmg{{\bm g}}
\def\bmi{{\bm i}}
\def\bmj{{\bm j}}
\def\bmk{{\bm k}}
\def\bmp{{\bm p}}
\def\bmv{{\bm v}}
\def\bmzero{{\bm 0}}
\def\bmone{{\bm 1}}
\def\bmA{{\bm A}}
\def\bmB{{\bm B}}
\def\bmC{{\bm C}}
\def\bmD{{\bm D}}
\def\bmF{{\bm F}}
\def\bmQ{{\bm Q}}
\def\bmalpha{{\bm \alpha}}
\def\bmomega{{\bm \omega}}
\def\bmtau{{\bm \tau}}
\def\bmpartial{{\bm \partial}}
\def\bmnabla{{\bm \nabla}}
\begin{document}

\title[Conformal Einstein field equations - massless Vlasov matter ]{The conformal Einstein field equations with massless Vlasov matter}

\subjclass[2010]{35L04, 35Q75, 35Q76, 53A30}
 \author[J. Joudioux]{{J\'er\'emie} {Joudioux}}
 \address{Max-Planck-Institut f\"ur Gravitationsphysik (Albert-Einstein-Institut), Am M\"uhlenberg 1, 14476 Potsdam, Germany}
 \email{jeremie.joudioux@aei.mpg.de}

 \author[M. Thaller]{{Maximilian} {Thaller}}
 \address{Department of Mathematical Sciences, University of Gothenburg \& Chalmers University of Technology, Chalmers Tvärgata 3, 41296 G\"oteborg, Sweden}
 \email{maxtha@chalmers.se}
\author[J. A. Valiente Kroon]{{Juan} {A.} {Valiente Kroon}}
 \address{School of Mathematical Sciences, Queen Mary University of
   London, Mile End Road, London E1 4NS}
 \email{j.a.valiente-kroon@qmul.ac.uk}

\thanks{ M.T. thanks the A.E.I. for hospitality. J.J. thanks L. Andersson, H. Andr\'easson and H. Friedrich for their support, and friendly discussions.}

\keywords{Einstein conformal field equations, stability problems, symmetric hyperbolic system, Vlasov equation}

\begin{abstract} We prove the stability of de Sitter space-time as a solution to the Einstein- Vlasov system with massless particles. The semi-global stability of Minkowski space-time is also addressed. The proof relies on conformal techniques, namely Friedrich's conformal Einstein field equations. We exploit the conformal invariance of the massless Vlasov equation on the cotangent bundle and adapt Kato's local existence theorem for symmetric hyperbolic systems to prove a long enough time of existence for solutions of the evolution system implied by the Vlasov equation and the conformal Einstein field equations. 
\end{abstract}


\maketitle

\sloppy

\section*{Introduction}

\subsection*{The Einstein-Vlasov system}
Kinetic theory in general relativity is used to model the global behaviour of a space-time when, for instance, galaxy clusters are identified with particles.  The Vlasov matter model is a sub-class of these kinetic models arising when the collision of particles is neglected. This approximation is relevant when collisions either do not affect the motion of the particles or can be neglected ---for instance in the description of low-density gases. In this article we consider ultra-relativistic matter ---that is to say, particles whose velocity is close to the speed of light. A good approximation would then be that the particles, like photons, are massless and, in the context of general relativity, follow the path of freely falling massless particles ---i.e.~null geodesics. Reviews on relativistic kinetic theory can be found in \cite{MR1466508, Andreasson2011}.

In the context of general relativity, ultra-relativistic matter is modelled by the Einstein-Vlasov system. In the following let $(\tilde{\mathcal M}, \tilde{\bm g})$ be a four dimensional Lorentzian manifold and consider the subset of the cotangent bundle $\mathcal{P}$ defined by
\[
\mathcal{P} = \{ (x,\bmp) \in T^\ast M \, : \, \tilde{\bm g}^{-1}_x(\bmp,\bmp) = 0\}.
\]
The matter distribution at $x$ with momentum $\bmp$ is a non-negative function
\[
f: \mathcal{P} \rightarrow \mathbb{R}_+,
\]
satisfying the transport equation
\[
\tilde{\mathcal{L}} f = 0,
\]
where $\tilde{\mathcal{L}}$ is the Liouville vector field ---i.e. the  Hamiltonian vector field of the Hamiltonian function \((x,\bmp) \mapsto -\frac12 \tilde{\bmg}^{-1}_x(\bmp,\bmp) \). The stress-energy tensor associated with this matter model is
\[
\bm T [f]= T_{\alpha\beta}[f](x) \mathbf{d}x^\alpha \otimes \mathbf{d}x^\beta, \qquad  T_{\alpha\beta}[f](x) = 8\pi \int_{\tilde{\mathcal  P}_x} f(x,p)\, p_\alpha p_\beta \, \tilde{\mathrm{dvol}}_{\tilde{\mathcal  P}_x}.
\]
The Einstein-Vlasov system couples the Einstein equations for the metric \(\tilde{g}\) to the transport equation satisfied by \(f\):
\begin{subequations}
\begin{eqnarray}
\bm{Ric}_{\tilde{g}} - \dfrac12 \bm{R}_{\tilde {g}} \tilde {\bm g} + \lambda \tilde{\bm g} &=& \bm T [f],\\
\tilde{\mathcal{L}} f& = &0.
\end{eqnarray}
\end{subequations}

\subsection*{Stability results}
Written in an appropriate coordinates system, the Einstein-Vlasov system is a hyperbolic system of partial differential equations for which the Cauchy problem is well-posed and admits ``ground state'' solutions. These correspond to the absence of gravitational radiation and the absence of matter. The ground states are Minkowski space-time when the cosmological constant vanishes and de Sitter space-time when the cosmological constant is a negative number in our metric signature convention.  Since the Cauchy problem is well-posed, the dynamical stability of such solutions can be discussed. We are proving here the following theorem:
\begin{theorem*} The de Sitter space-time is a dynamically stable solution of the massless Einstein-Vlasov system. In particular, any small enough perturbation of the de Sitter initial data leads to a future lightlike and timelike geodesically complete space-time.
\end{theorem*}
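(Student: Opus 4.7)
The plan is to recast the massless Einstein-Vlasov system via Friedrich's conformal Einstein field equations (CEFE). First I would introduce an unphysical metric $\bm g = \Omega^2 \tilde{\bm g}$ with a conformal factor $\Omega$ that, for the de Sitter background, vanishes on a spacelike hypersurface $\mathscr{I}^+$ representing future timelike and null infinity, and in whose neighbourhood the conformally extended spacetime is smooth and even extends past $\mathscr{I}^+$. The CEFE then form a symmetric hyperbolic system for the unphysical conformal unknowns (metric, conformal factor, its scalar companion, Schouten tensor, rescaled Weyl tensor), regular at $\mathscr I^+$, with the trace-free part of the stress-energy tensor appearing as a source.

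The crucial observation for the matter sector is that the massless Vlasov equation is conformally invariant on the cotangent bundle: on the null sub-bundle, the Liouville vector fields of $\tilde{\bm g}$ and $\bm g$ are proportional, so any solution of $\tilde{\mathcal L} f = 0$ also satisfies $\mathcal L f = 0$ in the conformal picture, and the stress-energy $\bm T[f]$ is trace-free and transforms with a conformal weight that keeps it regular at $\mathscr I^+$. I would therefore formulate the entire evolution problem in the unphysical spacetime, where the moment integral producing $\bm T[f]$ is well-defined and smooth provided $f$ has compact momentum support.

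I would then couple the CEFE, viewed as a symmetric hyperbolic system on the unphysical manifold $M$, to a characteristic transport equation for $f$ on the null sub-bundle of $T^\ast M$. The main obstacle is proving local existence for this coupled system on a time interval long enough for the solution to reach past the image of $\mathscr{I}^+$. The difficulty is twofold: the Vlasov equation lives on phase space while the CEFE live on spacetime, and the coupling through moment integrals is nonlocal in the momentum variable, so standard symmetric hyperbolic theory does not apply verbatim. To overcome this I would adapt Kato's local existence theorem to accommodate a first-order transport equation on phase space whose momentum moments feed the principal spacetime unknowns, working in function spaces in which $f$ is sufficiently regular and momentum-compactly supported, so that its moments inherit the Sobolev regularity required by the Kato framework. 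Energy estimates should then yield a time of existence that depends continuously on the data and is bounded from below uniformly for small perturbations of the de Sitter initial data.

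Starting from the de Sitter data with $f \equiv 0$, whose solution extends smoothly past $\mathscr{I}^+$ in the conformal picture, Cauchy stability for the resulting symmetric hyperbolic problem guarantees that sufficiently small perturbations --- both geometric and kinetic --- produce solutions of the conformal system that again extend past the image of $\mathscr{I}^+$. Restricting to the region $\Omega > 0$ recovers the physical spacetime, and the standard comparison between null geodesics of $\tilde{\bm g}$ and those of $\bm g$ (reparametrised so as to terminate at $\mathscr I^+$), together with the spacelike character of $\mathscr I^+$ for controlling timelike geodesics, yields the claimed future timelike and null geodesic completeness.
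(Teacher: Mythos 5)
Your proposal follows the same overall route as the paper ---conformal Einstein field equations, conformal invariance of the massless Vlasov equation on the cotangent bundle, an adaptation of Kato's theorem for the coupled system, and Cauchy stability to push the conformal solution past $\mathscr I^+$--- but it has two genuine gaps at the technical core, both of which the paper must (and does) resolve for the argument to close.

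First, compact momentum support is not the right condition: the obstruction for the \emph{massless} equation is the degeneracy at vanishing momentum, not unboundedness. In frame momentum coordinates the principal part of the Vlasov system has $\bm A^0_{\mathfrak f} = \eta^{\bma\bmb} v_\bma e_\bmb{}^0\,\bm I$, which loses rank as $|v|\to 0$, so symmetric hyperbolicity (uniform positivity of $\bm A^0_{\mathfrak f}$) fails on any momentum support containing, or accumulating at, $v=0$. You therefore need the support confined to an annulus $\delta \leq |v|\leq 1/\delta$, \emph{and} you must prove this confinement propagates for the whole (long) existence time, inside the fixed-point iteration, since the lower bound is an assumption of the linear theory you iterate on. The paper does this by an a priori analysis of the characteristic system: writing the connection coefficients as background plus an $O(\varepsilon)$ perturbation and using that the background coefficients $\mathring\Gamma_\bmb{}^\bmc{}_\bmzero$ vanish, one gets $|\dot v_\bmzero| \leq C\varepsilon\, v_\bmzero^2$, hence $|v_\bmzero(\varsigma)| \geq (2+\varepsilon\varsigma)^{-1}$, and a quantitative bound on the affine parameter needed to reach $\tau = \tau_\bullet$ then yields $|v_\bmzero|\geq 1/4$ throughout. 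Nothing in your sketch supplies this; without it Kato's hypotheses simply fail for the matter block.

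Second, your coupling loses a derivative. The CEFE sources do not involve only $T_{\bma\bmb}$: the rescaled Cotton tensor contains $\nabla_{[a}T_{b]c}$, i.e.\ first derivatives of momentum moments of $f$. Feeding these into the geometric equations at regularity $H^m$ would require $f\in H^{m+1}$, but $f$ solves a transport equation whose coefficients are the geometric unknowns, so it cannot be one degree more regular than the geometry; ``choosing function spaces in which $f$ is sufficiently regular'' does not make the fixed point close. The paper's resolution is to enlarge the matter unknowns to $\mathbf{u}_{\mathfrak f} = (\mathfrak f, \mathfrak f_\bma, \varphi^\bma)$ with $\mathfrak f_\bma = \hat{\bm e}_\bma \mathfrak f$ and $\varphi^\bma = \partial_{v_\bma}\mathfrak f$, derive their evolution equations from the commutators $[\mathcal L, \hat{\bm e}_\bmg]$ and $[\mathcal L, \partial_{v_\bmg}]$ (which is where the curvature terms $\Xi d^\bmc{}_{\bmd\bmb\bmg} + 2 S_{\bmd[\bmb}{}^{\bmc\bme} L_{\bmg]\bme}$ enter), so that $\nabla_\bma T_{\bmb\bmc}$ becomes an algebraic (integral) expression in the enlarged unknowns with no derivatives. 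This in turn creates new constraints ($\varphi^\bma - \partial_{v_\bma}\mathfrak f = 0$, $\mathfrak f_\bma - \hat{\bm e}_\bma\mathfrak f = 0$) whose propagation must be verified; the paper shows they satisfy a homogeneous symmetric hyperbolic system in the zero quantities. Both the augmentation and the constraint propagation are absent from your proposal, and without them the step ``adapt Kato's theorem'' fails as stated.
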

As part of our analysis, we also prove that the curvature of the perturbation and the stress-energy tensor approach zero asymptotically. To that extent, this theorem can be seen as an asymptotic stability result.

This result has already been proven by Ringstr\"om \cite{MR3186493}. The novelty lies here in the approach. The particularity of the Vlasov equation, written on the cotangent bundle, is that solutions are conformally invariant. This is a  consequence of two facts: up to reparametrisation, null geodesics for conformal metrics coincide; and the Vlasov equation implies that the distribution function is constant along null geodesics. This conformal invariance suggests the use of conformal techniques to analyse the stability properties of solutions to the Einstein-Vlasov system. The use of conformal techniques reduces the problem of the global existence of solutions to the Einstein-Vlasov system to the study of the local existence for sufficiently long times of a symmetric hyperbolic problem obtained from a conformal recasting of the Einstein-Vlasov system. The long (enough) time of existence of solutions to this system obtained using the theory developed by Kato \cite{kato}.

Following the same strategy, we also address the problem of the stability of Minkowski space-time. It is well-known that the standard compactification of Minkowski space-time contains a singular point, $i^0$, corresponding to the end-points of the inextendible spacelike geodesics. Hence, to avoid the problem of considering the evolution of the initial data from the point $i^0$, we work with perturbations which, initially, {coincide with the  Schwarzschild solution outside a} compact set ---in the spirit of the glueing results \cite{MR1902228,MR2225517}. The construction of such data is not discussed here. These data evolve to  {coincide with the  Schwarzschild solution outside a} compact set on a hyperboloid ---that is to say, on a spacelike hypersurface asymptote to a Minkowski light-cone, and transverse to the boundary at infinity. From the Cauchy stability results for the Einstein-Vlasov system \cite{AIF_1971__21_3_181_0, MR3186493} it follows that small enough initial perturbations remain small in the evolution. We then prove the following semi-global stability result of Minkowski space:


\begin{theorem*} Consider a hyperboloidal initial data set for the massless Einstein-Vlasov system close enough to the initial data giving rise to the Minkowski space-time. These initial data lead to a future lightlike and timelike geodesically complete space-time, solution to the massless Einstein-Vlasov system.
\end{theorem*}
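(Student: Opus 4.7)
The plan is to replicate the conformal strategy used earlier in the paper for de Sitter in a semi-global setting, where the role of the compact conformally extended de Sitter spacetime is now played by a bounded region of the standard conformal completion of Minkowski space whose closure meets a portion of future null infinity $\mathscr{I}^+$ and the future timelike infinity $i^+$. The smallness of the perturbation and the fact that the data coincide with Schwarzschild outside a compact set are exactly what allow us to localise the argument away from the singular point $i^0$ of the standard Penrose compactification.

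First, starting from the hyperboloidal physical initial data $(\tilde{\Sigma}, \tilde{\bmg}, f_0)$, I would construct corresponding unphysical initial data on a compact spacelike hypersurface $\bar{\Sigma}$ in the conformal completion. The conformal factor $\Omega$ vanishes on $\partial \bar{\Sigma}$, which is identified with a cross-section of $\mathscr{I}^+$, and the Schwarzschildean structure outside a compact set, together with the hyperboloidal asymptotics, ensures that the rescaled geometric unknowns of Friedrich's conformal Einstein field equations extend smoothly to $\bar{\Sigma}$; the distribution function $f$ extends as well because the massless Vlasov equation on the cotangent bundle is conformally invariant, so no rescaling is needed for the matter variable.

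Second, the conformal Einstein-Vlasov system thus obtained is a symmetric hyperbolic system (on the cotangent bundle), to which the Kato-type local existence theorem adapted in this paper applies. The reference solution corresponding to Minkowski spacetime with vanishing matter, after conformal rescaling, is smooth on a relatively compact region $\mathcal{M}^{\star}$ that strictly contains the closure of the future development of $\bar{\Sigma}$ all the way up to, and slightly beyond, $\mathscr{I}^+ \cup \{i^+\}$. Continuous dependence on initial data in the Kato framework then gives, for any sufficiently small perturbation of the reference data, a smooth solution of the conformal system defined on a neighbourhood of this full reference region. Finally, undoing the conformal rescaling on $\{\Omega > 0\}$ produces a physical solution of the massless Einstein-Vlasov system, and future causal geodesic completeness follows from the standard conformal argument: future-directed null geodesics of $\tilde{\bmg}$ are reparametrisations of null geodesics of $\bmg$ that reach $\mathscr{I}^+$ in finite unphysical parameter but infinite physical affine parameter, while future-directed timelike geodesics accumulate at $i^+$.

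The main obstacle will be controlling the Vlasov contribution within the symmetric hyperbolic framework. Because $\bmT[f]$ is an integral over the mass shell, the coupled system is genuinely integro-differential, and verifying the hypotheses of the Kato theorem requires careful control of the momentum support of $f$ and of the regularity of fibre integrals under small perturbations of the underlying conformal metric; this is precisely the technical issue the introduction advertises as having to be handled by an adaptation of Kato's theorem. A secondary but important point is to ensure that for the perturbed solution $\Omega$ does not vanish prematurely inside the physical region, so that the whole neighbourhood of $\mathscr{I}^+ \cup \{i^+\}$ in the perturbed unphysical solution really corresponds to the future of the physical hyperboloid, which is what delivers both the stability statement and the asymptotic decay of the matter and curvature.
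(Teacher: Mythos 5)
Your overall strategy---hyperboloidal conformal data, the symmetric hyperbolic conformal Einstein--Vlasov system, the adapted Kato theorem, and the conformal reading of causal geodesic completeness---is exactly the paper's strategy, and your two flagged obstacles (momentum-support control for the massless Vlasov field, and non-premature vanishing of the conformal factor) are indeed the ones the paper addresses. But there is one genuine gap at the central existence step. You pose data only on the compact hypersurface-with-boundary $\bar\Sigma$ and then claim that continuous dependence in the Kato framework yields a solution \emph{on a neighbourhood of the full reference region, slightly beyond} $\mathscr I^+\cup\{i^+\}$. This cannot work as stated: data on $\bar\Sigma$ determine the solution only on the future domain of dependence $D^+(\bar\Sigma)$, and no point outside $D^+(\bar\Sigma)$---in particular no open neighbourhood of $\mathscr I^+$ or of $i^+$---is determined by those data at all. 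Moreover, the Kato-type theorem adapted in the paper (Theorem \ref{main_theorem}) is formulated for the Cauchy problem on the whole of $\mathbb S^3$ (times $\Omega_{1/4}$ in momentum); there is no local-in-space or lens-shaped-domain version available in the paper that you could apply to $\bar\Sigma$ directly, and producing one would be a substantive additional task, not a routine invocation of continuous dependence.

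The paper closes this gap with an extension operator $E: H^m(\mathcal S;\,\mathbb R^{N_g}) \to H^m(\mathbb S^3;\,\mathbb R^{N_g})$ (following Friedrich): the hyperboloidal data on $\mathcal S = \{0\leq\psi\leq\pi/2,\ \tau=\pi/2\}$ are extended---\emph{without} requiring the constraints to hold on $\mathbb S^3\setminus\mathcal S$---to data on all of $\mathbb S^3$ close to the extended background data $\mathbf{u}^\star_{\mathrm M}=E(\mathbf{u}_{\mathrm M}|_{\mathcal S})$. Theorem \ref{main_theorem} then gives a solution on $[\pi/2,\tau_\bullet]\times\mathbb S^3$ with $\tau_\bullet\geq 3\pi/4$, which automatically covers a full neighbourhood of $i^+$ and of $\mathscr I^+ = H^+(\bar{\mathcal S})\setminus\{i^+\}$. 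Finite propagation speed makes the restriction to $D^+(\bar{\mathcal S})$ independent of the arbitrary extension, and the subsidiary system of Section \ref{sec:constraints} propagates the constraints there; this is exactly why the theorem identifies the physical development as $J^-(i^+)=D^+(\bar{\mathcal S})$, the solution being a genuine Einstein--Vlasov solution only in that region. If you add this extension-plus-domain-of-dependence mechanism (with the constraint-propagation remark), the remainder of your outline---the conformal-boundary structure via Friedrich's methods, null completeness from asymptotic simplicity, and timelike completeness via conformal geodesics as in Remark \ref{Remark:GeodesicCompleteness}---matches the paper's proof. A minor further point: the Schwarzschildean exterior you invoke belongs to the paper's motivating discussion of how such hyperboloidal data could be constructed; the theorem itself only assumes hyperboloidal data close, in $H^m$, to the Minkowski hyperboloidal data.
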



As before, the curvature and the stress-energy tensor asymptote zero, and this result can consequently be considered as a semi-global asymptotic stability result. Note that, in the absence of estimates for the conformal factor, this result contains no estimates on the behaviour of the curvature, other than the convergence to zero as a power of the conformal factor. Nonetheless, it can be expected that the decay would be stronger than the decay obtained through the vector field method by a factor $(1+|t-r|)^{1/2}$.

Following the work of Dafermos \cite{MR2289606} in spherical symmetry, Taylor has already proven the stability of Minkowski space-time \cite{MR3629140} as a solution to the Einstein-massless Vlasov system. The method relies on energy estimates in the double null gauge. It is interesting to note that in that work the Vlasov matter is compactly supported both in space and momentum, while the stability outside the space support of the matter distribution is handled by the stability and peeling result of \cite{MR1946854}. Nonetheless, the metric data handled in \cite{MR3629140} are not compactly supported, hence, \cite{MR3629140} is a global stability result. A complete stability of Minkowski space-time as a solution to the Einstein-massless Vlasov system result can be found  in \cite{Bigorgne:2020aa}.

\subsection*{Conformal techniques and stability problems}

The problem of stability of solutions in general relativity has been, since the '80s, at the core of many publications. A stepping stone is the work by Christodoulou-Klainerman \cite{MR1316662}, which has been simplified, and reproduced for other matters models (electro-magnetism \cite{MR2531716}, non-linear electro-magnetism \cite{MR3012654}, Klein-Gordon \cite{MR3535896}, massless \cite{MR3629140}, and massive Vlasov \cite{2017arXiv170706079L,2017arXiv170706141F}). Nonetheless, the first and pioneering stability result has been obtained by Friedrich \cite{MR868737} for the de Sitter space-time, by methods of a completely different nature, exploiting the conformal structure of the underlying space-time.

Conformal methods are well adapted to the study of the longtime existence and stability of asymptotically simple space-times. In \cite{MR1131434}, Friedrich's conformal Einstein field equations have been used to study the semi-global stability of the Minkowski space-time and the global stability of the de Sitter space-time under non-vacuum perturbations sourced by the Maxwell and Yang-Mills fields. These results generalise the original vacuum stability results in \cite{MR868737}. The key property that makes the Maxwell and Yang-Mills fields amenable to a treatment using conformal methods is the fact that their energy-momentum tensor has vanishing trace. As a consequence, the conservation equation satisfied by the energy-momentum tensor is conformally invariant. Moreover, the field equations satisfied by the Maxwell and Yang-Mills fields can also be shown to be suitably conformally invariant. Other fields with trace-free energy-momentum tensor are the conformally coupled scalar field and a perfect fluid with the equation of state of radiation ---see e.g. \cite{VKbook}. Global existence results for the Einstein field equations coupled to these fields have been given in \cite{MR1326039} and \cite{zbMATH06358183}, respectively. It should be noticed that conformal methods (more specifically, conformal rescaling) have also been used to study the occurrence of cosmological singularities \cite{MR1710659}.

\subsection*{Description of the result}

It is well-known that de Sitter and Minkowski space-times admit a conformal compactification ---that is, these space-times can be conformally embedded into a compact manifold with boundary. This compact manifold with boundary is often referred to as the unphysical space-time. The trace on the boundary of the closure of the timelike and lightlike geodesics in the unphysical space-time has two connected components (for the future and past endpoints of these geodesics). These components form a subset of the boundary which is timelike in the de Sitter case and light like in the Minkowski case. This particular structure at infinity is characteristic of a larger class of spacetimes introduced by Penrose in the 60s ---the so-called asymptotically simple space-times. Since de Sitter space-time and Minkowski space-time both admit such an asymptotic structure, it could be expected that small enough perturbations thereof would also lead to geodesically complete asymptotically simple solutions to the Einstein equations. The rigidity of the asymptotic structure and of the geodesic completeness for small enough perturbations would prove the stability of those solutions.

This approach to the stability of solutions to the Einstein equation was developed by Friedrich in the 80s. Although the Einstein equations are not conformally invariant, he proved that it is possible to incorporate a conformal degree of freedom within these equations. The resulting equations are known as the Conformal Einstein Field Equations (CFE). In the right gauge, the conformal Einstein field equations imply a symmetric hyperbolic system. Moreover, a global solution to the Einstein equations can be obtained as a local solution to the conformal field equations. These observations allowed him to prove the stability of de Sitter space-time and the semi-global stability of Minkowski space-time as a solution to the Einstein equations in vacuum in his pioneering work \cite{MR868737}.  For this reason, we focus on our presentation on the Vlasov equation and only recall the set-up and the necessary properties of the conformal Einstein field equations.

A critical aspect of the study of the Einstein equations as evolution equations is, in the first instance, the choice of the gauge. This choice guarantees that the Einstein equations become symmetric hyperbolic and that a Cauchy problem can be discussed. The approach adopted by Friedrich \cite{MR868737}  relies on the choice of a tetrad which satisfies a wave map. Following this approach, we use the tetrad formalism as in \cite{lindquist:1966el} or \cite{MR1466508} also for the Vlasov equation. The formulation of the transport equation on the cotangent bundle can then be entirely expressed in terms of the Cartan structure coefficients.

The conformal field equations require, in the presence of a matter field, control of the derivatives of the energy-momentum tensor in the frame direction. Hence, equations for these derivatives of the matter distribution need to be incorporated to the set of unknown functions to complete the system. To that end, one needs to calculate the commutator of the tetrad with the Liouville vector field. These commutators have been calculated in, for instance, \cite{zbMATH06302658} and lie at the core of the approach of \cite{2017arXiv170706141F}. The experience suggests working with the derivatives which naturally appear in the expression of the Liouville vector field, the horizontal lifts of the tetrad vectors, so that curvature terms appear in the expression of the commutator ---see \cite[Appendix]{MR3683922}.

Finally, this work relies on the use of local existence results for symmetric hyperbolic systems applied to the conformal equations coupled to the Vlasov equation. The fact that the \emph{massive} Vlasov equation is symmetric hyperbolic is discussed in many reviews \cite{MR1466508, Andreasson2011}. The local existence for the Einstein-Vlasov system has been addressed, by means of standard energy estimates, in \cite{AIF_1971__21_3_181_0}. This local existence result, tailored to handle a coupling with nonlinear wave equations and weighted Sobolev spaces (in particular to address the global existence problem) is extended in \cite{MR3186493}. It is somehow expected that the Einstein-Vlasov system (or the CFE-Vlasov system) should fit the framework developed by Kato \cite{kato}, as it was, for instance, for the Vlasov-Maxwell equations \cite{WOLLMAN1987103}.

When applying Kato's theorem to handle the local existence for the conformal field equations coupled to the massless Vlasov equations one has to be careful at several points. First, the unknowns functions have different variables. This could be typically handled by considering separately the two systems. Secondly, the nature of the coupling in the Einstein equations is integral, and it has to be verified whether this fits the framework for the non-linearity as required by Kato's result. These two problems are solved by running, based on the linear result by Kato, the same fixed point argument. Finally, and specifically for the \emph{massless} Vlasov equation, the massless equation is singular for particles with vanishing velocities. This has the consequence that the evolution ceases to be symmetric hyperbolic. This is difficulty is dealt with by proving, in the context of the aforementioned fixed point argument, a priori estimates on the support in momentum of the matter distribution.

\subsection*{Outline of the paper} Section \ref{sec:preliminaries} contains preliminaries and notation. The matter model is
 presented in Section \ref{sec:vlasovintro}; the conformal transformations are specifically addressed in Section
 \ref{subsect_conf_liou}, the equation for the derivatives of \(f\) are stated in Section \ref{sect_comm_em}, and the commutation
 formula with the stress-energy tensor are to be found in Section \ref{sect_comm_em}. The hyperbolic nature of the Vlasov system,
 made out of the matter distribution and its derivatives, is obtained in Section \ref{sec:symmetrichyperbolicvlasov}. In Section \ref{sect_cef}, the conformal Einstein field equations are recalled both in the frame and the spin formalism, and the hyperbolic reduction of the conformal field equation is described. In Section \ref{sec_ex_stab} we discuss the coupled symmetric hyperbolic system and adapt Kato's existence and stability theorems to it. Finally, the stability result for de Sitter and the semi-global stability result for Minkowski are respectively derived in Sections \ref{sect_de_sitter} and \ref{sect_minkowski}.


\numberwithin{equation}{section}
\section{Preliminaries and Notations} \label{sec:preliminaries}

\subsection{Index notation}

The signature convention for (Lorentzian) space-time metrics is $(+,-,-,-)$.  In this signature convention, the cosmological constant $\Lambda$ of de Sitter space-time is negative. {  Spacetimes with negative }cosmological constants will be said to be \emph{de Sitter-like}. \par
In what follows, Greek indices are used as coordinate indices. The Latin indices ${}_a ,\,{}_b ,{}_c,\,\ldots $ are used as abstract tensor indices while the boldface Latin indices ${}_\bma,\, {}_\bmb,{}_\bmc, \ldots$ are used as space-time frame indices taking the values $0, \ldots , 3$. In this way, given a frame $\{\bme_{\bma}\}$ a generic tensor is denoted by $T_{ab}$ while its components in the given frame are denoted by $T_{\bma \bmb}\equiv T_{ab}\bme_{\bma}{}^{a}\bme_{\bmb}{}^{b}$. We reserve the indices ${}_\bmi,\, {}_\bmj,\, {}_\bmk,\ldots$ to denote frame spatial indices respect to an adapted spacelike frame taking the values $1,\,2,\,3$. If an object is a tensor, we write it in bold font, i.e.~for a vector field we write $\bm X = X^\mu \bmpartial_\mu$, respectively, $ X^a= X^\bmb e_\bmb{}^a$. \par
Moreover we make systematic use of spinors. We follow the conventions and notation of Penrose and Rindler \cite{PenRin84}. In particular, ${}_A,\,{}_B,\,{}_C,\ldots$ are abstract spinorial indices while ${}_\bmA,\, {}_\bmB,\, {}_\bmC,\ldots$ will denote frame spinorial indices with respect to some specified spin dyad $\{\epsilon_\bmA{}^{A}\}$. Greek indices ${}_\mu,\,{}_\nu,\,{}_\lambda$ will denote coordinate indices with respect to some local chart. \par
Our conventions for the curvature tensors are fixed by the relation
\begin{equation}
(\nabla_a \nabla_b -\nabla_b \nabla_a) v^c = R^c{}_{dab} v^d.
\end{equation}

\subsection{The Einstein cylinder} \label{sec_einstein_cylinder}

Next we introduce the Einstein cylinder $\mathscr E = (\mathbb R \times \mathbb S^3, \bm g_{\mathscr E})$. In order to endow it with coordinates we view $\mathbb S^3$ as being embedded in $\mathbb R^4$ with coordinates $x^1, \dots, x^4$, i.e.~$\mathbb S^3 = \{(x^1,\dots, x^4) \in \mathbb R^4 \,:\, (x^1)^2 + \dots + (x^4)^2 = 1\}$. On any open neighbourhood of $\mathbb S^3$, one can then choose three of the four coordinates $x^1, \dots, x^4$ in order to obtain an atlas on $\mathbb S^3$. Completing this atlas with the time coordinate $\tau$ yields an atlas on the Einstein cylinder. Sometimes we also will make use of the local coordinate chart $\psi \in [0,\pi]$, $\theta \in [0,\pi]$, $\varphi \in [0,2\pi)$ of angular coordinates. The metric $\bm g_{\mathscr E}$ of the Einstein cylinder is given by
\begin{equation}
\bm g_{\mathscr E} = \mathbf{d}\tau \otimes \mathbf{d} \tau - \bm \sigma,
\end{equation}
where $\bm \sigma$ is the round metric on $\mathbb S^3$. Moreover, the fields
\begin{subequations}
\begin{align}
\bm c_1 &= x^1 \bmpartial_{x^4} - x^4 \bmpartial_{x^1} + x^2 \bmpartial_{x^3} - x^3 \bmpartial_{x^2}, \\
\bm c_2 &= x^1 \bmpartial_{x^3} - x^3 \bmpartial_{x^1} + x^4 \bmpartial_{x^2} - x^2 \bmpartial_{x^4}, \\
\bm c_3 &= x^1 \bmpartial_{x^2} - x^2 \bmpartial_{x^1} + x^3 \bmpartial_{x^4} - x^4 \bmpartial_{x^3}
\end{align}
\end{subequations}
in $T\mathbb R^4$ form an orthonormal frame on $\mathbb S^3$. With the field
\begin{equation}
\bm c_0 = \bmpartial_\tau
\end{equation}
the collection $\{\bm c_1, \bm c_2, \bm c_3\}$ can be completed to an orthonormal frame on the Einstein cylinder. \par
Both de Sitter space-time, denoted by $(\tilde{\mathcal M}_{\mathrm{dS}}, \tilde{\bm g}_\mathrm{dS})$, and Minkowski space-time, denoted by $(\tilde{\mathcal M}_\mathrm{M}, \tilde{\bm g}_{\mathrm M})$, can be identified with the interior of certain compact submanifolds of the Einstein cylinder. These submanifolds are denoted by $\mathcal M_{\mathrm{dS}}$ and $\mathcal M_{\mathrm{M}}$. On $\mathcal M_{\mathrm{dS}}$ and $\mathcal M_{\mathrm{M}}$ we then have the relations\begin{subequations}
\begin{align}
\bm g_{\mathscr E} &= \Xi_{\mathrm{dS}}^2 \, \tilde{\bm g}_\mathrm{dS}, \qquad \Xi_{\mathrm{dS}} = \cos(\tau), \\
\bm g_{\mathscr E} &= \Xi_{\mathrm{M}}^2 \, \tilde{\bm g}_{\mathrm M}, \qquad \Xi_{\mathrm{M}} = \cos(\tau) + \cos\psi,
\end{align}
\end{subequations}
respectively, where the conformal factors $\Xi_{\mathrm{dS}}$ and $\Xi_{\mathrm{M}}$ are functions globally defined on $\mathscr E$.  {More details are provided in}  \ref{sect_de_sitter} and \ref{sect_minkowski}, below. The submanifolds $\mathcal M_{\mathrm{dS}}$ and $\mathcal M_{\mathrm{M}}$ are the maximal connected submanifolds of $\mathscr E$ where $\Xi_{\mathrm{dS}} \geq 0$ or $\Xi_{\mathrm{M}} \geq 0$, respectively. \par

As already outlined in the introduction, the strategy in this article, as in \cite{MR868737} for the vacuum case, is to consider the Cauchy problem for the conformal Einstein field equations with massless Vlasov matter with initial data provided on $\mathbb S^3$. We will prove that the solution to this Cauchy problem provides a manifold $ \mathcal M $ endowed with a metric $\bm g$ and a conformal factor $\Xi$ defined on $\mathcal M$. This manifold $\mathcal M$, which we call the {\em unphysical manifold}, is diffeomorphic to $[\tau_0, \tau_\bullet] \times \mathbb S^3$, where $\tau_0 \in [-\pi/2,\tau_\bullet)$ and $\tau_\bullet \in [3\pi/4, \infty)$. The region on $\tilde{\mathcal M}$ where $\Xi > 0$ we call the {\em physical manifold} and the metric $\tilde{\bm g}$ given by the relation
\begin{equation}
\tilde{\bm g} = \Xi^{-2}\bm g
\end{equation}
we call the {\em physical metric}. The space-time $(\tilde{\mathcal M}, \tilde{\bm g})$ will then be a perturbed version of de Sitter space-time or Minkowski space-time, respectively. 

\subsection{Orthonormal frames}

We denote by $\{\bm e_\bma \}_{\bma = 0}^3$ and $\{\tilde{\bm e}_\bma\}_{\bma = 0}^3$ the orthonormal frames on $T\tilde{\mathcal M}$ which correspond to $\bm g$ and $\tilde{\bm g}$, respectively. The corresponding co-frames we denote by $\{\bmalpha^\bma\}_{\bma=0}^3$ and $\{\tilde\bmalpha^\bma\}_{\bma = 0}^3$, respectively. The co-frames $\{\bmalpha^\bma\}_{\bma=0}^3$ and $\{\tilde\bmalpha^\bma\}_{\bma = 0}^3$ induce coordinates on $T^* \tilde{\mathcal M}$ in the following way. Let $x \in \tilde{\mathcal M}$ and $\bmp \in T^*_x \tilde{\mathcal M}$. Then we write
\begin{equation} \label{def_coordinates}
\bmp = p_\mu \mathbf{d}x^\mu = v_\bma \bmalpha^\bma = \tilde v_\bma \tilde \bmalpha^\bma.
\end{equation}
The relation between those coordinates is
\begin{equation} \label{rel_coordinates}
\Xi e_\bma{}^a = \tilde e_\bma{}^a, \qquad \alpha^\bma{}_a = \Xi \tilde \alpha^\bma{}_a,  \qquad \Xi v_\bma = \tilde v_\bma.
\end{equation}

\section{Massless Vlasov matter in the cotangent bundle formulation} \label{sec:vlasovintro}

\subsection{Introduction of the model} \label{sect_intro_vlasov}

On the cotangent bundle $T^*\tilde {\mathcal  M}$ of the (physical) manifold $\tilde{\mathcal M}$ we define the particle distribution function $f \in C^1(T^*\tilde{\mathcal M}; \mathbb R)$. Integrated over a volume in phase space it gives the number of particles in the corresponding volume of physical space which have momentum in the corresponding range in momentum space. \par
The Vlasov matter model describes an ensemble of freely falling particles. In other words, the particles are assumed to move through phase-space $T\tilde{\mathcal  M}$ along the integral curves of the Liouville vector field $\tilde{\mathcal L}$. This behaviour is captured by the Vlasov equation
\begin{equation} \label{vlasov_eq}
\tilde{\mathcal L} f = 0.
\end{equation}
On $T^*\tilde{\mathcal M}$ we have the canonical coordinates $(x^\mu, p_\nu)$, $\mu, \nu = 0,\dots, 3$ ---i.e.~a co-vector $\bmv\in T_x^* \tilde{\mathcal  M}$, $x\in \tilde{\mathcal  M}$, has the form $\bmv = p_\mu \mathbf{d} x^\mu |_{x}$. In these coordinates the Liouville vector field $\tilde{\mathcal L}$ reads
\begin{equation} \label{def_liouville}
\tilde{\mathcal L} = \tilde g^{\mu\nu} p_\mu \bmpartial_{x^\nu} - \frac 12 \partial_{x^\gamma} \tilde g^{\alpha\beta} \, p_\alpha p_\beta \, \bmpartial_{p_\gamma}.
\end{equation}
The quantity $m \geq 0$, given by
\begin{equation} \label{mass_sehll_rel}
m^2 \equiv -\tilde g^{\mu\nu}(x) p_\mu p_\nu, \qquad x \in \tilde{\mathcal M}, \quad \bmp\in T_x^* \tilde{\mathcal M},
\end{equation}
is interpreted as the rest mass of the particles. It can be shown that it stays constant along the characteristic curves of the Vlasov equation (\ref{vlasov_eq}). For this reason the particle distribution function $f$ can be assumed to be supported on the co-mass shell $\tilde{\mathcal  P}_m$, a seven dimensional submanifold of $T^*\tilde{\mathcal M}$, which defined to be
\begin{equation}
\tilde{\mathcal  P}_m \equiv \{(x,\bmp)\in T^*\tilde{\mathcal M}\;:\; \tilde{\bm g}_x^{-1}(\bmp,\bmp) = -m^2, \; \bmp\,\mathrm{is\,future\,pointing}\}.
\end{equation}
If $f$ is supported on $\tilde{\mathcal  P}_m$ then it describes the distribution of particles which all have the same rest mass $m$. In the remainder of this article we assume $m=0$ and we denote the corresponding mass shell simply by $\tilde{\mathcal P}$. The particle distribution function $f$ gives rise to an energy momentum tensor $\tilde{\bm T}$ via
\begin{equation} \label{def_em_tensor}
\tilde{\bm T}[f](x) = \tilde T_{\alpha\beta}[f](x) \,  \mathbf{d}x^\alpha \mathbf{d}x^\beta, \quad x \in \tilde{\mathcal M},
\end{equation}
where
\begin{equation} \label{def_em_tensor_c}
T_{\alpha\beta}[f](x) = 8\pi \int_{\tilde{\mathcal  P}_x} f(x,p)\, p_\alpha p_\beta \, \tilde{\mathrm{d}\mathrm{vol}}_{\tilde{\mathcal  P}_x},
\end{equation}
where $\tilde{\mathrm{d}\mathrm{vol}}_{\tilde{\mathcal P}_x}$ is the volume form on the mass shell fibre $\tilde{\mathcal  P}_x$ over $x\in \tilde{\mathcal  M}$. It can be expressed by
\begin{equation} \label{vol_el_coords}
\tilde{\mathrm{d}\mathrm{vol}}_{\tilde{\mathcal P}_x} = \frac{\sqrt{|\det(g^{\mu\nu})|}}{g^{0\mu}p_\mu}  \mathrm{d}p_1  \mathrm{d}p_2  \mathrm{d}p_3.
\end{equation}
We see from the mass shell relation (\ref{mass_sehll_rel}) that the massless Vlasov matter model gives rise to a trace free energy momentum tensor. \par
In the remainder of this section we discuss the Vlasov matter model for massless particles in more detail and we show that it is an amenable matter model for being integrated into the hyperbolic reduction procedure described in Section \ref{sect_cef}, below. 
\begin{remark} {\em Since we work with massless particles, the matter distribution is a mapping over the massless mass shell $\mathcal{P}$. Nonetheless, to ease the calculations, we consider a matter distribution defined on the cotangent bundle, and then restrict it to the mass shell. It is then necessary, when differentiating derivatives, to take derivatives parallel to the mass shell. This is the case of the Liouville vector field $\mathcal{L}$, as well as other derivatives afterwards, in particular, the horizontal derivatives \eqref{def_lift_frame}. To check that these vectors are tangent to the mass shell, it is sufficient to either check they lie in the kernel of the differential of the Hamiltonian ${(x, \bm p)\mapsto g^{-1}_x(\bm p,\bm p)}$, or orthogonal to $\mathcal{L}$ for the symplectic product (or orthogonal to the null vector $v_a \partial_{v_a}$ for the Sasaki metric), using the symmetry relation \eqref{eq:symmegamma}.}
\end{remark}

\subsection{Conformal properties of the Liouville vector field} \label{subsect_conf_liou}

In Section \ref{sect_intro_vlasov} the Liouville vector field and the energy momentum tensor have been introduced. Both of these objects are metric-dependent ---the metric shows up directly in the definition (\ref{def_liouville}) of the Liouville vector field and the volume form in the definition (\ref{def_em_tensor_c}) of the energy-momentum tensor. \par
If, on a manifold $\tilde{\mathcal M}$, one has two metric tensors $\tilde{\bm g}$ and $\bm g$ which are conformally related via
\begin{equation} \label{conf_rel}
\bm g = \Xi^2 \tilde{\bm g} 
\end{equation}
each of these metrics gives rise to a different Liouville vector field and energy-momentum tensor. We give now the relation between those quantities and prove that if a particle distribution function $f \in C^1(T^* \tilde{\mathcal M}, \mathbb R_+)$ solves the Vlasov equation for one metric in the conformal class, it solves the Vlasov equations for all metrics in the conformal class. \par
To this end we consider on $T^* \tilde{\mathcal M}$  the orthonormal frames $\{\tilde{\bm e}_\bma{}\}_{\bma = 0}^3$ and $\{\bm e_\bma{}\}_{\bma = 0}^3$ corresponding to $\tilde{\bm  g}$ and $\bm g$, respectively. The corresponding co-frames are denoted by $\{\tilde{\bm \alpha}^\bmb{}\}_{\bmb = 0}^3$ and $\{\bm \alpha^\bmb{}\}_{\bmb = 0}^3$, respectively. Both frames give rise to coordinates $(x^\mu, \tilde v_\bma)$ and $(x^\mu, v_\bma)$ on $T^* \tilde{\mathcal M}$, respectively, cf.~(\ref{def_coordinates}) and (\ref{rel_coordinates}). Note, further, that with respect to this frame the mass shell relation reads
\begin{align}
&(x^\mu, v_\bma) \in \mathcal P \quad \Leftrightarrow \quad v_0 = -|v|, \qquad |v| \equiv \sqrt{(v_1)^2 + (v_2)^2 + (v_3)^2}, \\
&(x^\mu, \tilde v_\bma) \in \tilde{\mathcal P} \quad \Leftrightarrow \quad \tilde v_0 = -\left| \tilde v \right|, \qquad \left| \tilde v \right| \equiv \sqrt{(\tilde v_1)^2 + (\tilde v_2)^2 + (\tilde v_3)^2}.
\end{align}
We observe that
\begin{equation}
\tilde{\mathcal P} = \mathcal P
\end{equation}
as a manifold. The definition of the Liouville vector field depends on the metric. We have already defined $\tilde {\mathcal L}$ for $\tilde{\bm g}$ in equation (\ref{def_liouville}). For each of the metrics $\tilde{\bm g}$ and $\bm g$ we consider the corresponding Liouville vector field $\tilde{\mathcal L}$ and $\mathcal L$, given by
\begin{subequations}

\begin{align}
\tilde{\mathcal L} &\equiv \tilde g^{\mu\nu} p_\mu \bmpartial_{x^\nu} - \frac 12 \partial_{x^\gamma} \tilde g^{\alpha\beta} \, p_\alpha p_\beta \, \bmpartial_{p_\gamma}, \\
\mathcal L &\equiv  g^{\mu\nu} p_\mu \bmpartial_{x^\nu} - \frac 12 \partial_{x^\gamma} g^{\alpha\beta} \, p_\alpha p_\beta \, \bmpartial_{p_\gamma}.
\end{align}

\end{subequations}
\begin{lemma}
We have\begin{subequations}
\begin{align}
\tilde {\mathcal L} &= \eta^{\bma \bmb} \tilde v_\bma \tilde e_\bmb{}^\mu \bmpartial_{x^\mu} + \eta^{\bma\bmb} \tilde v_\bma \tilde v_\bmc \, \tilde \Gamma_\bmb{}^\bmc{}_\bmd \, \bmpartial_{\tilde v_\bmd}, \quad \tilde \Gamma_\bma{}^\bmc{}_\bmb \equiv \tilde \bmalpha^\bmc\left(\tilde \nabla_{\tilde{\bm e}_\bma} \tilde{\bm e}_\bmb \right), \label{for_vlasov_1} \\
\mathcal L &= \eta^{\bma \bmb} v_\bma e_\bmb{}^\mu \bmpartial_{x^\mu} + \eta^{\bma\bmb} v_\bma v_\bmc \, \Gamma_\bmb{}^\bmc{}_\bmd \, \bmpartial_{v_\bmd}, \quad \Gamma_\bma{}^\bmc{}_\bmb \equiv \bmalpha^\bmc \left(\nabla_{\bm e_\bma} \bm e_\bmb \right). \label{for_vlasov_2}
\end{align}
\end{subequations}
\end{lemma}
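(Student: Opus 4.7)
The two formulas in \eqref{for_vlasov_1}--\eqref{for_vlasov_2} have identical structure, so it is enough to establish the first; the second follows by replacing $\tilde{\bm g}$ with $\bm g$ throughout (together with the corresponding frame, co-frame, fibre coordinates and rotation coefficients). My plan is to carry out a direct change of variables in \eqref{def_liouville} from the canonical fibre coordinates $p_\nu$ to the frame-adapted ones $\tilde v_\bma$ of \eqref{def_coordinates}, and then re-express the coordinate derivatives of the tetrad using the first Cartan structure equation.

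The chain rule, applied to $p_\mu=\tilde v_\bma\tilde\alpha^\bma{}_\mu$ and $\tilde v_\bma=p_\mu\tilde e_\bma{}^\mu$, yields
\[
\bmpartial_{x^\nu}\big|_p=\bmpartial_{x^\nu}\big|_{\tilde v}+p_\mu\bigl(\partial_{x^\nu}\tilde e_\bma{}^\mu\bigr)\bmpartial_{\tilde v_\bma},\qquad \bmpartial_{p_\gamma}=\tilde e_\bma{}^\gamma\,\bmpartial_{\tilde v_\bma}.
\]
Substituting these into \eqref{def_liouville}, together with $\tilde g^{\mu\nu}=\eta^{\bma\bmb}\tilde e_\bma{}^\mu\tilde e_\bmb{}^\nu$, produces the horizontal term $\eta^{\bma\bmb}\tilde v_\bma\tilde e_\bmb{}^\mu\bmpartial_{x^\mu}$ at once. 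The two vertical contributions — one from the chain-rule correction in the first summand of \eqref{def_liouville}, one from differentiating $\tilde g^{\alpha\beta}$ in the second — can be merged using the dual-frame identity $\tilde\alpha^\bmd{}_\mu\,\partial_{x^\nu}\tilde e_\bmc{}^\mu=-\tilde e_\bmc{}^\mu\,\partial_{x^\nu}\tilde\alpha^\bmd{}_\mu$, and after renaming dummy indices (using the symmetry of $\eta^{\bma\bmb}$) collapse to the single expression
\[
\eta^{\bmi\bmj}\,\tilde v_\bmj\,\tilde v_\bmd\,\tilde e_\bmi{}^\alpha\,\tilde e_\bmc{}^\gamma\,\bigl(\partial_{x^\gamma}\tilde\alpha^\bmd{}_\alpha-\partial_{x^\alpha}\tilde\alpha^\bmd{}_\gamma\bigr)\bmpartial_{\tilde v_\bmc}.
\]

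The parenthesised factor is the coordinate expression of the two-form $\mathbf{d}\tilde\bmalpha^\bmd$, so the first Cartan structure equation $\mathbf{d}\tilde\bmalpha^\bmd=-\tilde\Gamma_\bma{}^\bmd{}_\bme\,\tilde\bmalpha^\bma\wedge\tilde\bmalpha^\bme$ (torsion-freeness of the Levi-Civita connection) reduces the inner contraction with the tetrad to $\tilde\Gamma_\bmi{}^\bmd{}_\bmc-\tilde\Gamma_\bmc{}^\bmd{}_\bmi$. Of these, the piece $\eta^{\bmi\bmj}\tilde v_\bmj\tilde v_\bmd\,\tilde\Gamma_\bmc{}^\bmd{}_\bmi$ vanishes identically: lowering $\bmd$ shows it to be the contraction of the symmetric tensor $\tilde v^\bmi\tilde v^\bmk$ with $\tilde\Gamma_{\bmc\bmk\bmi}$, which is antisymmetric in $\bmk\bmi$ by metric compatibility in the orthonormal frame. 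The surviving contribution $\eta^{\bmi\bmj}\tilde v_\bmj\tilde v_\bmd\,\tilde\Gamma_\bmi{}^\bmd{}_\bmc$ matches the right-hand side of \eqref{for_vlasov_1} after one last use of the symmetry $\bmi\leftrightarrow\bmj$ of $\eta^{\bmi\bmj}$.

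The argument is purely algebraic and its only non-trivial ingredient is the simultaneous use of both defining properties of the Levi-Civita connection: torsion-freeness, which furnishes the first Cartan equation and hence produces the connection coefficients out of the coordinate derivatives of the coframe, and metric compatibility, which is what eliminates the antisymmetric remainder. The main obstacle I anticipate is thus not conceptual but organisational — the careful bookkeeping of dummy-index relabelings needed to merge the two vertical contributions into a manifestly antisymmetric combination and then to identify the result with the intrinsic $\tilde\Gamma_\bma{}^\bmc{}_\bmb$.
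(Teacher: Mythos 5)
Your proof is correct and follows essentially the same route as the paper's: the same chain-rule change of fibre coordinates to $(x^\mu,\tilde v_\bma)$, the same merging of the two vertical contributions, and the same two-step elimination via torsion-freeness plus metric compatibility \eqref{eq:symmegamma}. The only cosmetic differences are that you encode torsion-freeness through the first Cartan structure equation for the coframe, $\mathbf{d}\tilde\bmalpha^\bmd=-\tilde\Gamma_\bma{}^\bmd{}_\bme\,\tilde\bmalpha^\bma\wedge\tilde\bmalpha^\bme$ (after flipping frame derivatives to coframe derivatives with the dual-pairing identity), whereas the paper recognises the merged term directly as the frame commutator and invokes $[\bm e_\bmb,\bm e_\bmc]^\bmd=\Gamma_{[\bmb}{}^\bmd{}_{\bmc]}$ --- the same identity in dual form --- and that you treat the tilded formula first rather than the untilded one, which is immaterial.
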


\begin{proof}
The result is only proved for $\mathcal L$ since the calculation goes completely analogously for $\tilde{\mathcal L}$. We consider the coordinate transformation
\begin{equation}
x^\mu \mapsto y^\mu(x) = x^\mu, \quad p_\alpha \mapsto v_\bma(x,p) = e_\bma{}^\alpha p_\alpha.
\end{equation}
This implies
\begin{equation}
\bmpartial_{p_\alpha} = e_\bma{}^\alpha \bmpartial_{v_\bma}, \qquad \partial_{x^\mu} = \bmpartial_{y^\mu} + (\partial_\mu e_\bma{}^\alpha ) \alpha^\bmb{}_\alpha v_\bmb \, \bmpartial_{v_\bma}.
\end{equation}
Then, using $g^{\alpha\beta} = \eta^{\bma\bmb} e_\bma{}^\alpha e_\bmb^\beta$ and $p_\mu = \alpha^\bma{}_\mu v_\bma$ we calculate
\begin{align}
g^{\mu\nu} p_\mu \bmpartial_{x^\nu} &= \eta^{\bma\bmb} e_\bmb{}^\nu v_\bma \bmpartial_{y^\nu} + \eta^{\bma\bmb} v_\bma e_\bmb{}^\nu \left(\partial_\nu e_ \bmc{}^\beta\right) \alpha^\bmd{}_\beta v_ \bmd \bmpartial_{v_\bmc}, \\
-\frac 12 \partial_{x^\gamma} g^{\alpha\beta} p_\alpha p_\beta \bmpartial_{p_\gamma} &= -\eta^{\bma\bmb} v_\bma e_\bmc{}^\gamma \left(\partial_\gamma e_\bmb{}^\beta \right) \alpha^\bmd{}_\beta v_\bmd \bmpartial_{v_\bmc}.
\end{align}
Combined this yields
\begin{equation}
\mathcal L = \eta^{\bma\bmb} e_\bmb{}^\nu v_\bma \bmpartial_{y^\nu} + \eta^{\bma\bmb} v_\bma \left(e_\bmb{}^\nu\left(\partial_\nu e_\bmc{}^\beta\right) - e_\bmc{}^\gamma \left(\partial_\gamma e_\bmb{}^\beta\right) \right) \alpha^\bmd{}_\beta v_\bmd \bmpartial_{v_\bmc}.
\end{equation}
Now, noting that for the bracket we have
\begin{equation}
\left(e_\bmb{}^\nu\left(\partial_\nu e_\bmc{}^\beta\right) - e_\bmc{}^\gamma\left(\partial_\gamma e_\bmb{}^\beta\right) \right) \alpha^\bmd{}_\beta = [\bm e_\bmb, \bm e_\bmc]^\bmd.
\end{equation}
Using Cartan's structure equation, $[\bm e_\bmb, \bm e_\bmc]^\bmd = \Gamma_{[\bmb}{}^\bmd{}_{\bmc]}$, and the symmetry
\begin{equation}\label{eq:symmegamma}
\Gamma_\bma{}^\bmd{}_\bmb \eta_{\bmd\bmc} + \Gamma_\bma{}^\bmd{}_\bmc \eta_{\bmd\bmb} = 0
\end{equation}
which holds for any connection which is metric (cf.~\cite{VKbook}), we arrive at the asserted formula for $\mathcal L$ (after renaming the coordinates $y^\mu$ as $x^\mu$).
\end{proof}

\begin{lemma}
On $\mathcal{P}$ we have
\begin{equation}
\tilde {\mathcal L} = \Xi^2 \mathcal L.
\end{equation}
\end{lemma}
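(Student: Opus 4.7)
The cleanest route is to work directly with the coordinate expression \eqref{def_liouville} for the Liouville vector field in the canonical coordinates $(x^\mu, p_\nu)$ on $T^*\tilde{\mathcal{M}}$, rather than with the tetrad expression proved in the previous lemma. The advantage is that the conformal relation $\bm g = \Xi^2 \tilde{\bm g}$ translates directly into $\tilde g^{\mu\nu} = \Xi^2 g^{\mu\nu}$, so one can substitute and expand without needing any transformation law for the Ricci rotation coefficients $\Gamma_\bma{}^\bmc{}_\bmb$ under conformal rescalings.

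First, for the transport part:
\[
\tilde g^{\mu\nu} p_\mu \bmpartial_{x^\nu} = \Xi^2 \, g^{\mu\nu} p_\mu \bmpartial_{x^\nu},
\]
which already produces the desired factor $\Xi^2$. For the fibre part one computes
\[
\partial_{x^\gamma} \tilde g^{\alpha\beta} = 2\Xi (\partial_{x^\gamma}\Xi)\, g^{\alpha\beta} + \Xi^2 \partial_{x^\gamma} g^{\alpha\beta},
\]
so that
\[
-\tfrac12 \partial_{x^\gamma}\tilde g^{\alpha\beta} p_\alpha p_\beta \, \bmpartial_{p_\gamma}
= -\Xi(\partial_{x^\gamma}\Xi)\, g^{\alpha\beta} p_\alpha p_\beta \, \bmpartial_{p_\gamma}
-\tfrac12 \Xi^2 \partial_{x^\gamma} g^{\alpha\beta} p_\alpha p_\beta \, \bmpartial_{p_\gamma}.
\]
The key observation is that on the massless mass shell the quadratic form $g^{\alpha\beta} p_\alpha p_\beta$ vanishes: since $\mathcal P = \tilde{\mathcal P}$ as a set (the null cone being conformally invariant), we have $g^{\alpha\beta} p_\alpha p_\beta = \Xi^{-2} \tilde g^{\alpha\beta} p_\alpha p_\beta = 0$ on $\mathcal P$. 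Hence the first (conformally anomalous) term drops out, and what remains reassembles to give exactly $\Xi^2 \mathcal{L}$.

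There is essentially no obstacle here, provided one is careful about two small points. First, both $\tilde{\mathcal L}$ and $\mathcal L$ are a priori vector fields on the full cotangent bundle $T^*\tilde{\mathcal M}$, and one should remark (as in the preceding remark of the paper) that both are tangent to $\mathcal P$, so that restricting the identity to $\mathcal P$ is meaningful. Second, the cancellation crucially uses $m = 0$: for massive particles the quadratic form evaluates to $-m^2$ rather than zero and one obtains an extra term proportional to $m^2\, \Upsilon_\gamma \bmpartial_{p_\gamma}$, reflecting the well-known fact that conformal invariance of the Vlasov flow is specific to the massless case. Writing out these two lines of substitution, invoking the mass shell constraint, and collecting terms yields $\tilde{\mathcal L} = \Xi^2 \mathcal L$ on $\mathcal P$.
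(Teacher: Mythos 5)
Your proof is correct, and it takes a genuinely different route from the paper's. The paper works in the orthonormal frame formulation: it derives the conformal transformation law \eqref{for_tilde_christoffel} for the frame connection coefficients $\Gamma_\bma{}^\bmc{}_\bmb$ via the transition tensor $Q_\bmb{}^\bmc{}_\bmd$, performs the fibre coordinate change $\tilde v_\bma \mapsto v_\bma = \Xi^{-1}\tilde v_\bma$ as in \eqref{for_tilde_ccords}, and substitutes into the frame expression \eqref{for_vlasov_1}, with the cancellation of the conformally anomalous terms effected by the mass shell relation $\eta^{\bma\bmb} v_\bma v_\bmb = 0$. You instead stay in the canonical coordinates $(x^\mu, p_\nu)$ of \eqref{def_liouville}, where the conformal rescaling touches only the metric coefficients, $\tilde g^{\mu\nu} = \Xi^2 g^{\mu\nu}$; the whole computation is then a two-line substitution, the single anomalous term $-\Xi(\partial_{x^\gamma}\Xi)\, g^{\alpha\beta} p_\alpha p_\beta\, \bmpartial_{p_\gamma}$ being manifestly proportional to the Hamiltonian constraint and hence vanishing on $\mathcal P$. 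Your version is shorter and makes the role of masslessness maximally transparent (your remark on the residual $m^2$-term in the massive case is exactly right, and since the identity is between vector fields, verifying it in one coordinate system suffices; your observation that both fields are tangent to $\mathcal P$, so that the restriction makes sense as vector fields on $\mathcal P$, is the right care to take for the ensuing corollary about functions defined only on $\mathcal P$). What the paper's heavier route buys is by-products that are reused downstream: the transformation law \eqref{for_tilde_christoffel} and the fibre-coordinate relation \eqref{for_tilde_ccords} feed directly into the frame coordinates $(x^\mu, v_\bma)$, the scaling law \eqref{scaling_law_f}, and the subsequent frame-based computations (the energy-momentum rescaling in Lemma \ref{lem_em_vlasov} and the commutator formulas), whereas your canonical-coordinate argument, while cleaner for this one lemma, would still leave those frame identities to be derived.
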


\begin{proof}
Using the relations (\ref{rel_coordinates}) the frames and co-frames with respect to $\bm g$ and $\tilde{\bm g}$, respectively, we derive
\begin{align}
\tilde \Gamma_\bmb{}^\bmc{}_\bmd &= \tilde \bmalpha^\bmc \left( \tilde\nabla_{\tilde{\bm e}_\bmb} \tilde{\bm e}_{\bmd} \right) = \frac{1}{\Xi} \bmalpha^\bmc \left(\Xi \tilde\nabla_{\bm e_\bmb} (\Xi \bm e_\bmd)\right) \\
&= \bmalpha^\bmc \left(\bm e_\bmb(\Xi) \bm e_\bmd\right) + \Xi \bmalpha^\bmc \left(\tilde \nabla_{\bm e_\bmb}\right) \\
&= \bm e_\bmb(\Xi) \delta_\bmd^\bmc + \Xi \left(\Gamma_\bmb{}^\bmc{}_\bmd + Q_\bmb{}^\bmc{}_\bmd \right),
\end{align}
where $Q_\bmb{}^\bmc{}_\bmd$ is the transition tensor, in the case of our conformal transformation given by
\begin{equation}
Q_\bmb{}^\bmc{}_\bmd = \frac{1}{\Xi} \left(\bm e_\bmb(\Xi)\delta_\bmd^\bmc + \bm e_\bmd(\Xi) \delta_\bmb^\bmc - \bm e_\bma(\Xi) \eta^{\bma\bmc} \eta_{\bmb\bmd} \right),
\end{equation}
cf.~\cite{VKbook}. This yields
\begin{equation} \label{for_tilde_christoffel}
\tilde \Gamma_\bmb{}^\bmc{}_\bmd = \Xi \Gamma_\bmb{}^\bmc{}_\bmd + 2 \bm e_\bmb(\Xi)\delta_\bmd^\bmc + \bm e_\bmd(\Xi) \delta_\bmb^\bmc - \bm e_\bma(\Xi) \eta^{\bma\bmc} \eta_{\bmb\bmd}.
\end{equation}
We consider the Liouville vector field $\tilde{\mathcal L}$ and we want to perform the change of coordinates given by
\begin{equation}
x^\mu \mapsto y^\mu(x) = x^\mu, \quad \tilde v_\bma \mapsto v_\bma(x, \tilde v_\bma) = \frac{1}{\Xi} \tilde v_\bma,
\end{equation}
cf.~the relations (\ref{rel_coordinates}). This yields
\begin{equation} \label{for_tilde_ccords}
\bmpartial_{\tilde v_\bma} = \frac{1}{\Xi} \bmpartial_{v_\bma}, \qquad \bmpartial_{x^\mu} = \bmpartial_{y^\mu} + \Xi v_\bma \left(\partial_\mu \frac{1}{\Xi} \right) \bmpartial_{v_\bma}.
\end{equation}
Then, using the mass shell relation $\eta^{\bma\bmb} v_\bma v_\bmb = 0$, and inserting (\ref{for_tilde_christoffel}) and (\ref{for_tilde_ccords}) into the formula (\ref{for_vlasov_1}) for $\mathcal L$, we obtain the following assertion:
\end{proof}

\begin{corollary}
For any function $f \in C^1(\mathcal{P}; \mathbb R_+)$ we have
\begin{equation}
\tilde {\mathcal L} f = 0 \quad \Leftrightarrow \quad \mathcal L f = 0.
\end{equation}
\end{corollary}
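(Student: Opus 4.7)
The corollary is an immediate consequence of the preceding lemma, which establishes that on $\mathcal{P}$ the two Liouville vector fields differ only by the multiplicative factor $\Xi^2$, that is $\tilde{\mathcal{L}} = \Xi^2 \mathcal{L}$. My plan is therefore simply to invoke this identity and exploit the fact that the conformal factor $\Xi$ is strictly positive on the (physical) manifold where the relation $\bm{g} = \Xi^2 \tilde{\bm{g}}$ makes sense as a conformal rescaling, so that multiplication by $\Xi^2$ is an invertible scalar operation at every point.

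In more detail, fix $f \in C^1(\mathcal{P};\mathbb{R}_+)$ and a point $(x,\bmp) \in \mathcal{P}$. By the previous lemma,
\begin{equation*}
(\tilde{\mathcal{L}} f)(x,\bmp) = \Xi(x)^2 \, (\mathcal{L} f)(x,\bmp).
\end{equation*}
Since $\Xi(x) \neq 0$ on the physical manifold, the right-hand side vanishes if and only if $(\mathcal{L} f)(x,\bmp) = 0$. Taking the pointwise statement globally on $\mathcal{P}$ gives the desired equivalence.

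The only subtlety worth a remark is that both $\tilde{\mathcal{L}}$ and $\mathcal{L}$ are \emph{a priori} defined on the full cotangent bundle $T^*\tilde{\mathcal{M}}$, while the proportionality $\tilde{\mathcal{L}} = \Xi^2 \mathcal{L}$ was established only after restriction to $\mathcal{P}$ (using $\eta^{\bma\bmb} v_{\bma} v_{\bmb} = 0$). Since $f$ is assumed to be defined on $\mathcal{P}$, and the remark following Section~\ref{sect_intro_vlasov} guarantees that both vector fields are tangent to $\mathcal{P}$, the restrictions are well defined and the equivalence indeed characterises the Vlasov equation on the massless mass shell. There is no substantive obstacle here; the statement is essentially a rewording of the previous lemma, made into the conformal-invariance principle that will be used throughout the rest of the paper.
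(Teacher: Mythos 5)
Your proposal is correct and matches the paper, which treats the corollary as an immediate consequence of the preceding lemma $\tilde{\mathcal L} = \Xi^2 \mathcal L$ on $\mathcal P$ (the paper gives no separate proof, the lemma's proof flowing directly into the assertion). Your added remarks on the strict positivity of $\Xi$ over the physical manifold and on the tangency of both vector fields to $\mathcal P$ are consistent with the paper's own remark in Section 2.1 and merely make explicit what the paper leaves implicit.
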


Let $f\in C^1(T^* \tilde{\mathcal M}; \mathbb R_+)$. For different coordinate maps the particle distribution is realised by different functions $\mathbb R^7 \to \mathbb R$. Let $\bmp \in T^* \tilde{\mathcal M}$ be an arbitrary point with coordinates $(x^\mu, \tilde v_\bma)$ and $(x^\mu, v_\bma)$, respectively. We denote (abusing notation)
\begin{equation} \label{eq_coordinates}
f(\bmp) = f(x^\mu, p_\nu) = \tilde{\mathfrak f}(x^\mu, \tilde v_\bma) = \mathfrak f(x^\mu, v_\bma).
\end{equation}
The relation (\ref{rel_coordinates}) yields
\begin{equation} \label{scaling_law_f}
\mathfrak f(x^\mu, v_\bma) = \tilde{\mathfrak f} \left( x^\mu, \Xi v_\bma \right).
\end{equation}

\begin{remark}
{\em The scaling law (\ref{scaling_law_f}) has a concrete physical meaning. We construct here conformal compactifications of de Sitter like or Minkowski like space-times. These space-times are compactly embedded into the Einstein cylinder on which we have the (almost) flat unphysical metric $\bm g$ which is related to $\tilde{\bm g}$ via (\ref{conf_rel}) and the conformal factor $\Xi$ approaches zero as one approaches the boundary of this embedding. The massless Vlasov particles move along null geodesics which, as sets, are invariant under the conformal rescaling. The parametrisation of the geodesics, however, depends on the metric. As it will be shown later, measured with respect to the unphysical metric, the (absolute value of the) velocities only change a little. This means, by (\ref{scaling_law_f}), that the velocity with respect to the physical metric approaches zero. This is consistent since the conformal boundary, representing null infinity, cannot be reached in finite time.}
\end{remark}

Next, we consider the energy-momentum tensor. Recall the definition on the physical manifold
\begin{equation}
\tilde T_{ab}[f](x) \equiv 8\pi \int_{\tilde{\mathcal P}_x} f(x,\bmp)\, p_a p_b \, \tilde{\mathrm{dvol}}_{\tilde {\mathcal P}_x}, \quad (x,\bmp) \in T^* \tilde{\mathcal M}.
\end{equation}
The domain of integration is unchanged by the conformal rescaling (\ref{conf_rel}) since $\mathcal P = \tilde{\mathcal P}$. However, the volume form depends on the metric. So we define as well
\begin{equation}
T_{ab}[f](x) \equiv 8\pi \int_{\mathcal P_x} f(x,\bmp)\, p_a p_b \, \tilde{\mathrm{dvol}}_{\mathcal P_x}, \quad (x,\bmp) \in T^* \mathcal M.
\end{equation}
A calculation shows
\begin{align}
\tilde T_{\bma\bmb}[f](x) &= \tilde T_{ab}[f](x) \tilde e_\bma{}^a \tilde e_\bmb{}^b  = -8 \pi \int_{\mathrm R_v^3} \tilde{\mathfrak f}(x^\mu, \tilde v_\bma) \frac{\tilde v_\bma \tilde v_\bmb}{|\tilde v|} \, \mathrm d\tilde v_1 \mathrm d\tilde v_2 \mathrm d\tilde v_3, \\
T_{\bma\bmb}[f](x) &= T_{ab}[f](x) e_\bma{}^a e_\bmb{}^b = -8 \pi \int_{\mathrm R_v^3} \mathfrak f(x^\mu, v_\bma) \frac{v_\bma v_\bmb}{|v|} \, \mathrm dv_1 \mathrm dv_2 \mathrm dv_3.
\end{align}
Note that with respect to frame coordinates $g^{0\mu} p_\mu = -v_0 = |v|$ and therefore volume form is given by
\begin{equation}
\mathrm{dvol}_{\mathcal P_x} = \frac{1}{|v|} \mathrm dv_1 \mathrm dv_2 \mathrm dv_3.
\end{equation}

\begin{lemma} \label{lem_em_vlasov}
Let $f\in C^1(\mathcal{P}; \mathbb R_+)$. We have
\begin{equation}
\tilde{\bm T}[f] = \Xi^2 \bm T[f].
\end{equation}
\end{lemma}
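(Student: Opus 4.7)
The plan is to work at the level of frame components, where the excerpt has already derived convenient explicit expressions for both $\tilde T_{\bma\bmb}[f]$ and $T_{\bma\bmb}[f]$ as $\mathbb{R}^3$-integrals over the spatial fibre coordinates $\tilde v_\bmi$ and $v_\bmi$, respectively. Starting from
\[
\tilde T_{\bma\bmb}[f](x) = -8\pi \int_{\mathbb R^3} \tilde{\mathfrak f}(x,\tilde v) \, \frac{\tilde v_\bma \tilde v_\bmb}{|\tilde v|} \, \mathrm d\tilde v_1 \mathrm d\tilde v_2 \mathrm d\tilde v_3,
\]
I would perform the change of variables $\tilde v_\bmi = \Xi\, v_\bmi$ ($\bmi = 1,2,3$), which is legitimate since $\Xi > 0$ on the physical region. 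This change introduces three factors of $\Xi$: a Jacobian $\Xi^3$ from $\mathrm d\tilde v_1 \mathrm d\tilde v_2 \mathrm d\tilde v_3 = \Xi^3 \mathrm d v_1 \mathrm d v_2 \mathrm d v_3$, a factor $\Xi^2$ from $\tilde v_\bma \tilde v_\bmb = \Xi^2 v_\bma v_\bmb$ (which also holds in the index $\bma = 0$ via the mass-shell relation $\tilde v_0 = -|\tilde v| = -\Xi |v| = \Xi v_0$), and a factor $\Xi^{-1}$ from $|\tilde v| = \Xi |v|$.

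The scaling law~\eqref{scaling_law_f}, which states $\tilde{\mathfrak f}(x,\Xi v) = \mathfrak f(x,v)$ and is just a reflection of the fact that $f$ is a single scalar function on $T^*\tilde{\mathcal M}$ expressed in two different fibre coordinate systems, disposes of the integrand. Collecting factors yields
\[
\tilde T_{\bma\bmb}[f](x) = \Xi^4 \left( -8\pi \int_{\mathbb R^3} \mathfrak f(x,v) \, \frac{v_\bma v_\bmb}{|v|} \, \mathrm dv_1 \mathrm dv_2 \mathrm dv_3 \right) = \Xi^4 \, T_{\bma\bmb}[f](x).
\]
To pass from this frame-component identity to the asserted tensorial one, I would use the coframe relation $\tilde\alpha^\bma{}_a = \Xi^{-1}\alpha^\bma{}_a$ (equivalent to $\tilde e_\bma{}^a = \Xi e_\bma{}^a$ from~\eqref{rel_coordinates}) and reassemble $\tilde{\bm T}[f] = \tilde T_{\bma\bmb}\, \tilde{\bm\alpha}^\bma \otimes \tilde{\bm\alpha}^\bmb = \Xi^4 T_{\bma\bmb} \cdot \Xi^{-2}\, \bm\alpha^\bma \otimes \bm\alpha^\bmb = \Xi^2 \bm T[f]$.

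There is no real obstacle here; the only point requiring care is that the change of variables must be applied consistently to all four frame components $\tilde v_\bma$, including the zero component, which is handled automatically by the mass-shell constraint $v_0 = -|v|$. The bookkeeping of conformal weights ($\Xi^4$ in frame components versus $\Xi^2$ in coordinate or abstract indices) is the conceptual content that I would want to make explicit, since it is exactly this weight that will later guarantee the conformal compatibility with the trace-free Einstein equations on the unphysical side.
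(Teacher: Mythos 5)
Your proposal is correct and follows essentially the same route as the paper's proof: the change of variables $\tilde v_\bma \mapsto \Xi v_\bma$ in the fibre integral combined with the scaling law~\eqref{scaling_law_f} gives $\tilde T_{\bma\bmb} = \Xi^4 T_{\bma\bmb}$, and the coframe relation $\tilde{\bm\alpha}^\bma = \Xi^{-1}\bm\alpha^\bma$ then converts this into the tensorial identity $\tilde{\bm T}[f] = \Xi^2 \bm T[f]$. Your additional bookkeeping --- the Jacobian factor $\Xi^3$, the treatment of the zero component via the mass-shell constraint, and the distinction between the frame-component weight $\Xi^4$ and the tensorial weight $\Xi^2$ --- merely makes explicit what the paper's terser argument leaves implicit.
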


\begin{proof}
Write $\tilde{\bm T} = \tilde T_{\bma\bmb} \tilde{\bm \alpha}^\bma \tilde{\bm \alpha}^\bmb$ and $\bm T = T_{\bma\bmb} \bm \alpha^\bma \bm \alpha^\bmb$. Performing a change of variables $\tilde v_\bma \mapsto \Xi v_\bma$ and using the scaling law (\ref{scaling_law_f}) we obtain $\tilde T_{\bma\bmb} = \Xi^4 T_{\bma\bmb}$. Since $\tilde{\bm \alpha}^\bma = \Xi^{-1} \bm \alpha^\bma$ we obtain the asserted relation.
\end{proof}

\subsection{Commuting with the Liouville vector field}

In the remainder of this section we only consider the unphysical quantities. We define
\begin{equation} \label{def_lift_frame}
\hat{\bm e}_\bma \equiv \bm e_\bma +  v_\bmc \Gamma_\bma{}^\bmc{}_\bmd \, \bmpartial_{v_\bmd},
\end{equation}
as the horizontal lifts of the frame fields $\bm e_\bma$ to the cotangent bundle $T^*\tilde{\mathcal M}$. For details on the horizontal lifts, see \cite{lindquist:1966el}. Observe that these vectors are tangent to the mass shell. Note that then the formula (\ref{for_vlasov_2}) for the Liouville vector field with respect to the orthonormal frame $\bm e_a$ reads
\begin{equation}
\mathcal L = \eta^{\bma\bmb} v_\bma \hat{\bm e}_\bmb.
\end{equation}

\begin{lemma}
The Liouville vector field satisfies the commutator formulas
\begin{subequations}
\begin{align}
[\mathcal L, \hat{\bm e}_\bmg] &= \eta^{\bma\bmb} \left(v_\bma \Gamma_{[\bmb}{}^\bmc{}_{\bmg]} \hat{\bm e}_\bmc - v_\bmc \Gamma_{\bmg}{}^\bmc{}_\bma \hat{\bm e}_\bmb \right)  \label{com_for_hat}\\
&\quad + \eta^{\bma\bmb} v_\bma v_\bmc \left(\Xi d^\bmc{}_{\bmd\bmb\bmg} + 2 S_{\bmd[\bmb}{}^{\bmc\bme} L_{\bmg]\bme} \right) \bmpartial_{v_\bmd}, \nonumber \\
[\mathcal L, \partial_{v_\bmg}] &= - \eta^{\bmg\bma} \hat{\bm e}_\bma - \eta^{\bma\bmb} v_\bma \Gamma_\bmb{}^\bmg{}_\bmc \, \bmpartial_{v_\bmc}. \label{com_for_v}
\end{align}
\end{subequations}
\end{lemma}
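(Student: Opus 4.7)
My plan is to compute both commutators directly in the $(x^\mu, v_\bma)$ coordinates, using the explicit expression
\[
\mathcal L = \eta^{\bma\bmb} v_\bma \hat{\bm e}_\bmb, \qquad \hat{\bm e}_\bma = \bm e_\bma + v_\bmc \Gamma_\bma{}^\bmc{}_\bmd\,\bmpartial_{v_\bmd},
\]
together with the facts that $\bm e_\bma$ has no fibre components, that $[\partial_{x^\mu},\partial_{v_\bmb}]=0$, and that $v_\bma$ is a fibre coordinate (so $\bm e_\bmb v_\bma = 0$ while $\hat{\bm e}_\bmb v_\bma = v_\bmc\Gamma_\bmb{}^\bmc{}_\bma$). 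The $\bmg$-commutator with $\partial_{v_\bmg}$ is then a routine Leibniz expansion, whereas the commutator with $\hat{\bm e}_\bmg$ will funnel through the horizontal-horizontal bracket $[\hat{\bm e}_\bmb, \hat{\bm e}_\bmg]$, which is where the Riemann tensor enters.

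For the formula \eqref{com_for_v} I simply apply $\partial_{v_\bmg}$ to the coefficients of $\mathcal L$. The coordinate-derivative piece produces $-\eta^{\bmg\bmb}\bm e_\bmb$, while differentiating the product $\eta^{\bma\bmb} v_\bma v_\bmc \Gamma_\bmb{}^\bmc{}_\bmd$ with respect to $v_\bmg$ produces two terms, $-\eta^{\bmg\bmb} v_\bmc\Gamma_\bmb{}^\bmc{}_\bmd\,\partial_{v_\bmd}$ and $-\eta^{\bma\bmb} v_\bma \Gamma_\bmb{}^\bmg{}_\bmd\,\partial_{v_\bmd}$. Regrouping the first of these with $-\eta^{\bmg\bmb}\bm e_\bmb$ reconstitutes $-\eta^{\bmg\bma}\hat{\bm e}_\bma$, leaving the asserted extra term involving $\Gamma_\bmb{}^\bmg{}_\bmc$.

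For \eqref{com_for_hat} I write $[\mathcal L,\hat{\bm e}_\bmg] = \eta^{\bma\bmb} v_\bma [\hat{\bm e}_\bmb,\hat{\bm e}_\bmg] - (\hat{\bm e}_\bmg v_\bma)\eta^{\bma\bmb}\hat{\bm e}_\bmb$; the second term already yields $-\eta^{\bma\bmb} v_\bmc\Gamma_\bmg{}^\bmc{}_\bma\hat{\bm e}_\bmb$. For the first, I expand $[\hat{\bm e}_\bmb,\hat{\bm e}_\bmg]$ as the sum of four brackets. The base-base bracket gives the commutation coefficients $C_\bmb{}^\bmh{}_\bmg = \Gamma_\bmb{}^\bmh{}_\bmg-\Gamma_\bmg{}^\bmh{}_\bmb$ (using torsion-freeness and Cartan's structure equation already recorded above); the mixed brackets produce terms $\pm v_\bmc(\bm e_\bmb\Gamma_\bmg{}^\bmc{}_\bmf - \bm e_\bmg\Gamma_\bmb{}^\bmc{}_\bmf)\partial_{v_\bmf}$; and the fibre-fibre bracket produces the quadratic combination $v_\bmc(\Gamma_\bmb{}^\bmc{}_\bmd\Gamma_\bmg{}^\bmd{}_\bmf - \Gamma_\bmg{}^\bmc{}_\bmd\Gamma_\bmb{}^\bmd{}_\bmf)\partial_{v_\bmf}$. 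Comparing the bracketed coefficient with the frame expression of the Riemann tensor derived from the paper's sign convention $(\nabla_\bma\nabla_\bmb-\nabla_\bmb\nabla_\bma)v^\bmc = R^\bmc{}_{\bmd\bma\bmb}v^\bmd$, namely
\[
R^\bmc{}_{\bmf\bmb\bmg} = \bm e_\bmb\Gamma_\bmg{}^\bmc{}_\bmf - \bm e_\bmg\Gamma_\bmb{}^\bmc{}_\bmf + \Gamma_\bmb{}^\bmc{}_\bmd\Gamma_\bmg{}^\bmd{}_\bmf - \Gamma_\bmg{}^\bmc{}_\bmd\Gamma_\bmb{}^\bmd{}_\bmf - C_\bmb{}^\bmd{}_\bmg\Gamma_\bmd{}^\bmc{}_\bmf,
\]
shows that the horizontal bracket equals $C_\bmb{}^\bmh{}_\bmg \hat{\bm e}_\bmh + v_\bmc R^\bmc{}_{\bmf\bmb\bmg}\partial_{v_\bmf}$, the base-frame term being exactly what is needed to upgrade $C_\bmb{}^\bmh{}_\bmg \bm e_\bmh$ to $C_\bmb{}^\bmh{}_\bmg \hat{\bm e}_\bmh$.

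The final step, which is the main obstacle, is to rewrite $R^\bmc{}_{\bmf\bmb\bmg}$ in the conformal-Einstein form used on the right-hand side of \eqref{com_for_hat}. For this I invoke the standard Weyl/Schouten decomposition in the form $R^\bmc{}_{\bmd\bmb\bmg} = \Xi\, d^\bmc{}_{\bmd\bmb\bmg} + 2 S_{\bmd[\bmb}{}^{\bmc\bme} L_{\bmg]\bme}$, where $d^\bmc{}_{\bmd\bmb\bmg}$ is the rescaled Weyl tensor (the rescaling being precisely what converts the physical Weyl to a regular unphysical field in the CFE setup), $L_{\bmg\bme}$ is the Schouten tensor, and $S_{\bmd\bmb}{}^{\bmc\bme}$ is the tensor $\delta^\bmc_\bmb\delta^\bme_\bmd + \delta^\bme_\bmb\delta^\bmc_\bmd - \eta_{\bmd\bmb}\eta^{\bmc\bme}$ encoding the Kulkarni--Nomizu combination. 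Contracting with $\eta^{\bma\bmb}v_\bma$ and combining with the other surviving terms then yields \eqref{com_for_hat}. The delicate points in carrying this out will be keeping the index placement compatible with the antisymmetrisation bracket convention of the paper and making sure the term $C_\bmb{}^\bmh{}_\bmg\hat{\bm e}_\bmh$ indeed reproduces the $\eta^{\bma\bmb} v_\bma\Gamma_{[\bmb}{}^\bmc{}_{\bmg]}\hat{\bm e}_\bmc$ piece with the normalisation used in \eqref{com_for_hat}.
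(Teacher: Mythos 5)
Your proposal is correct and takes essentially the same route as the paper: \eqref{com_for_v} by direct differentiation of the coefficients of $\mathcal L$, and \eqref{com_for_hat} by expanding into base--base, mixed and fibre--fibre brackets, removing the non-tensorial terms through the torsion-free condition and the frame expression of the Riemann tensor, and finally inserting the decomposition $R^\bmc{}_{\bmd\bma\bmb}=\Xi d^\bmc{}_{\bmd\bma\bmb}+2S_{\bmd[\bma}{}^{\bmc\bme}L_{\bmb]\bme}$. Your only departure is organisational --- you first isolate the horizontal--horizontal bracket $[\hat{\bm e}_\bmb,\hat{\bm e}_\bmg]=\Gamma_{[\bmb}{}^\bmh{}_{\bmg]}\hat{\bm e}_\bmh+v_\bmc R^\bmc{}_{\bmf\bmb\bmg}\,\bmpartial_{v_\bmf}$ and then apply the Leibniz rule to $\mathcal L=\eta^{\bma\bmb}v_\bma\hat{\bm e}_\bmb$, whereas the paper expands $[\mathcal L,\hat{\bm e}_\bmg]$ in one go; the ingredients and the resulting computation are otherwise identical (and your symmetrised form of $S_{\bma\bmb}{}^{\bmc\bmd}$ agrees with the convention of the cited monograph under which the stated decomposition holds).
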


\begin{proof}
First we verify formula (\ref{com_for_hat}). We calculate
\begin{align*}
[\mathcal L, \hat{\bm e}_\bmg] &= \eta^{\bma\bmb} v_\bma [\bm e_\bmb, \bm e_\bmg] + \eta^{\bma\bmb} \left[v_\bma \bm e_\bmb, v_\bme \Gamma_\bmg{}^\bme{}_\bmf \bmpartial_{v_\bmf} \right] - \eta^{\bma\bmb} v_\bma v_\bmc \bm e_\bmg \left( \Gamma_\bmb{}^\bmc{}_\bmd \right) \bmpartial_{v_\bmd} \\
&\quad + \eta^{\bma\bmb} \Gamma_\bmb{}^\bmc{}_\bmd \Gamma_\bmg{}^\bme{}_\bmf \left[ v_\bma v_\bme \bmpartial_{v_\bmd}, v_\bme \bmpartial_{v_\bmf} \right] \nonumber \\
&= \eta^{\bma\bmb} v_\bma [\bm e_\bmb, \bm e_\bmg] + \eta^{\bma\bmb} v_\bma v_\bme \, \bm e_{[\bmb} \left( \Gamma_{\bmg]}{}^\bme{}_\bmf \right) \bmpartial_{v_\bmf}\\ 
&\quad - \eta^{\bma\bmb} v_\bme \Gamma_\bmg{}^\bme{}_\bma \bm e_\bmb + \eta^{\bma\bmb} \Gamma_\bmb{}^\bmc{}_\bme \Gamma_\bmg{}^\bme{}_\bmf v_\bma v_\bmc \bmpartial_{v_\bmf} \\
&\quad - \eta^{\bma\bmb} v_\bme v_\bmc \Gamma_\bmb{}^\bmc{}_\bmd \Gamma_\bmg{}^\bme{}_\bma \bmpartial_{v_\bmd} - \eta^{\bma\bmb} v_\bme v_\bma \Gamma_\bmb{}^\bmc{}_\bmd \Gamma_\bmg{}^\bme{}_\bmc \bmpartial_{v_\bmd} \nonumber
\end{align*}
There are two terms which are not given in terms of the variables $x^\mu, v_\bma$ or the geometric fields $\bm e_\bma$, $\Gamma_\bma{}^\bmc{}_\bmb$. Using curvature identities we can however make them disappear. These are the no torsion condition and the expression of the Riemann tensor in terms of the frame field and the Christoffel symbols, namely
\begin{align}
0 &= \Sigma_\bma{}^\bmc{}_\bmb = \Gamma_\bma{}^\bmc{}_\bmb- \Gamma_\bmb{}^\bmc{}_\bma - \alpha^\bmc \left([\bm e_\bma, \bm e_\bmb] \right), \\
R^\bmc{}_{\bmd\bma\bmb} &= \bm e_{[\bma}\left( \Gamma_{\bmb]}{}^\bmc{}_\bmd\right) + \Gamma_\bmf{}^\bmc{}_\bmd \Gamma_{[\bmb}{}^\bmf{}_{\bma]} +  \Gamma_\bmb{}^\bmf{}_\bmd \Gamma_\bma{}^\bmc{}_\bmf - \Gamma_\bma{}^\bmf{}_\bmd \Gamma_\bmb{}^\bmc{}_\bmf,
\end{align}
cf.~\cite{VKbook}. Simplification and decomposing the Riemann tensor as
\begin{equation}
R^\bmc{}_{\bmd\bma\bmb}= \Xi d^\bmc{}_{\bmd\bma\bmb} + 2 S_{\bmd[\bma}{}^{\bmc\bme} L_{\bmb]\bme}
\end{equation}
yields formula (\ref{com_for_hat}). Equation (\ref{com_for_v}) can be obtained by a straightforward calculation.
\end{proof}

\subsection{Commuting with the energy momentum tensor} \label{sect_comm_em}

As it will be described in the sections below, the source terms of the conformal Einstein field equations involve terms depending on the matter fields. These terms are the components $T_{\bma\bmb}$, $\nabla_\bma T_{\bmb\bmc}$ of the energy momentum tensor and its covariant derivative, respectively. It is important to express these components without derivatives of the unknowns. For this reason we consider the components of the covariant derivatives in more detail. Using the definition (\ref{def_lift_frame}) of the lifts of the frame vector fields we obtain the formula
\begin{align}
\nabla_\bma T_{\bmb\bmc}[f] &= T_{\bmb\bmc}[\bme_\bma(f)] - \Gamma_\bma{}^\bmd{}_\bmb T_{\bmd\bmc}[f] - \Gamma_\bma{}^\bmd{}_\bmc T_{\bmb\bmd}[f]\\
&= T_{\bmb\bmc}[\hat{\bm e}_\bma(f)] + 8\pi \Gamma_\bma{}^\bme{}_\bmd \int_{\mathbb R_v^3} (\partial_{v_\bmd}  \mathfrak f(x^\mu, v_\bma) ) \, \frac{v_\bme v_\bmb v_\bmc}{|v|}\, \mathrm dv_1 \mathrm dv_2 \mathrm dv_3 \\
&\quad - \Gamma_\bma{}^\bmd{}_\bmb T_{\bmd\bmc}[f] - \Gamma_\bma{}^\bmd{}_\bmc T_{\bmb\bmd}[f]. \nonumber
\end{align}
This motivates to include the fields $\hat{\bm e}_\bma(f)$ and $\partial_{v_\bma}f$ to the collection of unknowns.

\section{The symmetric hyperbolic system for massless Vlasov matter} \label{sec:symmetrichyperbolicvlasov}

\subsection{System for the matter fields}
The conformal Einstein field equations (CEF) with massless Vlasov matter will be formulated and solved in the coordinates $(\tau, \underline x, v^\bma)$, where $\underline x$ are coordinates on $\mathbb S^3$ and $v^\bma$, $\bma = 0,\dots, 3$ are the coordinates induced by the $\bm g$-frame $\{\bm e_\bma\}_{\bma = 0}^3$. \par
The matter field will be described by the unknowns
\begin{equation} \label{def_uf}
\mathbf{u}_{\mathfrak f} = (\mathfrak f, \mathfrak f_\bma, \varphi^\bma),
\end{equation}
where
\begin{equation} \label{def_matter_quan}
 \mathfrak f_\bma \equiv \bm e_\bma{}^\alpha \partial_{x^\alpha} \mathfrak f + v_\bmc \Gamma_\bma{}^\bmc{}_\bmd \partial_{v_\bmd} \mathfrak f, \quad \varphi^\bma \equiv \partial_{v_\bma} \mathfrak f.
\end{equation}
In the remainder of this article these functions will be referred to as {\em the matter fields}. Denote by $N_{\mathfrak f}$ the number of components of $\bm u_{\mathfrak f}$ ---there are nine components.

\begin{remark} \label{rem:constraintsmatter} It is important to note that the initial for $\bm u_{\mathfrak f}$ are not independent. Indeed, the initial data for the field $\varphi^\bma$ can be calculated from the derivatives of $\mathfrak{f}$ initially. Hence, it is necessary to impose, initially, constraints on the initial data of this system. It is then necessary to check that these constraints are propagated. The propagation of this constraints is done in Section \ref{sec:constraints}
\end{remark}

Now, we consider the equation for the matter fields $\mathfrak f$, $\mathfrak f_\bma$, $\varphi^\bma$. In frame coordinates these equations read
\begin{subequations}
\begin{align}
&\eta^{\bma \bmb} v_\bma e_\bmb{}^\mu \bmpartial_{x^\mu} \mathfrak f + \eta^{\bma\bmb} v_\bma v_\bmc \, \Gamma_\bmb{}^\bmc{}_\bmd \, \bmpartial_{v_\bmd} \mathfrak f = 0, \label{conf_f_1}\\
&\eta^{\bma \bmb} v_\bma e_\bmb{}^\mu \bmpartial_{x^\mu} \mathfrak f_\bmg + \eta^{\bma\bmb} v_\bma v_\bmc \, \Gamma_\bmb{}^\bmc{}_\bmd \, \bmpartial_{v_\bmd} \mathfrak f_\bmg \label{conf_f_2} \\
&\quad = \eta^{\bma\bmb} \left(v_\bma \Gamma_{[\bmb}{}^\bmc{}_{\bmg]} \mathfrak f_\bmc - v_\bmc \Gamma_{\bmg}{}^\bmc{}_\bma \mathfrak f_\bmb \right) \nonumber \\
&\quad\quad  + \eta^{\bma\bmb} v_\bma v_\bmc \left(\Xi d^\bmc{}_{\bmd\bmb\bmg} + 2 S_{\bmd[\bmb}{}^{\bmc\bme} L_{\bmg]\bme} \right) \varphi^\bmd, \nonumber \\
&\eta^{\bma \bmb} v_\bma e_\bmb{}^\mu \bmpartial_{x^\mu} \varphi^\bmg + \eta^{\bma\bmb} v_\bma v_\bmc \, \Gamma_\bmb{}^\bmc{}_\bmd \, \bmpartial_{v_\bmd} \varphi^\bmg = - \eta^{\bmg\bma} \mathfrak f_\bma - \eta^{\bma\bmb} v_\bma \Gamma_\bmb{}^\bmg{}_\bmc \varphi^\bmc. \label{conf_f_3}
\end{align}
\end{subequations}
The first equation is just the Vlasov equation in the frame coordinates, cf.~(\ref{for_vlasov_2}). The last two equations follow directly from the commutator formulas (\ref{com_for_hat}) and (\ref{com_for_v}). \par
We want to write these equations in a more compact form. To this end, we denote the collection of all geometric fields, i.e.~the metric and concomitants, by $\mathbf{u}_g$ ---see the definition in (\ref{def_ug}), below. We introduce the matrices $\bm A_{\mathfrak f}^\mu[\mathbf{u}_g(\tau, \cdot)](\underline x, v_\bma)$, $\mathfrak A_\bmb[\mathbf{u}_g(\tau, \cdot)](\underline x, v_\bma)$, $\mu, \bmb = 0, \dots, 3$, and the vector $\bm F_{\mathfrak f}[\mathbf{u}_g(\tau, \cdot), \mathbf{u}_{\mathfrak f}(\tau, \cdot)](\underline x, v_\bma)$ such that the equations (\ref{conf_f_1})--(\ref{conf_f_2}) can be written as
\begin{multline} \label{matter_system}
\bm A_{\mathfrak f}^0[\mathbf{u}_g(\tau, \cdot)]\partial_{x^0} \mathbf{u}_{\mathfrak f} + \bm A_{\mathfrak f}^i[\mathbf{u}_g(\tau, \cdot)] \partial_{x^i} \mathbf{u}_{\mathfrak f} + \mathfrak A_\bmc[\mathbf{u}(\tau, \cdot)] \partial_{v_\bmc} \mathbf{u}_{\mathfrak f} \\ = \bm F_{\mathfrak f}[\mathbf{u}_g(\tau, \cdot), \mathbf{u}_{\mathfrak f}(\tau, \cdot)],
\end{multline}
where
\begin{equation} \label{def_f_source}
\begin{split}
& \bm F_{\mathfrak f}[\mathbf{u}_g(\tau, \cdot), \mathbf{u}_{\mathfrak f}(\tau, \cdot)]= \\&\left(\begin{array}{c} 0 \\
\eta^{\bma\bmb} \left(v_\bma \Gamma_{[\bmb}{}^\bmc{}_{\bm 0]} \mathfrak f_\bmc - v_\bmc \Gamma_{\bm 0}{}^\bmc{}_\bma \mathfrak f_\bmb \right)+ \eta^{\bma\bmb} v_\bma v_\bmc \left(\Xi d^\bmc{}_{\bmd\bmb\bm 0} + 2 S_{\bmd[\bmb}{}^{\bmc\bme} L_{\bm 0]\bme} \right) \varphi^\bmd \\
\vdots \\
\eta^{\bma\bmb} \left(v_\bma \Gamma_{[\bmb}{}^\bmc{}_{\bm 3]} \mathfrak f_\bmc - v_\bmc \Gamma_{\bm 3}{}^\bmc{}_\bma \mathfrak f_\bmb \right)+ \eta^{\bma\bmb} v_\bma v_\bmc \left(\Xi d^\bmc{}_{\bmd\bmb\bm 3} + 2 S_{\bmd[\bmb}{}^{\bmc\bme} L_{\bm 3]\bme} \right) \varphi^\bmd \\
- \eta^{\bm 0\bma} \mathfrak f_\bma - \eta^{\bma\bmb} v_\bma \Gamma_\bmb{}^{\bm 0}{}_\bmc \varphi^\bmc \\
\vdots \\
- \eta^{\bm 3\bma} \mathfrak f_\bma - \eta^{\bma\bmb} v_\bma \Gamma_\bmb{}^{\bm 3}{}_\bmc \varphi^\bmc
\end{array}\right)
\end{split}
\end{equation}
and
\begin{equation} \label{def_a_matrices}
\begin{aligned}
\bm A_{\mathfrak f}^\mu[\mathbf{u}_g(\tau, \cdot)](\underline x, v_\bma) &= \eta^{\bma\bmb} v_\bma e_\bmb{}^\mu \bm I_7, \\
\mathfrak A_\bmd[\mathbf{u}_g(\tau, \cdot)](\underline x, v_\bma) &= \eta^{\bma\bmb} v_\bma v_\bmc \Gamma_\bmb{}^\bmc{}_\bmd \bm I_7,
\end{aligned}
\end{equation}
where $\bm I_7$ is the unit matrix in $\mathbb R^7$.

\begin{remark}
{\em Observe that the matrix $\bm A_{\mathfrak f}^0[\mathbf{u}_g(\tau, \cdot)]$ looses rank if $v_\bma=0$. Hence, whenever $v_\bma=0$ the system \eqref{matter_system} is not symmetric hyperbolic.}
\end{remark}

\section{The conformal Einstein field equations and their hyperbolic reduction} \label{sect_cef}

In this section, we state the conformal Einstein field equations with matter which are satisfied by the unphysical metric $\bm g$. The system of equations will first be stated in general coordinates and then with respect to an orthonormal $\bm g$-frame and finally in the spinor formalism. This will be the starting point for the hyperbolic reduction procedure described in the next section. \par
The statements in this section are valid for any matter model giving rise to a trace-free energy momentum tensor, so in particular for massless Vlasov matter.

\subsection{The metric equations}

In the following let $(\tilde{\mathcal{M}},\tilde{g}_{ab})$ denote a space-time satisfying the Einstein field equations
\begin{equation} \label{e_f_eq}
\tilde{R}_{ab} - \frac{1}{2}\tilde{R}\tilde{g}_{ab} + \lambda \tilde{g}_{ab} = \tilde{T}_{ab}
\end{equation}
with trace-free matter, that is,
\begin{equation}
\tilde{T}_a{}^a =0.
\end{equation}
Moreover, let $(\mathcal{M},g_{ab})$ denote a conformally related (unphysical) space-time such that
\begin{equation} 
g_{ab}= \Xi^2 \tilde{g}_{ab}.
\end{equation}
An \emph{unphysical energy-momentum} $T_{ab}$ is defined through the relation
\begin{equation}
T_{ab} \equiv \Xi^{-2} \tilde{T}_{ab}.
\end{equation}
This definition is consistent with massless Vlasov matter, cf.~Lemma \ref{lem_em_vlasov}. It can be readily verified that
\begin{equation}
\nabla^a T_{ab}=0.
\end{equation}

For reference, we list here the metric conformal Einstein field equations coupled to trace-free matter:
\begin{eqnarray*}
&& \nabla_a \nabla_b \Xi = -\Xi L_{ab} + s g_{ab} +\tfrac{1}{2}\Xi^3T_{ab}, \\
&&  \nabla_a s = -L_{ac} \nabla^c \Xi + \tfrac{1}{2} \Xi^2 \nabla^c \Xi
 T_{ac}  + \tfrac{1}{6} \Xi^3 \nabla^c T_{ca}, \\
&& \nabla_c L_{db} - \nabla_d L_{cb} = \nabla_a \Xi d^a{}_{bcd} + \Xi
T_{cdb},  \\
&& \nabla_a d^a{}_{bcd} = T_{cdb}, \\
&& 6 \Xi s - 3 \nabla_c \Xi \nabla^c \Xi =\lambda.
\end{eqnarray*}
In the previous equations
\begin{equation}
T_{abc} \equiv \Xi \nabla_{[a} T_{b]c} + \nabla_{[a}\Xi T_{b]c} +
g_{c[a} T_{b]d}\nabla^d\Xi,
\end{equation}
denotes the \emph{rescaled Cotton tensor}. {The unknowns of the metric conformal Einstein field equations are  the \emph{Schouten tensor} $L_{ab}$, the \emph{Friedrich scalar} $s$, and the \emph{rescaled Weyl tensor}  $d^a{}_{bcd}$. In terms of the conformal factor $\Xi$ and the Riemann tensor $R^c{}_{dab}$, these concomitants of the metric $g$ can be expressed in the following way. The Schouten tensor is given by
\begin{equation}
L_{ab} = \frac 12 R_{ab} - \frac 12 R g_{ab},
\end{equation}
where $R_{db}=R^a{}_{dab}$ is the Ricci tensor and $R=g^{ab}R_{ab}$ is the Ricci scalar. The Weyl tensor $C^c{}_{dab}$ is defined by the relation
\begin{equation}
R^c{}_{dab} = C^c{}_{dab} + 2S_{d[a}{}^{ce} L_{b]e},
\end{equation}
where
$$
S_{ab}{}^{cd} = \delta_a{}^c \delta_b{}^d - g_{ab} g^{cd},
$$
and the rescaled Weyl tensor $d^c{}_{dab}$ is given by
\begin{equation}
C^c{}_{dab} = \Xi d^c{}_{dab}. 
\end{equation}
The Friedrich scalar is defined by
\begin{equation}
s=\frac 14 \nabla^c \nabla_c \Xi + \frac{1}{24} R\Xi.
\end{equation}
See \cite{VKbook} for details. }

\subsection{The spinorial formulation of the equations}
In this section, we use Penrose's spinorial notation and conventions. The hyperbolic reduction of the CFE is more conveniently discussed in terms of their spinorial formulation.  The reader can refer to \cite{VKbook} for further details. Since the machinery is heavy, and only the end result is of relevance for this work, we just state with no further details the equations. The main ideas behind this hyperbolic reduction procedure can be traced back to  the work by Friedrich \cite{MR1131434}.

Given an orthonormal frame $\{ \bme_\bma{}\}_{\bma = 0}^3$ one constructs a Newman-Penrose (NP) frame, that is to say a null frame, $\{ \bme_{\bmA\bmA'} \}_{\bmA = 0,1, \bmA' =0', 1'}$. { The Infeld-van der Waerden symbols $\sigma_{\bm a}{}^{\bm A \bm A'}$, $\bm a = 0,\dots, 3$, $\bm A = 0,1$, $\bm A' = 0',1'$ mediate between these frames, i.e.~
\begin{equation}
\bm e_{\bm A\bm A'} = \sigma^{\bm a}{}_{\bm A \bm A'} \bm e_{\bm a}.
\end{equation}
} The spinorial components of the fields
\begin{equation}
\Sigma_a{}^c{}_b, \qquad R^c{}_{dab},\qquad
T_{ab}, \qquad L_{ab}, \qquad d^a{}_{bcd},
\qquad T_{abc}, \qquad \Gamma_{a}{}^b{}_c,
\end{equation}
will be denoted, respectively, by
\begin{eqnarray*}
& \Sigma_{\bmA\bmA'}{}^{\bmC\bmC'}{}_{\bmB\bmB'}, \quad
R^{\bmC\bmC'}{}_{\bmD\bmD'\bmA\bmA'\bmB\bmB'}, \quad
T_{\bmA\bmA'\bmB\bmB'},& \\
& \quad L_{\bmA\bmA'\bmB\bmB'},\quad
d^{\bmA\bmA'}{}_{\bmB\bmB'\bmC\bmC'\bmD\bmD'}, \quad
  T_{\bmA\bmA'\bmB\bmB'\bmC\bmC'}, \quad \Gamma_{\bmA\bmA'}{}^{\bmB\bmB'}{}_{\bmC\bmC'}.&
\end{eqnarray*}
{The relation between tensor components, and the spinorial components is then obtained through the Infeld-van der Waerden symbols,} {i.e.~we have
$$
\Sigma_{\bmA\bmA'}{}^{\bmC\bmC'}{}_{\bmB\bmB'} = \sigma^{\bm a}{}_{\bm A \bm A'} \sigma_{\bm c}{}^{\bm C \bm C'} \sigma^{\bm b}{}_{\bm B \bm B'} \Sigma_a{}^c{}_b
$$
etc. See \cite[Section 3.1.9]{VKbook} for a definition of the Infeld-van der Waerden symbols through a spinor basis. They can be explicitly written in terms of the Pauli matrices as
\begin{align*}
\left(\sigma_{\bm 0}^{\bm A \bm A'}\right) &\equiv \frac{1}{\sqrt 2} \left(\begin{array}{cc}1 & 0\\ 0 & 1\end{array}\right), & \left(\sigma_{\bm 1}^{\bm A \bm A'}\right) &\equiv \frac{1}{\sqrt 2} \left(\begin{array}{cc}0 & 1\\ 1 & 0\end{array}\right), \\
\left(\sigma_{\bm 2}^{\bm A \bm A'}\right) &\equiv \frac{1}{\sqrt 2} \left(\begin{array}{cc}0 & i\\ -i & 0\end{array}\right), & \left(\sigma_{\bm 3}^{\bm A \bm A'}\right) &\equiv \frac{1}{\sqrt 2} \left(\begin{array}{cc}1 & 0\\ 0 & -1\end{array}\right),
\end{align*}
and
\begin{align*}
\left(\sigma^{\bm 0}_{\bm A \bm A'}\right) &\equiv \frac{1}{\sqrt 2} \left(\begin{array}{cc}1 & 0\\ 0 & 1\end{array}\right), & \left(\sigma^{\bm 1}_{\bm A \bm A'}\right) &\equiv \frac{1}{\sqrt 2} \left(\begin{array}{cc}0 & 1\\ 1 & 0\end{array}\right), \\
\left(\sigma^{\bm 2}_{\bm A \bm A'}\right) &\equiv \frac{1}{\sqrt 2} \left(\begin{array}{cc}0 & -i\\ i & 0\end{array}\right), & \left(\sigma^{\bm 3}_{\bm A \bm A'}\right) &\equiv \frac{1}{\sqrt 2} \left(\begin{array}{cc}1 & 0\\ 0 & -1\end{array}\right).
\end{align*}
Furthermore}, the connection coefficients can be decomposed in terms of reduced spin connection coefficients as
\begin{equation}
\Gamma_{\bmA\bmA'}{}^{\bmB\bmB'}{}_{\bmC\bmC'} =
\Gamma_{\bmA\bmA'}{}^\bmB{}_\bmC \delta_{\bmC'}{}^{\bmB'}+ \bar{\Gamma}_{\bmA\bmA'}{}^{\bmB'}{}_{\bmC'}\delta_\bmC{}^\bmB,
\end{equation}
where $\Gamma_{AA'BC}=\Gamma_{AA'(BC)}$ as the connection is metric. The curvature spinor can be written as
\begin{equation}
R^{\bmC\bmC'}{}_{\bmD\bmD'\bmA\bmA'\bmB\bmB'} =
R^\bmC{}_{\bmD\bmA\bmA'\bmB\bmB'} \delta_{\bmD'}{}^{\bmC'} +
\bar{R}^{\bmC'}{}_{\bmD'\bmA\bmA'\bmB\bmB'} \delta_\bmD{}^\bmC,
\end{equation}
where $R_{\bmC\bmD\bmA\bmA'\bmB\bmB'} =R_{(\bmC\bmD)\bmA\bmA'\bmB\bmB'}$. Their expression in terms of the reduced spin connection coefficients is given by
\begin{align*}
R^\bmC{}_{\bmD\bmA\bmA'\bmB\bmB'} &=
\bme_{\bmA\bmA'}(\Gamma_{\bmB\bmB'}{}^\bmC{}_\bmD)
-\bme_{\bmB\bmB'}(\Gamma_{\bmA\bmA'}{}^\bmC{}_\bmD) \\
&\quad -\Gamma_{\bmF\bmB'}{}^\bmC{}_\bmD \Gamma_{\bmA\bmA'}{}^\bmF{}_\bmB -
\Gamma_{\bmB\bmF'}{}^\bmC{}_\bmD
\bar{\Gamma}_{\bmA\bmA'}{}^{\bmF'}{}_{\bmB'} \\
&\quad + \Gamma_{\bmF\bmA'}{}^\bmC{}_\bmD \Gamma_{\bmB\bmB'}{}^\bmF{}_\bmA +\Gamma_{\bmA\bmF'}{}^\bmC{}_\bmD \bar{\Gamma}_{\bmB\bmB'}{}^{\bmF'}{}_{\bmA'} \\
&\quad + \Gamma_{\bmA\bmA'}{}^\bmC{}_{\bmF} \Gamma_{\bmB\bmB'}{}^\bmF{}_\bmD -\Gamma_{\bmB\bmB'}{}^\bmC{}_\bmF \Gamma_{\bmA\bmA'}{}^\bmF{}_\bmD.
\end{align*}
Moreover, in view of its symmetries one can write
\begin{equation}
R^\bmC{}_{\bmD\bmA\bmA'\bmB\bmB'} = -\Xi \phi^\bmC{}_{\bmD\bmA\bmB}
\epsilon_{\bmA'\bmB'} +L_{\bmD\bmB'\bmA\bmA'}\delta_\bmB{}^\bmC - L_{\bmD\bmA'\bmB\bmB'}\delta_\bmA{}^\bmC,
\end{equation}
with $\phi_{\bmA\bmB\bmC\bmD}$ and $L_{\bmA\bmA'\bmB\bmB'}$ the components of the Weyl and Schouten spinors, respectively.

The spinorial counterpart of $T_{\bma\bmb}$ satisfies, in view of its vanishing trace, the property
\[
T_{\bmA\bmA'\bmB\bmB'}=
T_{(\bmA\bmB)(\bmA'\bmB')}.
\]
Finally, exploiting the anti-symmetry $T_{cdb}=-T_{dcb}$ of the rescaled Cotton tensor, one has the split
\[
T_{\bmC\bmC'\bmD\bmD'\bmB\bmB'} = T_{\bmC\bmD\bmB\bmB'}
\epsilon_{\bmC'\bmD'} + \bar{T}_{\bmC'\bmD'\bmB\bmB'} \epsilon_{\bmC\bmD},
\]
where $T_{\bmC\bmD\bmB\bmB'}\equiv \tfrac{1}{2}
T_{\bmC\bmQ'\bmD}{}^{\bmQ'}{}_{\bmB\bmB'}$. Observe that $T_{\bmC\bmD\bmB\bmB'}=T_{(\bmC\bmD)\bmB\bmB'}$. \par
The spinorial counterparts of the frame conformal Einstein field equations are obtained by suitable contractions with the Infeld-van der Waerden symbols. Simpler expressions are obtained if one takes into account the remarks made in the previous subsection. The conformal field equations in the spinorial formalism can then be stated, in an arbitrary Newman-Penrose tetrad \((\bme_{\bmA\bmA'}) \):
\begin{subequations}
\begin{align}
   [\bme_{\bmA\bmA'},\bme_{\bmB\bmB'}]
   -(\Gamma_{\bmA\bmA'}{}^{\bmC\bmC'}{}_{\bmB\bmB'}-\Gamma_{\bmB\bmB'}{}^{\bmC\bmC'}{}_{\bmA\bmA'})\bme_{\bmC\bmC'} &= 0 \label{SpinorialZQ1} \\
   \bme_{\bmA\bmA'}(\Gamma_{\bmB\bmB'}{}^\bmC{}_\bmD)
   -\bme_{\bmB\bmB'}(\Gamma_{\bmA\bmA'}{}^\bmC{}_\bmD) &&\nonumber\\
   -\Gamma_{\bmF\bmB'}{}^\bmC{}_\bmD \Gamma_{\bmA\bmA'}{}^\bmF{}_\bmB
   -\Gamma_{\bmB\bmF'}{}^\bmC{}_\bmD
   \bar{\Gamma}_{\bmA\bmA'}{}^{\bmF'}{}_{\bmB'} + \Gamma_{\bmF\bmA'}{}^\bmC{}_\bmD \Gamma_{\bmB\bmB'}{}^\bmF{}_\bmA&&\nonumber\\
   +\Gamma_{\bmA\bmF'}{}^\bmC{}_\bmD
   \bar{\Gamma}_{\bmB\bmB'}{}^{\bmF'}{}_{\bmA'}
   +\Gamma_{\bmA\bmA'}{}^\bmC{}_{\bmF}
   \Gamma_{\bmB\bmB'}{}^\bmF{}_\bmD -\Gamma_{\bmB\bmB'}{}^\bmC{}_\bmF
   \Gamma_{\bmA\bmA'}{}^\bmF{}_\bmD&&\nonumber\\
   +\Xi
   \phi^\bmC{}_{\bmD\bmA\bmB}\epsilon_{\bmA'\bmB'}
   -L_{\bmD\bmB'\bmA\bmA'}\delta_\bmB{}^\bmC +
   L_{\bmD\bmA'\bmB\bmB'}\delta_\bmA{}^\bmC &= 0, \label{SpinorialZQ2}\\
  \nabla_{\bmA\bmA'} \nabla_{\bmB\bmB'}\Xi  +\Xi L_{\bmA\bmA'\bmB\bmB'} - s
 \epsilon_{\bmA\bmB}\epsilon_{\bmA'\bmB'}
   -\tfrac{1}{2}\Xi^3T_{\bmA\bmA'\bmB\bmB'} &= 0, \label{SpinorialZQ3}\\
 \nabla_{\bmA\bmA'} s + L_{\bmA\bmA'\bmC\bmC'}
   \nabla^{\bmC\bmC'} \Xi  && \nonumber\\- \tfrac{1}{2} \Xi^2
   \nabla^{\bmC\bmC'} \Xi T_{\bmA\bmA'\bmC\bmC'}  - \tfrac{1}{6} \Xi^3
   \nabla^{\bmC\bmC'} T_{\bmA\bmA'\bmC\bmC'}&= 0, \label{SpinorialZQ4}\\
\nabla_{\bmC\bmC'}
   L_{\bmD\bmD'\bmB\bmB'} - \nabla_{\bmD\bmD'} L_{\bmC\bmC'\bmB\bmB'}
   -\nabla_{\bmA\bmA'}\Xi
   d^{\bmA\bmA'}{}_{\bmB\bmB'\bmC\bmC'\bmD\bmD'}\nonumber\\
 \hspace{3cm}- \Xi
T_{\bmC\bmC'\bmD\bmD'\bmB\bmB'} &= 0, \label{SpinorialZQ5} \\
 \nabla_{\bmA\bmA'}
   d^{\bmA\bmA'}{}_{\bmB\bmB'\bmC\bmC'\bmD\bmD'} -
   T_{\bmC\bmC'\bmD\bmD'\bmB\bmB'} &= 0, \\
 6 \Xi s - 3 \nabla_{\bmC\bmC'}\Xi  \nabla^{\bmC\bmC'} \Xi - \lambda &= 0\label{SpinorialZQ6}.
\end{align}
\end{subequations}
The connection between the conformal Einstein field equations and the Einstein field equations is given by the following:

\begin{proposition} \label{Proposition:FrameCFEImplyEFE}
Let
\[
(\bme_{\bmA\bmA'}, \Gamma_{\bmA\bmA'}{}^\bmB{}_\bmC, \Xi,
s,L_{\bmA\bmA'\bmB\bmB'}, \phi_{\bmA\bmB\bmC\bmD},T_{\bmA\bmA'\bmB\bmB'})
\]
denote a solution to the frame conformal field equations with
\[
\nabla^{\bmA\bmA'} T_{\bmA\bmA'\bmB\bmB'}=0
\]
and such that, on an open set $\mathcal{U}\subset\mathcal{M}$,
\[
\Xi\neq0, \quad \det
(\epsilon^{\bmA\bmB}\epsilon^{\bmA'\bmB'} \bme_{\bmA\bmA'}{} \otimes \bme_{\bmB\bmB'}) \neq 0,
\]
where $\epsilon$ is the symplectic product on spinors related to the unphysical metric by $\bmg _{\bma\bmb} = \epsilon_{\bmA
\bmB}\epsilon_{\bmA'\bmB'}$. Then the metric
\[
\tilde{\bmg} =
\Xi^{-2} \epsilon_{\bmA\bmB}\epsilon_{\bmA'\bmB'} \bmomega^{\bmA\bmA'}\otimes \bmomega^{\bmA'\bmB'}
\]
where $\{\bmomega^{\bmA\bmA'}\}_{\bma = 0}^3$ is the dual frame to $\{\bme_{\bmA\bmA'}\}_{\bma = 0}^3$, is a solution to the Einstein field equations (\ref{e_f_eq}) on $\mathcal{U}$.
\end{proposition}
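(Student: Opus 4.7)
The plan is to show that, on the open set $\mathcal{U}$ where $\Xi\neq 0$ and the frame is nondegenerate, the unknowns appearing in the CFE are actually the geometric objects of $\tilde{\bm g}$ (frame, Levi-Civita connection, Schouten/Weyl tensors), and that the structure equation \eqref{SpinorialZQ3} combined with the constraint \eqref{SpinorialZQ6} encodes exactly the Einstein field equation \eqref{e_f_eq}. I would proceed in four steps, working with the spinorial/tensorial translation provided by the Infeld--van der Waerden symbols.

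First I would observe that the nondegeneracy hypothesis makes $\tilde{\bm g} = \Xi^{-2}\epsilon_{\bmA\bmB}\epsilon_{\bmA'\bmB'}\bmomega^{\bmA\bmA'}\otimes\bmomega^{\bmB\bmB'}$ a genuine Lorentzian metric on $\mathcal U$, conformally related to the unphysical metric $\bm g$ by $\bm g = \Xi^2\tilde{\bm g}$. Equation \eqref{SpinorialZQ1}, being the no-torsion condition written as a structure equation for the frame, ensures that the coefficients $\Gamma_{\bmA\bmA'}{}^{\bmB\bmB'}{}_{\bmC\bmC'}$ are the connection coefficients of the Levi-Civita connection $\nabla$ of $\bm g$ in the NP tetrad. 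Equation \eqref{SpinorialZQ2} then gives the standard expression of the Riemann curvature of this connection in terms of frame derivatives of the $\Gamma$'s and quadratic terms, and identifies its irreducible pieces with $\Xi\phi_{\bmA\bmB\bmC\bmD}$ (the rescaled Weyl spinor) and $L_{\bmA\bmA'\bmB\bmB'}$ (the Schouten spinor of $\bm g$).

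Next I would use \eqref{SpinorialZQ3}, rewritten in tensorial form as
\begin{equation*}
\nabla_a\nabla_b\Xi = -\Xi L_{ab} + s\, g_{ab} + \tfrac{1}{2}\Xi^3 T_{ab},
\end{equation*}
together with the standard conformal transformation law of the Schouten tensor under $\bm g = \Xi^2\tilde{\bm g}$, to compute $\tilde L_{ab}$ purely in terms of $L_{ab}$, $\Xi$, $\nabla_a\Xi$ and $\nabla_a\nabla_b\Xi$; then I substitute for $\nabla_a\nabla_b\Xi$ from \eqref{SpinorialZQ3}. After cancellations (the $\Xi L_{ab}$ terms cancel against those coming from the conformal transformation, and the $s g_{ab}$ term combines with the $\nabla_c\Xi\nabla^c\Xi\, g_{ab}$ contribution), the constraint \eqref{SpinorialZQ6}, in the form $2s/\Xi - (\nabla_c\Xi\nabla^c\Xi)/\Xi^2 = \lambda/(3\Xi^2)$, is precisely what is needed to replace the remaining $\Xi$-dependent scalar in front of $g_{ab}$ by the cosmological constant multiple of $\tilde{\bm g}_{ab}$. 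The output is the identity $\tilde L_{ab} = \tfrac{1}{2}\tilde T_{ab} + \tfrac{\lambda}{6}\tilde{\bm g}_{ab}$, using $\tilde T_{ab} = \Xi^2 T_{ab}$ and the trace-freeness of $\tilde{\bm T}$.

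Finally, since in four dimensions $\tilde L_{ab} = \tfrac{1}{2}\tilde R_{ab} - \tfrac{1}{12}\tilde R\, \tilde{\bm g}_{ab}$ and tracing the last identity (using $\tilde T_a{}^a = 0$) gives $\tilde R = 4\lambda$, substituting back yields $\tilde R_{ab} - \tfrac{1}{2}\tilde R\, \tilde{\bm g}_{ab} + \lambda\tilde{\bm g}_{ab} = \tilde T_{ab}$, which is \eqref{e_f_eq}. The main obstacle I expect is bookkeeping: tracking the conformal-rescaling factors and the sign conventions in the Schouten transformation, and making sure that \eqref{SpinorialZQ6} (which looks like a pure constraint) is used at the precise place where the scalar trace part of the reconstructed Ricci tensor is assembled. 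The equations \eqref{SpinorialZQ4}, \eqref{SpinorialZQ5} and the contracted Bianchi identity \eqref{SpinorialZQ5}--(ZQ6) enter only indirectly---through the assumption $\nabla^{\bmA\bmA'}T_{\bmA\bmA'\bmB\bmB'}=0$---to guarantee that the system is compatible; they are not required to extract the EFE from \eqref{SpinorialZQ3} and \eqref{SpinorialZQ6}.
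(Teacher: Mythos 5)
Your proposal is correct and follows essentially the same route as the paper, which states Proposition \ref{Proposition:FrameCFEImplyEFE} without proof and defers to the standard references (\cite{MR1131434}, \cite{VKbook}), where the argument is exactly as you describe: use \eqref{SpinorialZQ1}--\eqref{SpinorialZQ2} to identify $\Gamma_{\bmA\bmA'}{}^{\bmB}{}_{\bmC}$, $L_{\bmA\bmA'\bmB\bmB'}$ and $\Xi\phi_{\bmA\bmB\bmC\bmD}$ with the Levi-Civita connection, Schouten and Weyl data of $\bm g$, then combine the conformal transformation law of the Schouten tensor with \eqref{SpinorialZQ3} and \eqref{SpinorialZQ6} to get $\tilde L_{ab} = \tfrac12 \tilde T_{ab} + \tfrac{\lambda}{6}\tilde g_{ab}$, and trace using $\tilde T_a{}^a = 0$ to recover \eqref{e_f_eq}. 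Your closing remark that \eqref{SpinorialZQ4}, \eqref{SpinorialZQ5} and the divergence condition $\nabla^{\bmA\bmA'}T_{\bmA\bmA'\bmB\bmB'}=0$ enter only as consistency conditions, and are not needed to extract the Einstein equations pointwise, likewise matches the standard treatment.
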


\subsection{Basic set-up for the frame} \label{sec_basic_setup}

In the following, all the calculations will be performed in an open subset $\mathcal{U}\subset \mathcal{M}$ of $(\mathcal{M},\bmg)$. On $\mathcal{U}$ one considers some local coordinates $x=(x^\mu)$ and an arbitrary frame $\{ \bmc_\bma \}_{\bma = 0}^3$ which may or may not be a coordinate frame. Let $\{ \bmalpha^\bma \}_{\bma = 0}^3$ denote the dual co-frame so that $\langle \bmalpha^\bma, \bmc_\bmb\rangle =\delta_\bmb{}^\bma$. Moreover, let $\bmnabla$ denote the Levi-Civita covariant derivative of the metric $\bmg$.

It will be assumed that $\mathcal{U}$ is covered by a non-singular congruence of curves with tangent vector $\bmtau$ satisfying the normalisation condition $\bmg(\bmtau,\bmtau)=2$. The vector $\bmtau$ does not need to be hypersurface orthogonal. Let $\tau^{AA'}$ denote the spinorial counterpart of $\tau^a$. We restrict attention to spin bases $\{ \epsilon_\bmA{}^A \}$ satisfying the condition
\[
\tau^{\bmA\bmA'} = \epsilon_\bmzero{}^{\bmA}
\epsilon_{\bmzero'}{}^{\bmA'} + \epsilon_\bmone{}^\bmA
\epsilon_{\bmone'}{}^{\bmA'}.
\]
All spinors will be expressed in components with respect to this class of spin bases. \par
Let $\{ \bme_{\bmA\bmA'}\}$ and $\{  \bmomega^{\bmA\bmA'} \}$  denote, respectively, the null frame and co-frame associated to the spin basis $\{\epsilon_\bmA{}^A \}$. One therefore has that
\[
\langle \bmomega^{\bmA\bmA'}, \bme_{\bmB\bmB'} \rangle =
\epsilon_\bmB{}^\bmA \epsilon_{\bmB'}{}^{\bmA'}.
\]
 At every point $p\in \mathcal{U}$ a basis of the subspace of $T|_p(\mathcal{U})$ orthogonal to $\bmtau$ is given by $\bme_{\bmA\bmB} = \tau_{(\bmB}{}^{\bmA'} \bme_{\bmA)\bmA'}$. The \emph{spatial} frame can be expanded in terms of the vectors $\bmc_\bma$ as $\bme_{\bmA\bmB} = e_{\bmA\bmB}{}^\bma \bmc_\bma$. Moreover, one also has that
\[
\bme_{\bmA\bmA'} = e_{\bmA\bmA'}{}^\bma
\bmc_\bma.
\]
The frame coefficients can be decomposed using $\tau^{\bmA\bmA'}$ as
\[
\bme_{\bmA\bmA'}{}^\bma = \tfrac{1}{2}\tau_{\bmA\bmA'}
-\tau^\bmQ{}_{\bmA'} e_{\bmA\bmQ}{}^\bma,
\]
with
\[
e^\bma \equiv \tau^{\bmA\bmA'} e_{\bmA\bmA'}{}^\bma, \qquad
e_{\bmA\bmB}{}^\bma\equiv \tau_{(\bmA}{}^{\bmA'}e_{\bmB)\bmA'}{}^\bma.
\]

\subsection{Gauge source functions}\label{sec:gauge}

Following the hyperbolic reduction procedure introduced in \cite[Section 6]{MR1131434}, see also \cite[Section 13.2.2]{VKbook}, we consider \emph{gauge source functions} $F^\bma(x)$, $F_{\bmA\bmB}(x)$ and $R(x)$ such that
\begin{subequations}
\begin{align}
 \nabla^{\bmA\bmA'} \nabla_{\bmA\bmA'} e_{\bmA\bmA'}{}^\bma &= F^\bma(x), \label{gauge_f_1} \\
 \nabla^{\bmC\bmC'} \Gamma_{\bmC\bmC'\bmA\bmB} &= F_{\bmA\bmB}(x), \label{gauge_f_2} \\
\nabla^{\bmA\bmA'} L_{\bmA\bmA'\bmB\bmB'} &= \tfrac{1}{6}\nabla_{\bmB\bmB'} R(x). \label{gauge_f_3}
\end{align}
\end{subequations}
The fields $F^\bma(x)$, $F_{\bmA\bmB}(x)$ and $R(x)$ are, respectively, the \emph{coordinate gauge source function}, the \emph{frame gauge source} and the \emph{conformal gauge source function}. In particular, $R(x)$ can be identified with the Ricci scalar of the unphysical metric $\bmg$.

\subsection{Symmetric hyperbolic form of the CEF}\label{sec:symhypcfe}

It is convenient to introduce the unknown function
\begin{equation} \label{def_ug}
\mathbf{u}_g = \left(\Xi, \; \Sigma_{\bmA\bmA'}, \; s, \; , e^\mu_{\bmA\bmA'},\;\Gamma_{\bmA\bmA'\bmB\bmC}, \; \Phi_{\bmA\bmA'\bmB\bmB'}, \; \phi_{\bmA\bmB\bmC\bmD} \right)
\end{equation}
denoting the collection of all geometric fields and $\bm u_{\mathfrak f}$ denotes the collection of all matter fields, cf.~Equation~(\ref{def_uf}). Denote the number of independent components in $\mathbf{u}_g$ by $N_g$. We can now state the reduced hyperbolic form of the Einstein equations, the theorem initially proved by Friedrich (see \cite[Section 2]{MR868737} for the first version,  \cite[Sections 3 and 6]{MR1131434} for the spinorial version, and the monograph \cite[Proposition 13.1]{VKbook})
\begin{proposition}\label{prop_cef_sh}
Given arbitrary smooth gauge source functions
\[
F^\bma(x), \qquad F_{\bmA\bmB}(x), \qquad R(x),
\]
such that
\begin{eqnarray*}
& \nabla^{\bmQ\bmQ'}e_{\bmQ\bmQ'}{}^\bma = F^\bma(x), \qquad
\nabla^{\bmQ\bmQ'} \Gamma_{\bmQ\bmQ'\bmA\bmB} = F_{\bmA\bmB}(x), & \\
&\nabla^{\bmQ\bmQ'} L_{\bmQ\bmQ'\bmB\bmB'} =
\displaystyle\frac{1}{6}\nabla_{\bmB\bmB'} R(x), &
\end{eqnarray*}
and assuming that the components of the matter tensors $T_{\bma\bmb}$ and $T_{\bma\bmb\bmc}$ can be written in such a way that they do not contain derivatives of the matter fields $\bm u_{\mathfrak f}$, then the conformal Einstein field equations \eqref{SpinorialZQ1}--\eqref{SpinorialZQ6}
imply a symmetric hyperbolic system of equations for the independent components of the geometric fields $\mathbf{u}_g$ of the form
\begin{equation} \label{cef_metric_first}
\bm A_g^0[\mathbf{u}_g(\tau, \cdot)](\underline x) \cdot \partial_\tau \mathbf{u}_g + \bm A_g^i[\mathbf{u}_g(\tau, \cdot)](\underline x) \cdot \partial_{x^i} \mathbf{u}_g= \bm F_g[\mathbf{u}_g(\tau, \cdot), \mathbf{u}_{\mathfrak f}(\tau, \cdot)](\underline x), 
\end{equation}
where 
\begin{subequations}
\begin{align}
&\bm A_g^\mu : H_{loc}^m\left(M;\, \mathbb R^{N_g}\right) \to H_{loc}^m\left(M;\, \mathbb R^{N_g \times N_g}\right), \quad \mu = 0,\dots, 3,\\
&\bm F_g: H_{loc}^m \left(M; \, \mathbb R^{N_g} \right) \times H_{loc}^m \left(\mathcal{P}; \, \mathbb R^{N_{\mathfrak f}} \right) \to H_{loc}^m\left(M; \, \mathbb R^{N_g}\right)
\end{align}
\end{subequations}
where $\mathbf{u}_{\mathfrak f} : M \to \mathbb R^{N_{\mathfrak f}}$ is mapping containing as components  the distribution matter $\mathfrak f$ and some of its derivatives, see Equation \eqref{def_uf}. The operator  \(A_g^0\) can be decomposed as
\[
\bm A_g^0 = \mathbf{I} +  \tilde{\bm A}_g^0 
\]
 where $\mathbf{I}$ is the identity of the corresponding dimension, and \(\tilde{\bm A}_g^0\) is containing the actual dependency on the metric quantity. Moreover the matrices $\bm A_g^\mu[\mathbf{z}]$ are polynomials in $\mathbf{z} \in \mathbb R^{N_g}$ of degree at most one with constant coefficients and they are symmetric.

\end{proposition}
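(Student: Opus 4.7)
The plan is to carry out Friedrich's hyperbolic reduction block-by-block for each entry of $\mathbf{u}_g$, showing that once the three gauge source functions $F^\bma$, $F_{\bmA\bmB}$, $R$ are prescribed, each of the spinorial conformal field equations \eqref{SpinorialZQ1}--\eqref{SpinorialZQ6} can be converted into a transport equation along the congruence $\bmtau$ of Section \ref{sec_basic_setup} with a symmetric principal part. The hypothesis on $T_{\bma\bmb}$ and $T_{\bma\bmb\bmc}$ is what makes this analysis essentially decoupled from the matter sector: it guarantees that $\bm F_g$ depends on $\bm u_{\mathfrak f}$ but not on its derivatives, so the principal symbol of the coupled system for $\mathbf{u}_g$ coincides with that of the vacuum reduced system, and polynomial degree-one dependence with constant coefficients is preserved.

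For the scalar sector $\Xi$, $\Sigma_{\bmA\bmA'}\equiv\nabla_{\bmA\bmA'}\Xi$ and $s$, equations \eqref{SpinorialZQ3} and \eqref{SpinorialZQ4} are already of transport type once $\Sigma_{\bmA\bmA'}$ is adjoined as an independent unknown, contributing diagonal blocks proportional to the identity in the principal symbol and thereby fitting the claimed decomposition $\bm A^0_g=\mathbf{I}+\tilde{\bm A}^0_g$ trivially. For the frame coefficients $e^\mu_{\bmA\bmA'}$ and the reduced spin connection $\Gamma_{\bmA\bmA'\bmB\bmC}$, I would contract the vanishing-torsion equation \eqref{SpinorialZQ1} and the structural equation \eqref{SpinorialZQ2} with the spinorial counterpart of $\bmtau^a$, and use the coordinate and frame gauge conditions \eqref{gauge_f_1}--\eqref{gauge_f_2} to convert the divergence parts of these equations into evolution equations along $\bmtau$. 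Similarly, contracting \eqref{SpinorialZQ5} with an appropriate combination of $\epsilon$ and $\tau^{\bmA\bmA'}$ and using \eqref{gauge_f_3} produces an evolution equation for $\Phi_{\bmA\bmA'\bmB\bmB'}$ with diagonal principal part.

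The main obstacle is the Bianchi-like equation for the rescaled Weyl spinor $\phi_{\bmA\bmB\bmC\bmD}$: its principal symbol is \emph{not} automatically symmetric in an arbitrary Newman--Penrose tetrad. The resolution, due to Friedrich \cite{MR1131434}, exploits the total symmetry $\phi_{\bmA\bmB\bmC\bmD}=\phi_{(\bmA\bmB\bmC\bmD)}$ to reduce to five independent complex components, and then splits the spinorial Bianchi identity, by symmetrising appropriately in the free index pair, into five evolution equations whose $\bmtau$-block is manifestly Hermitian, together with constraint equations involving only interior derivatives that are set aside for separate treatment. This is the one place where the naive projection along $\bmtau$ fails and where the spinorial formalism is genuinely needed.

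Assembling the blocks produced in the previous paragraphs yields the system \eqref{cef_metric_first}. By construction the matrices $\bm A_g^\mu$ inherit symmetry from each block, their entries are linear expressions in $\mathbf{u}_g$ with constant coefficients (because the only dependence on $\mathbf{u}_g$ entering the principal symbol comes from the frame components $e^\mu_{\bmA\bmA'}$ appearing linearly, with $\Xi$ entering only through lower-order terms that are absorbed into $\bm F_g$), and $\bm A^0_g$ reduces to the identity when $e^\mu_{\bmA\bmA'}$ takes its background value on the Einstein cylinder, giving $\bm A_g^0=\mathbf{I}+\tilde{\bm A}_g^0$. The regularity statements for $\bm A_g^\mu$ and $\bm F_g$ as maps between the indicated Sobolev spaces follow from the multiplicative properties of $H_{\mathrm{loc}}^m$ on a compact manifold combined with the polynomial structure of the coefficients and of $\bm F_g$ in both $\mathbf{u}_g$ and $\mathbf{u}_{\mathfrak f}$.
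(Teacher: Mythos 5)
Your reduction is correct and is essentially the argument the paper relies on: the paper does not prove Proposition \ref{prop_cef_sh} itself but defers to Friedrich \cite{MR868737,MR1131434} and to \cite[Proposition 13.1]{VKbook}, and your block-by-block sketch ---transport equations for $(\Xi, \Sigma_{\bmA\bmA'}, s)$ from \eqref{SpinorialZQ3}--\eqref{SpinorialZQ4}, $\bmtau$-contractions of \eqref{SpinorialZQ1}, \eqref{SpinorialZQ2} and \eqref{SpinorialZQ5} combined with the gauge conditions \eqref{gauge_f_1}--\eqref{gauge_f_3}, and the symmetrised space-spinor splitting of the Bianchi equation for $\phi_{\bmA\bmB\bmC\bmD}$ into five evolution equations with Hermitian principal part plus constraints--- is precisely that standard reduction, including the correct observations that the trace-free matter hypothesis only affects $\bm F_g$ and that the affine dependence of $\bm A_g^\mu$ on the frame coefficients yields $\bm A_g^0 = \mathbf{I} + \tilde{\bm A}_g^0$. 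The one imprecision worth flagging is your closing claim that the mapping property of $\bm F_g$ follows from its ``polynomial structure'' in $\mathbf{u}_{\mathfrak f}$: the matter sources enter through fibre integrals over momentum space (the operators $\mathcal T_{\bma\bmb}$, $\mathcal T_{\bma\bmb\bmc}$ of Lemma \ref{lem_regular_source}), which are linear but whose boundedness between the stated Sobolev spaces requires the compact $v$-support control that the paper imposes later; this does not affect the proposition as stated, since it only asserts the absence of derivatives of $\bm u_{\mathfrak f}$ in the sources.
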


\section{Subsidiary equations and propagation of the constraints}\label{sec:constraints}

The conformal field equations come with constraints imposed by the form of the system (see for instance \cite[Chapter 13.3]{VKbook}). This constraints relate quantities within the system of conformal field equations, and form a system of compatibility equations which need to be satisfied. In the presence of matter fields, the coupling imposes further constraints. These have two origins: the constraints coming from the presence of matter in the conformal field equations, and those directly related to the matter fields. Altogether, to verify that solutions for the system \eqref{SpinorialZQ1}-\eqref{SpinorialZQ6} coupled to the system \eqref{conf_f_1} to \eqref{conf_f_3}, it is necessary to check that the following so-called zero quantities,
\begin{subequations}
\begin{align}
\mathcal{Z}_{\bma\bmb}  & \equiv \nabla_\bma \nabla_\bmb \Xi +\Xi L_{\bma\bmb} - s g_{\bma\bmb} -  \frac12 \Xi^3 T_{\bma\bmb}, \\
\mathcal{Z}_{\bma} & \equiv \nabla_\bma s + L_{\bma\bmc}\nabla^\bmc \Xi - \frac12 \Xi^2 \nabla^\bmc \Xi T_{\bma\bmc} - \frac16 \Xi^3 \nabla^\bmc T_{\bma\bmc}, \\
\Delta_{\bmc\bmd\bmb} &\equiv \nabla_\bmc L_{\bmd\bmb} - \nabla_\bmd L_{\bmc\bmb}  - \nabla_\bma \Xi d^{\bma}{}_{\bmc\bmd\bmb} - \Xi T{}_{\bmc\bmd\bmb}, \\
\Lambda_{\bma \bmc\bmd} &\equiv \nabla_\bma  d^{\bma}{}_{\bmb\bmc\bmd} -T_{\bmc\bmd\bmb}, \\
\mathcal{Z} &\equiv 6 \Xi s -3 \nabla_\bmc \Xi \nabla^\bmc \Xi -\lambda, \\
 \Sigma_\bma{}^\bmc{}_\bmb &\equiv \Gamma_\bma{}^\bmc{}_\bmb- \Gamma_\bmb{}^\bmc{}_\bma - \alpha^\bmc \left([\bm e_\bma, \bm e_\bmb] \right), \\
\Xi^{\bmc}{}_{\bmd \bma  \bmb} &\equiv \Xi d^\bmc{}_{\bmd\bma\bmb} + 2 S_{\bmd[\bma}{}^{\bmc\bme} L_{\bmb]\bme} \\
& \quad - \left(\bm e_{[\bma}\left( \Gamma_{\bmb]}{}^\bmc{}_\bmd\right) + \Gamma_\bmf{}^\bmc{}_\bmd \Gamma_{[\bmb}{}^\bmf{}_{\bma]} +  \Gamma_\bmb{}^\bmf{}_\bmd \Gamma_\bma{}^\bmc{}_\bmf - \Gamma_\bma{}^\bmf{}_\bmd \Gamma_\bmb{}^\bmc{}_\bmf\right), \nonumber
\end{align}
\end{subequations}
completed with the matter zero quantities 
\begin{subequations}
\begin{align}
\Phi^\bma &\equiv \varphi ^\bma - \partial_{v_\bma} \mathfrak f, \\ 
F_\bma &\equiv \mathfrak f_\bma - \hat{\bm e}_\bma(\mathfrak f),
\end{align}
\end{subequations}
remain zero throughout the time evolution.

It is known that the constraints related to the Einstein equations, when the stress-energy tensor is stress-free can be recast as a symmetric hyperbolic system (see \cite{MR1131434}; also \cite[Proposition 13.2]{VKbook}). Hence, if these quantities are initially vanishing, they will remain so during the evolution. A similar calculation needs to be performed for the matter zero-quantities. We prove the following 

\begin{lemma} The zero quantities $\Phi^\bma$ and $F_{\bma}$, $\bma = 0,\dots, 3$ obey the homogeneous equations in the zero quantities:
\begin{subequations}
\begin{align}
\mathcal{L}\Phi^\bmg  &=  - \eta^{\bmg\bma} F_\bma - \eta^{\bma\bmb} v_\bma \Gamma_\bmb{}^\bmg{}_\bmc \Phi^\bmc, \label{matter_const_ev_1} \\
\mathcal{L}F_\bmg & = \eta^{\bma\bmb} \left(v_\bma \Gamma_{[\bmb}{}^\bmc{}_{\bmg]} F_\bmc - v_\bmc \Gamma_{\bmg}{}^\bmc{}_\bma F_\bmb \right) \label{matter_const_ev_2} \\
 &\quad+ \eta^{\bma\bmb} v_\bma v_\bmc \left(\Xi d^\bmc{}_{\bmd\bmb\bmg} + 2 S_{\bmd[\bmb}{}^{\bmc\bme} L_{\bmg]\bme} \right) \Phi^\bmd \nonumber \\
&\quad   + \eta^{\bma \bmb} v_{\bma} v_{\bmc} \Xi^\bmc{}_{\bmd\bmb \bmg} \left(\varphi^{\bm d} - \Phi^{\bm d}\right) \nonumber\\
&\quad+ \eta^{\bma \bmb} v_{\bma} \Sigma_{\bmb}{}^{\bmc}{}_{\bmg} \left(\mathfrak f_\bmc - F_\bmc - v_{\bmf} \Gamma_\bmc{}^\bmf{}_\bmd (\varphi^\bmd - \Phi^\bmd)\right). \nonumber
\end{align}
\end{subequations}
\end{lemma}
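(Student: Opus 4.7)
The strategy is direct: apply the Liouville operator $\mathcal{L}$ to each of the zero quantities and rewrite the result using the evolution equations \eqref{conf_f_1}--\eqref{conf_f_3} for the unknowns $\mathfrak{f}$, $\mathfrak{f}_\bma$, $\varphi^\bma$ together with the commutator identities \eqref{com_for_hat} and \eqref{com_for_v}. Whenever an expression of the form $\hat{\bm e}_\bma(\mathfrak{f})$ or $\partial_{v_\bma}\mathfrak{f}$ appears on the right-hand side, it is replaced by $\mathfrak{f}_\bma - F_\bma$ or $\varphi^\bma - \Phi^\bma$, respectively. After cancellation of the principal contributions, the terms that survive are manifestly linear in the zero quantities $\Phi^\bmd$, $F_\bmd$ (and, in the second equation, in the geometric zero quantities $\Sigma$ and $\Xi^\bmc{}_{\bmd\bma\bmb}$), which is the required homogeneous form.

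For \eqref{matter_const_ev_1} I would proceed as follows. Starting from $\mathcal{L}\Phi^\bmg = \mathcal{L}\varphi^\bmg - \mathcal{L}\partial_{v_\bmg}\mathfrak{f}$, the first term is given directly by \eqref{conf_f_3}. For the second, the Vlasov equation implies $\mathcal{L}\mathfrak{f}=0$, so $\mathcal{L}\partial_{v_\bmg}\mathfrak{f} = [\mathcal{L},\partial_{v_\bmg}]\mathfrak{f}$, and the commutator formula \eqref{com_for_v} expresses this as $-\eta^{\bmg\bma}\hat{\bm e}_\bma(\mathfrak{f}) - \eta^{\bma\bmb}v_\bma\Gamma_\bmb{}^\bmg{}_\bmc\partial_{v_\bmc}\mathfrak{f}$. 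Substituting $\hat{\bm e}_\bma(\mathfrak{f}) = \mathfrak{f}_\bma - F_\bma$ and $\partial_{v_\bmc}\mathfrak{f} = \varphi^\bmc - \Phi^\bmc$ and subtracting, the $\mathfrak{f}_\bma$ and $\varphi^\bmc$ contributions cancel against those coming from \eqref{conf_f_3}, leaving precisely $-\eta^{\bmg\bma}F_\bma - \eta^{\bma\bmb}v_\bma\Gamma_\bmb{}^\bmg{}_\bmc\Phi^\bmc$. This part is purely algebraic and uses no curvature identities.

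For \eqref{matter_const_ev_2} the computation is analogous but more delicate. One writes $\mathcal{L}F_\bmg = \mathcal{L}\mathfrak{f}_\bmg - [\mathcal{L},\hat{\bm e}_\bmg]\mathfrak{f}$, with the first piece supplied by \eqref{conf_f_2}. The subtlety is that the commutator formula \eqref{com_for_hat} was derived assuming the no-torsion condition $\Sigma_\bma{}^\bmc{}_\bmb = 0$ and the decomposition $R^\bmc{}_{\bmd\bma\bmb} = \Xi d^\bmc{}_{\bmd\bma\bmb} + 2 S_{\bmd[\bma}{}^{\bmc\bme}L_{\bmb]\bme}$. Off-shell—that is, before the constraints are known to hold—these identities are valid only modulo the zero quantities $\Sigma_\bma{}^\bmc{}_\bmb$ and $\Xi^\bmc{}_{\bmd\bma\bmb}$. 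Carefully redoing the derivation of \eqref{com_for_hat} without imposing these identities produces two extra contributions: a term $\eta^{\bma\bmb}v_\bma v_\bmc\,\Xi^\bmc{}_{\bmd\bmb\bmg}\,\partial_{v_\bmd}\mathfrak{f}$ from the Riemann/Weyl substitution and a term $\eta^{\bma\bmb}v_\bma\,\Sigma_\bmb{}^\bmc{}_\bmg\,\hat{\bm e}_\bmc\mathfrak{f}$ from the torsion correction. Replacing $\partial_{v_\bmd}\mathfrak{f} = \varphi^\bmd - \Phi^\bmd$ and $\hat{\bm e}_\bmc\mathfrak{f} = \mathfrak{f}_\bmc - F_\bmc$ (and expanding $\hat{\bm e}_\bmc$ in terms of $\bm e_\bmc$ and $\partial_{v_\bmd}$ to produce the $v_\bmf\Gamma_\bmc{}^\bmf{}_\bmd(\varphi^\bmd-\Phi^\bmd)$ remainder) and then subtracting from \eqref{conf_f_2} yields exactly the announced right-hand side.

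The main obstacle is therefore bookkeeping in the second computation: one must redo the proof of the commutator formula \eqref{com_for_hat} keeping track of the terms in which the no-torsion and Riemann-decomposition identities were used, and then carefully rearrange the result so that every surviving contribution is manifestly proportional to one of the zero quantities $\Phi^\bmd$, $F_\bmc$, $\Sigma_\bmb{}^\bmc{}_\bmg$ or $\Xi^\bmc{}_{\bmd\bmb\bmg}$. Once this is done the homogeneity of the system in the zero quantities is evident, which is what allows one to conclude, in the subsequent use of the lemma, that vanishing initial data for $(\Phi^\bma,F_\bma)$ propagates throughout the evolution.
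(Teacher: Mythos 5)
Your proposal is correct and follows essentially the same route as the paper: the paper's proof likewise recomputes the commutator $[\mathcal{L},\hat{\bm e}_\bmg]$ without assuming the no-torsion condition or the curvature decomposition, obtaining precisely your corrected formula with the zero quantities $\Sigma_\bmb{}^\bmc{}_\bmg$ and $\Xi^\bmc{}_{\bmd\bmb\bmg}$ appearing (the torsion correction multiplying $\bm e_\bmc = \hat{\bm e}_\bmc - v_\bmf \Gamma_\bmc{}^\bmf{}_\bmd \bmpartial_{v_\bmd}$, which is exactly the expansion you note parenthetically), and then substitutes $\hat{\bm e}_\bmc(\mathfrak f)=\mathfrak f_\bmc - F_\bmc$, $\bmpartial_{v_\bmd}\mathfrak f = \varphi^\bmd - \Phi^\bmd$ together with the evolution equations, while equation \eqref{matter_const_ev_1} follows from the purely algebraic commutator \eqref{com_for_v} just as you describe. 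The only imprecision is cosmetic: the extra torsion and curvature contributions enter the corrected commutator with a minus sign, which upon subtracting $[\mathcal{L},\hat{\bm e}_\bmg]\mathfrak f$ from $\mathcal{L}\mathfrak f_\bmg$ yields the signs stated in the lemma, consistent with your announced conclusion.
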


\begin{proof} To perform this calculation, we need to calculate the commutator between the Liouville vector field and derivatives of the matter field, without assuming that the expression of the metric is torsion-free, or that the curvature expression hold.  
We observe that, using Equation \eqref{com_for_hat},
\begin{align*}
[\mathcal{L} , \hat{\bme}_\bmg ]
&= \eta^{\bma\bmb} v_\bma \Gamma_{[\bmb}{}^\bmc{}_{\bmg]} \hat{\bm e}_\bmc - \eta^{\bma\bmb} v_\bmc \Gamma_{\bmg}{}^\bmc{}_\bma \hat{\bm e}_\bmb - \eta^{\bma\bmb} v_\bma \Sigma_{\bmb}{}^{\bmc}{}_{\bmg} \left(\hat{\bm e}_\bmc - v_\bmf \Gamma_\bmc{}^\bmf{}_\bmd \bmpartial_{v_\bmd} \right) \\
&\quad + \eta^{\bma\bmb} v_\bma v_\bmc \left(\Xi d^\bmc{}_{\bmd\bmb\bmg} + 2 S_{\bmd[\bmb}{}^{\bmc\bme} L_{\bmg]\bme}  - \Xi^\bmc{}_{\bmd\bmb \bmg}\right) \bmpartial_{v_\bmd}.
\end{align*}
We deduce that the zero quantity $F^\bma$ satisfies (\ref{matter_const_ev_2}). In a similar fashion, we observe that, using Equation \eqref{com_for_v}, we can establish (\ref{matter_const_ev_1}).
\end{proof}

\section{Existence and Stability Theory} \label{sec_ex_stab}

\subsection{The initial value problem.}

The strategy here is as outlined in \cite{MR868737}. 
The initial data which we prescribe on $\mathbb S^3$ for the Cauchy problem will be chosen to be a background space-time (de Sitter or Minkowski) plus a perturbation. The background space-times are discussed and explicitly given in Section \ref{sect_de_sitter} and \ref{sect_minkowski}, below. The conformal Einstein field equations with massless Vlasov matter are given with respect to the orthonormal $\bm g$-frame $\{\bm e_\bma\}_{\bma = 0}^3$. In order to understand the space-times provided by the Cauchy problem as perturbation of de Sitter or Minkowski we wish to have some notion of background frame on $\hat{\mathcal M}$ with whom the $\bm g$-frame $\{\bm e_\bma\}_{\bma = 0}^3$ can be compared. With the help of so called {\em cylinder maps} $\phi:\mathcal U \subset \hat{\mathcal M} \to \mathbb R \times \mathbb S^3$ (cf.~\cite{MR868737}), one can on $\hat{\mathcal M}$ define the fields $\bm c_0, \dots, \bm c_3$ in terms of the $\phi$-pull-backs of the coordinate functions $\tau, x^0, \dots x^4$ from the Einstein cylinder to $\hat{\mathcal M}$. By these means, one obtains a collection of fields which are globally defined on $\hat{\mathcal M}$. Moreover, these are orthonormal with respect to the $\phi$-pull back of the metric $\bm g_{\mathscr E}$ of the Einstein cylinder and for the $\bm g$-frame $\{\bm e_\bma\}_{\bma = 0}^3$ we have
\begin{equation}
\bm e_\bma = e_\bma{}^\bmb \bm c_\bmb.
\end{equation}
In \cite{MR868737}, it is shown that this procedure fixes the gauge source functions $F^\bma$, $F_{\bm A \bm B}$, $R$, cf.~(\ref{gauge_f_1})--(\ref{gauge_f_3}), to be
\begin{equation} \label{gsf}
F^a = 0, \quad  F_{\bm A \bm B} = 0, \quad  R = -6.
\end{equation}
We take the spinorial counterparts $e_{\bmA\bmA'}^b$ of the components $e_\bma{}^b$ of the frame fields $\bm e_\bma$ with respect to the vacuum frame fields $\bm c_\bmb$ as unknowns of the conformal Einstein field equations with massless Vlasov matter and express the other geometric fields in terms of these unknowns. Note that as this place we make a specific choice for the frame fields $\{\bmc_\bma\}_{\bma = 0}^3$ already mentioned in Section \ref{sec_basic_setup}. \par
Combining the equations (\ref{matter_system}) for the matter fields and the equations (\ref{cef_metric_first}) for the geometric fields, we obtain the the conformal Einstein field equations with massless Vlasov matter
\begin{subequations}
\begin{align}
&\bm A_g^0[\mathbf{u}_g(\tau, \cdot)](\underline x) \partial_\tau \mathbf{u}_g + \bm A_g^i[\mathbf{u}_g(\tau, \cdot)](\underline x) \partial_{x^i} \mathbf{u}_g = \bm F_g[\mathbf{u}_g(\tau, \cdot), \mathbf{u}_{\mathfrak f}(\tau, \cdot)](\underline x), \label{cefmvm1}\\
&\bm A_{\mathfrak f}^0[\mathbf{u}_g(\tau, \cdot)](\underline x, v_\bma) \partial_{\tau} \mathbf{u}_{\mathfrak f} + \bm A_{\mathfrak f}^i[\mathbf{u}_g(\tau, \cdot)](\underline x, v_\bma) \partial_{x^i} \mathbf{u}_{\mathfrak f} \label{cefmvm2}  \\
&\hspace{2cm} + \mathfrak A_\bmc[\mathbf{u}_g(\tau, \cdot)](\underline x, v_\bma) \partial_{v_\bmc} \mathbf{u}_{\mathfrak f} = \bm F_{\mathfrak f}[\mathbf{u}_f(\tau, \cdot), \mathbf{u}_{\mathfrak f}(\tau, \cdot)](\underline x, v_\bma). \nonumber
\end{align}
\end{subequations}
We recall that a solution $\mathbf{u} = (\mathbf{u}_g, \mathbf{u}_{\mathfrak f})$ of the system (\ref{cefmvm1})--(\ref{cefmvm2}) consists of the geometric fields in $\mathbf{u}_g$ as given by (\ref{def_ug}), and matter fields in $\mathbf{u}_{\mathfrak f}$, as given by (\ref{def_uf}). \par
In order to apply Kato's theorem, we have to assume that, initially, the coefficient matrices and the source terms are bounded from above, and that the $\bm A^0$-matrices are bounded away from zero. This could, a priori, be problematic if the range of the $\bm v_a$-variables is neither bounded above, nor away from the zero velocity. Let $\delta \in (0,1)$ and define
\begin{equation} \label{def_dom_omega}
\Omega_\delta \equiv \left\{ \left(v_{\bm 0}, v_{\bm 1}, v_{\bm 2}, v_{\bm 3}\right) \in \mathbb R^4 \, : \, v_{\bm 0} = -|v|, \; \delta \leq |v| \leq \frac{1}{\delta} \right\}.
\end{equation}
In order to avoid the aforementioned boundedness issues we consider the matter fields as functions from the space $H^m_0(\mathbb S^3 \times \Omega_{1/4};\, \mathbb R^{N_{\mathfrak f}})$ and we regard the coefficients and the source terms in the system (\ref{cefmvm1})--(\ref{cefmvm2}) as operators
\begin{subequations}
\begin{align}
&\bm A_g^\mu : H^m\left(\mathbb S^3;\, \mathbb R^{N_g}\right) \to H^m\left(\mathbb S^3;\, \mathbb R^{N_g \times N_g}\right), \quad \mu = 0,\dots, 3, \label{operators_1}\\
&\bm A_{\mathfrak f}^\nu, \mathfrak A_\bma : H^m\left(\mathbb S^3;\, \mathbb R^{N_g}\right)  \to H_0^m \left( \mathbb S^3 \times \Omega_{1/4}; \, \mathbb R^{N_{\mathfrak f} \times N_{\mathfrak f}}\right), \quad \nu, \bma = 0, \dots, 3, \label{operators_2}\\
&\bm F_g: H^m \left(\mathbb S^3; \, \mathbb R^{N_g} \right) \times H_0^m \left(\mathbb S^3 \times \Omega_{1 / 4}; \, \mathbb R^{N_{\mathfrak f}} \right) \to H^m\left(\mathbb S^3; \, \mathbb R^{N_g}\right), \label{operators_3} \\
&\bm F_{\mathfrak f} : H^m \left(\mathbb S^3; \, \mathbb R^{N_g} \right) \times H_0^m \left(\mathbb S^3 \times \Omega_{1 / 4}; \, \mathbb R^{N_{\mathfrak f}} \right) \to H_0^m\left(\mathbb S^3 \times \Omega_{1/4}; \, \mathbb R^{N_{\mathfrak f}}\right). \label{operators_4}
\end{align}
\end{subequations}
\begin{remark} 
{\em
One has the following:
\begin{enumerate}[(i)]
    \item We assume later that $m\geq 5$. At this level of regularity, the considered Sobolev spaces are algebras. Since all the operators considered above are algebraic expressions in the components $\mathbf{u}_g$, the integral in a compact $v$ set of the components of $\mathbf{u}_\mathfrak{f}$, and of the components of $\mathbf{u}_\mathfrak{f}$, they take values in the corresponding Sobolev space. 
    \item Kato \cite{kato} considers uniformly local Sobolev spaces, which are not relevant for us since we are working with spaces which are compact in space and velocities.
    \item Moreover we remark that the operators (\ref{operators_1})--(\ref{operators_4}) are special cases of the operators considered in \cite{kato}. There the operators are denoted by $G_j(t)$, $F(t)$ and they are assumed to be non-linear operators sending functions with values in a Hilbert space $P$ to functions with values in $\mathscr B(P)$ or $P$, respectively. These non-linear operators have an explicit $t$-dependence. The operators (\ref{operators_1})--(\ref{operators_4}) however have no explicit $t$-dependence and the Hilbert space is $\mathbb R^{N_g}$ or $\mathbb R^{N_{\mathfrak f}}$, respectively.
\end{enumerate}}
\end{remark}
We will see that for the class of initial data which we prescribe for the system (\ref{cefmvm1})--(\ref{cefmvm2}), if the $v$-support is bounded initially in $\Omega_{1/2}$, say, it will remain bounded sufficiently long in $\Omega_{1/4}$.  We now discuss this class of initial data. To this end, we use the notation
\begin{equation}
\mathring{\mathbf{u}} = ( \mathring{\mathbf{u}}_g, \mathring {\mathbf{u}}_{\mathfrak f} ) \in H^{m+1}(\mathbb S^3 \times \Omega_{1/4}; \, \mathbb R^{N_g+N_{\mathfrak f}}), \quad \mathrm{where} \; \mathring{\mathbf{u}}_{\mathfrak f} = 0
\end{equation}
to denote a {\em background solution}. For technical reasons, we need to assume higher regularity which is available since this background solution describes either the de Sitter or the Minkowski space-time (see \cite[Theorem I and Eq. (3.29)]{kato}). The metric functions $\mathring{\mathbf{u}}_g$ are explicitly given in Sections \ref{sect_de_sitter} and \ref{sect_minkowski} below. At this stage, it is only important that these functions solve the conformal Einstein field equations with massless Vlasov matter (\ref{cefmvm1})--(\ref{cefmvm2}) on the whole Einstein cylinder $\mathbb R \times \mathbb S^3$. Denote furthermore, the initial data for $\mathring{\mathbf{u}}_g$ by $\mathring{\mathbf{u}}_g^\star$ ---this data is prescribed on the whole of $\mathbb{S}^3$.  \par
Let $\varepsilon, \delta >0$ and define for $\tau \in [0,\pi]$
\begin{subequations}
\begin{align} \label{def_d}
D^g_\varepsilon(\tau) &\equiv \{\mathbf{w} \in H^m(\mathbb S^3; \, \mathbb R^{N_g}) \, : \, \| \mathring{\mathbf{u}}_g (\tau) - \mathbf{w} \|_{H^m(\mathbb S^3; \, \mathbb R^N)} \leq \varepsilon \}, \\
D^{\mathfrak f}_{\varepsilon, \delta} &\equiv \{\mathbf{v} \in H_0^m(\mathbb S^3 \times \Omega_{1/4}; \, \mathbb R^{N_{\mathfrak f}}) \, :\label{def_df}\\
& \qquad \, \mathrm{supp}(\mathbf{v}) \subset \mathbb S^3 \times \Omega_\delta, \| \mathbf{v}\|_{H_0^m(\mathbb S^3 \times \Omega_{1/4}; \, \mathbb R^{N_{\mathfrak f}})} \leq \varepsilon\}. \nonumber 
\end{align}
\end{subequations}
Note that $D_\varepsilon^g(\tau)$ and $D_{\varepsilon,\delta}^{\mathfrak f}(\tau)$ are bounded, closed subsets of $H^m(\mathbb S^3; \, \mathbb R^{N_g})$ and $H^m(\mathbb S^3 \times \Omega_{1/4}; \mathbb R^{N_{\mathfrak f}})$, respectively. Finally, denote
\begin{equation} \label{def_ded}
\bm D_{(\varepsilon, \delta)}(\tau) \equiv D^g_\varepsilon(\tau) \times D^{\mathfrak f}_{\varepsilon, \delta},
\end{equation}
and
\begin{equation}
D_\varepsilon^g \equiv \bigcup_{\tau\in [0, \pi]} D_\varepsilon^g(\tau) , \quad \bm D_{(\varepsilon, \delta)} \equiv \bigcup_{\tau\in [0, \pi]} \bm D_{(\varepsilon, \delta)}(\tau)
\end{equation}
The initial data $\mathring{\mathbf{u}}$ will be chosen in $\bm D_{(\varepsilon, 1/2)}(0)$.

\subsection{Application of Kato's theorem}

In the following lemmas we establish the necessary conditions on the operators (\ref{operators_2})--(\ref{operators_4}) so that Kato's theorem can be applied. It has already been discussed in Proposition \ref{prop_cef_sh} that the coefficients matrices (\ref{operators_1}) of the system (\ref{cefmvm1}) of the geometric fields fulfil all necessary conditions.

\begin{lemma} \label{lem_cond_kato} (Assumptions of Kato's theorem -- matter equations) \\
The operators (\ref{operators_2}), (\ref{operators_3}) in the system for the matter fields $\mathbf{u}_{\mathfrak f}$ fulfil the following conditions:
\begin{enumerate}[(i)]
\item There exists a constant $\vartheta > 0$ such that for all $\nu, \bma = 0,\dots, 3$ one has the bounds, for all $\mathbf{v} \in D_\varepsilon^g$ and $\mathbf{w} \in D_{(\varepsilon, 1/4)}^{\mathfrak f}$,
\begin{subequations}
\begin{align}
\left\| \bm A_{\mathfrak f}^\nu[\mathbf{v}] \right\|_{H^m(\mathbb S^3 \times \Omega_{1/4}; \, \mathbb R^{N_{\mathfrak f} \times N_{\mathfrak f}})} + \left\| \mathfrak A_\bma[\mathbf{v}] \right\|_{H^m_{\mathrm{ul}}(\mathbb S^3 \times \Omega_{1/4}; \, \mathbb R^{N_{\mathfrak f} \times N_{\mathfrak f}})} &\leq \vartheta, \\
\left\| \bm F_{\mathfrak f}[\mathbf{v}, \mathbf{w}] \right\|_{H_0^m(\mathbb S^3 \times \Omega_{1/4}; \, \mathbb R^{N_{\mathfrak f}})} &\leq \vartheta.
\end{align}
\end{subequations}
\item There exists a (Lipschitz) constant $\mu > 0$ such that for all $j = 1,2,3$, $\bma = 0,\dots, 3$ one has the bounds, for all $ \mathbf{v}_1, \mathbf{v}_ 2 \in D_\varepsilon^g$ and $ \mathbf{w}_1, \mathbf{w}_2 \in D_{(\varepsilon, 1/4)}^{\mathfrak f}$,
\begin{subequations}
\begin{align}
\left\| \bm A_{\mathfrak f}^j[\mathbf{v}_1] - \bm A_{\mathfrak f}^j[\mathbf{v}_2]  \right\|_{H^m(\mathbb S^3 \times \Omega_{1/4}; \, \mathbb R^{N_{\mathfrak f} \times N_{\mathfrak f}})} &\leq \mu \left\| \mathbf{v}_1 - \mathbf{v}_ 2 \right\|_{H^m(\mathbb S^3; \, \mathbb R^{N_{g}})}, \\
\left\| \mathfrak A_\bma[\mathbf{v}_1] - \mathfrak A_\bma[\mathbf{v}_2] \right\|_{H^m(\mathbb S^3 \times \Omega_{1/4}; \, \mathbb R^{N_{\mathfrak f} \times N_{\mathfrak f}})} &\leq \mu \left\| \mathbf{v}_1 - \mathbf{v}_ 2 \right\|_{H^m(\mathbb S^3; \, \mathbb R^{N_{g}})},
\end{align}
\begin{multline}
\left\| \bm F_{\mathfrak f}[\mathbf{v}_1, \mathbf{w}_1] - \bm F_{\mathfrak f}[\mathbf{v}_2, \mathbf{w}_2]  \right\|_{H_0^m(\mathbb S^3 \times \Omega_{1/4}; \, \mathbb R^{N_{\mathfrak f}})} \\ \leq \mu \big(\left\| \mathbf{v}_1 - \mathbf{v}_ 2 \right\|_{H^m(\mathbb S^3; \, \mathbb R^{N_{g}})} + \left\| \mathbf{w}_1 - \mathbf{w}_ 2 \right\|_{H_0^m(\mathbb S^3 \times \Omega_{1/4}; \, \mathbb R^{N_{\mathfrak f}})} \big).
\end{multline}
\end{subequations}
\item There exists a (Lipschitz) constant $\mu' > 0$ such that, for all $\mathbf{v}_1, \mathbf{v}_ 2 \in D_\varepsilon^g$,
\begin{equation}
\left\| \bm A_{\mathfrak f}^\nu[\mathbf{v}_1] - \bm A_{\mathfrak f}^\nu[\mathbf{v}_2]  \right\|_{H^{m-1}(\mathbb S^3 \times \Omega_{1/4}; \, \mathbb R^{N_{\mathfrak f} \times N_{\mathfrak f}})} \leq \mu' \left\| \mathbf{v}_1 - \mathbf{v}_ 2 \right\|_{H^{m-1}(\mathbb S^3; \, \mathbb R^{N_{\mathfrak g}})},
\end{equation}
\item For each $\nu = 0,\dots, 3$ the matrices $\bm A^\nu_{\mathfrak f}[\bmv](\underline x, v_\bmb)$ and $\mathfrak A_\bma[\bmv](\underline x, v_\bmb)$ are symmetric for all $\underline x \in \mathbb S^3$, $v_\bmb \in \Omega_{1/4}$.
\item There exists a constant $d>0$ such that, for all $\mathbf{v}, \mathbf{v}_1, \mathbf{v}_ 2 \in D_\varepsilon^g$ and $\mathbf{w}, \mathbf{w}_1, \mathbf{w}_2 \in D_{(\varepsilon, 1/4)}^{\mathfrak f}$,  $\bm A_{\mathfrak f}^0[\mathbf{v}](\underline x, v_\bmb) \geq d$ for all $\underline x \in \mathbb S^3$, $v_\bmb \in \Omega_{1/4}$.
\end{enumerate}
\end{lemma}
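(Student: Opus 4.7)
The plan is to verify the five conditions sequentially by exploiting two facts: all matrix-valued operators in the matter system are scalar multiples of the $7\times 7$ identity matrix, and for $m \geq 5$ the space $H^m(\mathbb{S}^3)$ (resp.\ $H^m(\mathbb{S}^3\times\Omega_{1/4})$) is a Banach algebra under pointwise multiplication by the Sobolev embedding $H^m \hookrightarrow L^\infty$. Condition (iv) is therefore immediate from the explicit form \eqref{def_a_matrices}: both $\bm A_{\mathfrak f}^\mu[\mathbf v] = (\eta^{\bma\bmb} v_\bma e_\bmb{}^\mu)\bm I_7$ and $\mathfrak A_\bmd[\mathbf v] = (\eta^{\bma\bmb}v_\bma v_\bmc \Gamma_\bmb{}^\bmc{}_\bmd)\bm I_7$ are symmetric because they are scalars times the identity.

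For condition (i) I would observe that on $\Omega_{1/4}$ the coordinates $v_\bma$ are bounded and smooth, hence define bounded multipliers on $H^m(\mathbb S^3\times \Omega_{1/4})$. The scalar coefficient of $\bm A_{\mathfrak f}^\mu$ is linear in the components $e_\bmb{}^\mu$ of $\mathbf u_g$ with coefficients polynomial in $v_\bma$; the scalar coefficient of $\mathfrak A_\bmd$ is linear in $\Gamma_\bmb{}^\bmc{}_\bmd$ with coefficients quadratic in $v_\bma$. The source $\bm F_{\mathfrak f}$ from \eqref{def_f_source} is a polynomial of bounded degree in the entries of $\mathbf u_g$ (the $\Gamma$'s, $\Xi$, $d^\bmc{}_{\bmd\bma\bmb}$, $L_{\bma\bme}$) and linear in the entries of $\mathbf u_{\mathfrak f}$, with coefficients polynomial in $v_\bma$. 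Iterating the algebra property $\|fg\|_{H^m}\le C_m\|f\|_{H^m}\|g\|_{H^m}$ together with the uniform bounds $\|\mathbf u_g\|_{H^m}\le \|\mathring{\mathbf u}_g\|_{L^\infty([0,\pi];H^m)}+\varepsilon$ for $\mathbf v\in D^g_\varepsilon$ and the analogous bound for $D^{\mathfrak f}_{(\varepsilon,1/4)}$ gives a uniform $\vartheta$.

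For the Lipschitz estimates (ii) and (iii) I would use the multilinear (polynomial) structure of these operators. Writing $\bm A_{\mathfrak f}^j[\mathbf v_1]-\bm A_{\mathfrak f}^j[\mathbf v_2]=\eta^{\bma\bmb}v_\bma (e_\bmb{}^j[\mathbf v_1]-e_\bmb{}^j[\mathbf v_2])\bm I_7$ shows that it depends linearly on $\mathbf v_1-\mathbf v_2$, with coefficients in $H^m(\mathbb S^3\times\Omega_{1/4})$ arising just from the bounded $v_\bma$ multiplier. A completely analogous telescoping argument handles $\mathfrak A_\bma$ and, by the algebra property applied to each monomial in $\bm F_{\mathfrak f}$, the source term as well, with Lipschitz constants depending only on $\vartheta$ and $C_m$. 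For (iii) one repeats the argument with $m$ replaced by $m-1$, noting that $H^{m-1}$ is still an algebra since $m-1\geq 4$ on $\mathbb S^3\times\Omega_{1/4}$ (which is five-dimensional).

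Finally, condition (v) comes from the explicit form $\bm A_{\mathfrak f}^0[\mathbf v] = (\eta^{\bma\bmb}v_\bma e_\bmb{}^0)\bm I_7$. For the background solution the frame satisfies $\mathring e_{\bm 0}{}^0 = 1$ and $\mathring e_\bmi{}^0 = 0$, so $\bm A_{\mathfrak f}^0[\mathring{\mathbf u}_g] = v_{\bm 0}\bm I_7 = -|v|\bm I_7$, whose operator norm is bounded below by $1/4$ on $\Omega_{1/4}$. For $\mathbf v\in D^g_\varepsilon$, the Sobolev embedding $H^m\hookrightarrow L^\infty$ gives $\|e_\bmb{}^0[\mathbf v]-\mathring e_\bmb{}^0\|_{L^\infty}\leq C\varepsilon$, so for $\varepsilon$ sufficiently small the scalar coefficient remains bounded away from $0$ with a definite sign, giving $|\bm A_{\mathfrak f}^0|\geq d$ for some $d>0$; multiplying \eqref{matter_system} by $-1$ if necessary turns this into positive-definiteness in Kato's sense.

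The main technical obstacle is not any of the individual estimates, which are essentially routine Sobolev algebra arguments, but rather the interplay between the two different underlying manifolds ($\mathbb S^3$ for $\mathbf u_g$, $\mathbb S^3\times\Omega_{1/4}$ for $\mathbf u_{\mathfrak f}$): one must view $\mathbf u_g$-valued fields as $\underline x$-dependent multipliers acting on $(\underline x,v_\bma)$-dependent functions, which requires a tensor-product version of the algebra inequality with uniform constants. Restricting the $v$-support to the compact annulus $\Omega_{1/4}$ is essential both for the boundedness in (i) and for the lower bound in (v); the proof that this momentum-support restriction is dynamically preserved for long enough is a separate argument, carried out in the fixed-point scheme announced in the introduction.
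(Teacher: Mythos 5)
Your proposal is correct and takes essentially the same approach as the paper, whose entire proof is the remark that the claims follow by inspecting the explicit formulas, with boundedness hinging on the compactness of the $v$-domain $\Omega_{1/4}$ --- you simply spell out the Sobolev-algebra and multilinearity details (and your sign-flip treatment of condition (v), noting $\bm A_{\mathfrak f}^0[\mathring{\mathbf u}_g] = -|v|\,\bm I_7$, makes explicit a point the paper leaves implicit). One trivial slip: $\mathbb S^3 \times \Omega_{1/4}$ is six-dimensional, not five, since $\Omega_{1/4}$ is a three-dimensional graph over an annulus in $\mathbb R^3_v$, but this is inconsequential because $m-1 \geq 4 > 3$ still guarantees the algebra property of $H^{m-1}$.
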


\begin{proof}
By inspecting the formulas of the operators one can easily convince oneself that the continuity and symmetry properties hold. For boundedness, it is essential that the $v$-variables do not range over $\mathbb R^4$ but only over the bounded domain $\Omega_{1/4}$.
\end{proof}

For the source term in equation (\ref{operators_3}) of the system (\ref{cefmvm1}) for the geometric fields only the source terms containing the matter fields have to be considered. For the other quantities Proposition \ref{prop_cef_sh} provides the necessary properties.

\begin{lemma} (Assumptions of Kato's theorem -- metric equations) \label{lem_regular_source} \\
Consider $\mathbf{u}_g^\star, \mathbf{u}_g^{\star (1)}, \mathbf{u}_g^{\star (2)} \in D_\varepsilon^g$ and $\mathbf{u}_{\mathfrak f}^\star, \mathbf{u}_{\mathfrak f}^{\star (1)}, \mathbf{u}_{\mathfrak f}^{\star (2)} \in D_{(\varepsilon, 1/4)}^{\mathfrak f}$ (defined in (\ref{def_ded})), where
\begin{equation}
\mathbf{u}_g^\star = \left(\dots, (\Gamma_\bma{}^\bmc{}_\bmb)^\star, \dots \right), \quad \mathbf{u}_{\mathfrak f}^\star = \left(\mathfrak f_\star, \mathfrak f_\bma^\star, \varphi^\bmb_\star \right).
\end{equation}

Consider the operators
\begin{subequations}
\begin{align}
\mathcal T_{\bma \bmb}[\mathbf{u}_{\mathfrak f}^\star](\underline x) &= 8\pi \int_{\mathbb R^3} \mathfrak f_\star(\underline x, v_\bmc) \frac{v_\bma v_\bmb}{\sqrt{v_1^2 + v_2^2 + v_3^2}} \, \mathrm dv_1 \mathrm dv_2 \mathrm dv_3, \\
\mathcal T_{\bma \bmb \bmc}[\mathbf{u}_g^\star, \mathbf{u}_{\mathfrak f}^\star](\underline x) &= 8\pi (\Gamma_\bma{}^\bme{}_\bmd)^\star(\underline x)\, \int_{\mathbb R^3} \varphi^\bmd_\star(0, \underline x, v^\bma) \,  \frac{v_\bme v_\bmb v_\bmc}{|v|} \, \mathrm dv_1 \mathrm dv_2 \mathrm dv_3.
\end{align}
\end{subequations}
These operators fulfil the conditions:
\begin{enumerate}[(i)]
\item There exists a constant $\vartheta > 0$ such that, for all $\bma, \bmb, \bmc, \bmd, \bme = 0,\dots, 3$, the following bounds hold: for all $\mathbf{u}_g^\star\in D_\varepsilon^g$ and $\mathbf{u}_{\mathfrak f}^\star \in D_{(\varepsilon, 1/4)}^{\mathfrak f}$,
\begin{subequations}
\begin{align}
\left\| \mathcal T_{\bma\bmb}[\mathbf{u}_{\mathfrak f}^\star] \right\|_{H^m(\mathbb S^3; \, \mathbb R^{N_g})} &\leq \vartheta, \\
\left\| \mathcal T_{\bmc\bmd\bme}[\mathbf{u}_g^\star, \mathbf{u}_{\mathfrak f}^\star] \right\|_{H^m(\mathbb S^3; \, \mathbb R^{N_g})} &\leq \vartheta.
\end{align}
\end{subequations}
\item There exists a (Lipschitz) constant $\mu > 0$ such for that, all $\bma, \bmb, \bmc, \bmd, \bme = 0,\dots, 3$, the following bounds hold: for all $ \mathbf{u}_g^{\star (1)}, \mathbf{u}_g^{\star (2)} \in D_\varepsilon^g$ and $\mathbf{u}_{\mathfrak f}^{\star (1)}, \mathbf{u}_{\mathfrak f}^{\star (2)} \in D_{(\varepsilon, 1/4)}^{\mathfrak f}$
\begin{subequations}
\begin{equation}
\left\| \mathcal T_{\bma\bmb}[\mathbf{u}_{\mathfrak f}^{\star(1)}] - \mathcal T_{\bma\bmb}[\mathbf{u}_{\mathfrak f}^{\star(2)}] \right\|_{H^m(\mathbb S^3; \, \mathbb R^{N_g})} \leq \mu \left\| \mathbf{u}_{\mathfrak f}^{\star(1)} - \mathbf{u}_{\mathfrak f}^{\star(2)} \right\|_{H_0^m(\mathbb S^3 \times \Omega_{1/4}; \, \mathbb R^{N_{g}})}, \label{lipschitz1}
\end{equation}
\begin{multline}
\left\| \mathcal T_{\bmc\bmd\bme}[\mathbf{u}_g^{\star(1)}, \mathbf{u}_{\mathfrak f}^{\star(1)}] - \mathcal T_{\bmc\bmd\bme}[\mathbf{u}_g^{\star(2)}, \mathbf{u}_{\mathfrak f}^{\star(2)}] \right\|_{H^m(\mathbb S^3; \, \mathbb R^{N_g})} \\
\leq \mu \left(\left\| \mathbf{u}_{g}^{\star(1)}-\mathbf{u}_{g}^{\star(2)} \right\|_{H^m(\mathbb S^3; \, \mathbb R^{N_{g}})}+ \left\| \mathbf{u}_{\mathfrak f}^{\star(1)} - \mathbf{u}_{\mathfrak f}^{\star(2)} \right\|_{H_0^m(\mathbb S^3 \times \Omega_{1/4}; \, \mathbb R^{N_{\mathfrak f}})} \right).
\end{multline}
\end{subequations}
\end{enumerate}
\end{lemma}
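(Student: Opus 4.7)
The plan is to exploit the fact that the matter fields in $\mathbf{u}_{\mathfrak f}^\star \in D^{\mathfrak f}_{(\varepsilon, 1/4)}$ are, by the definition \eqref{def_df}, supported in $\mathbb S^3 \times \Omega_\delta$ with $\delta \ge 1/4$. Thus the integration kernels $K_{\bma\bmb}(v)=v_\bma v_\bmb/|v|$ and $K_{\bme\bmb\bmc}(v)=v_\bme v_\bmb v_\bmc/|v|$ are smooth, $v$-compactly supported (after restriction by $\mathfrak f_\star$ or $\varphi^\bmd_\star$), and uniformly bounded on the integration region, since $|v|\ge 1/4$ there. This makes the $v$-integration essentially an integration over a bounded Euclidean set with a smooth weight, reducing the estimates to standard $L^2$--$H^m$ bookkeeping.

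First I would treat $\mathcal T_{\bma\bmb}$. Exchanging $\underline x$-differentiation with the $v$-integral (by Fubini) and applying Cauchy--Schwarz on the $v$-integration gives, for every multi-index $\alpha$ with $|\alpha|\le m$,
\[
\bigl\| \partial_{\underline x}^\alpha \mathcal T_{\bma\bmb}[\mathbf{u}_{\mathfrak f}^\star] \bigr\|_{L^2(\mathbb S^3)}
\le C \, \| K_{\bma\bmb} \|_{L^2(\Omega_{1/4})} \, \| \partial_{\underline x}^\alpha \mathfrak f_\star \|_{L^2(\mathbb S^3\times\Omega_{1/4})},
\]
with a purely numerical constant $C$. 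Summing over $|\alpha|\le m$ yields the boundedness in item (i) with $\vartheta$ depending only on the kernel and $\varepsilon$, and since $\mathcal T_{\bma\bmb}$ is linear in $\mathbf{u}_{\mathfrak f}^\star$, the Lipschitz estimate \eqref{lipschitz1} is the same bound applied to $\mathfrak f_\star^{(1)}-\mathfrak f_\star^{(2)}$.

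Next I would handle $\mathcal T_{\bma\bmb\bmc}$, which is the pointwise product of $(\Gamma_\bma{}^\bme{}_\bmd)^\star \in H^m(\mathbb S^3)$ with the $v$-integral $I_\bmd[\varphi^\bmd_\star]$. The same Fubini/Cauchy--Schwarz argument applied to $K_{\bme\bmb\bmc}$ and $\varphi^\bmd_\star$ controls $\|I_\bmd[\varphi^\bmd_\star]\|_{H^m(\mathbb S^3)}$ by $\|\varphi^\bmd_\star\|_{H_0^m(\mathbb S^3\times\Omega_{1/4})}$. Since $m\ge 5$, $H^m(\mathbb S^3)$ is a Banach algebra, so
\[
\| (\Gamma)^\star \cdot I_\bmd[\varphi^\bmd_\star] \|_{H^m(\mathbb S^3)}
\lesssim \| (\Gamma)^\star \|_{H^m(\mathbb S^3)} \, \| I_\bmd[\varphi^\bmd_\star] \|_{H^m(\mathbb S^3)},
\]
giving boundedness (i) with $\vartheta$ depending on $\varepsilon$ and the background geometry. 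For the Lipschitz bound, I would split
\[
\mathcal T_{\bmc\bmd\bme}[\mathbf{u}_g^{\star(1)},\mathbf{u}_{\mathfrak f}^{\star(1)}] - \mathcal T_{\bmc\bmd\bme}[\mathbf{u}_g^{\star(2)},\mathbf{u}_{\mathfrak f}^{\star(2)}]
= \bigl( (\Gamma)^{\star(1)} - (\Gamma)^{\star(2)} \bigr)\,I[\varphi^{\star(1)}] + (\Gamma)^{\star(2)}\,\bigl(I[\varphi^{\star(1)}] - I[\varphi^{\star(2)}]\bigr),
\]
estimate each summand with the algebra property, and then use the uniform bounds available on $D_\varepsilon^g$ and $D^{\mathfrak f}_{(\varepsilon,1/4)}$.

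The only real obstacle is conceptual rather than technical: one must be certain that the relevant $v$-support restriction truly holds, so that the singular factor $1/|v|$ is never seen. This is precisely why the article sets up the domain $\Omega_\delta$ in \eqref{def_dom_omega} and works with $\delta=1/4$; the a priori control of the velocity support, deduced elsewhere, guarantees that all integrands in $\mathcal T_{\bma\bmb}$ and $\mathcal T_{\bma\bmb\bmc}$ are supported away from $v=0$, so the kernels $K_{\bma\bmb}$, $K_{\bme\bmb\bmc}$ may legitimately be treated as smooth compactly supported weights and the argument reduces to standard Sobolev/algebra estimates.
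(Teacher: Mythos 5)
Your proposal is correct and follows essentially the same route as the paper's proof: both rest on the $v$-support condition built into $D^{\mathfrak f}_{(\varepsilon,1/4)}$ to keep the kernels $v_\bma v_\bmb/|v|$, $v_\bme v_\bmb v_\bmc/|v|$ bounded on the compact domain $\Omega_{1/4}$, reduce the integral operators to $L^2$-type estimates (the paper via Jensen's inequality, you via Cauchy--Schwarz), bound the Christoffel factor using membership in $D_\varepsilon^g$, and obtain the Lipschitz bounds from (bi)linearity. If anything, your write-up is more complete than the paper's terse sketch, since you carry out the $H^m$ bookkeeping explicitly (differentiation under the integral for each multi-index $\alpha$ with $|\alpha|\leq m$, the Banach algebra property of $H^m(\mathbb S^3)$ for the product with $(\Gamma_\bma{}^\bme{}_\bmd)^\star$, and the explicit bilinear splitting for the Lipschitz estimate of $\mathcal T_{\bma\bmb\bmc}$, which the paper dismisses with ``analogously'').
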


\begin{proof}
Let $\mathbf{u}_g^\star, \mathbf{u}_g^{\star (1)}, \mathbf{u}_g^{\star (2)} \in D_\varepsilon^g$ and $\mathbf{u}_{\mathfrak f}^\star, \mathbf{u}_{\mathfrak f}^{\star (1)}, \mathbf{u}_{\mathfrak f}^{\star (2)} \in D_{(\varepsilon, 1/4)}^{\mathfrak f}$, arbitrary.
\begin{enumerate}[(i)]
\item Boundedness of the operators can be seen as follows. Observe first that in the support of $\mathbf{w}$ we have $1/4 < |v| < 4$. So, the terms $v_\bma v_\bmb/|v|$ are bounded and can be pulled out of the integrals.  Furthermore, since $\mathbf{u}_g^\star, \mathbf{u}_g^{\star (1)}, \mathbf{u}_g^{\star (2)} \in D_\varepsilon^g$, the Christofell symbols of the perturbation are bounded. 
Integrating over $\mathbb S^3$ and using Jensen's inequality yield the $L^2(\mathbb S^3 \times \Omega_{1/4}; \, \mathbb R)$-norm of $\mathfrak f$ times a constant. Note that there the compact support of $\mathfrak f$ in velocities is crucial. The $L^2(\mathbb S^3\times \Omega_{(1/4)}; \mathbb R)$-norm of $\mathfrak f^\star$ can course be bounded by the $H_0^m(\mathbb S^3\times \Omega_{(1/4)}; \mathbb R)$-norm of $\mathfrak f^\star$, which in turn is smaller than $\varepsilon$.

\item Lipschitz-continuity with respect to $\mathbf{w}$. Consider first (\ref{lipschitz1}). Again, we observe that due to the assumption on the support of $w$ we can pull the factors $v_\bma v_\bmb/|v|$ out of the integrals. Now the claim follows by linearity of the integral. The operator $\mathcal T_{\bma\bmb\bmc}$ can be dealt with analogously. 
\end{enumerate}
\end{proof}

We are now ready to state and prove our stability result.

\begin{theorem} \label{main_theorem}
Let $\varepsilon > 0$ and consider initial data $\mathbf{u}_\star \in \bm D_{\varepsilon, 1/4}$, as defined in (\ref{def_ded}), on $\mathbb S^3$. Given $m\geq 5$ and $\tau_\bullet > \frac{3}{4}\pi$ then, if $\varepsilon$ is small enough this initial data give rise to a solution
\begin{multline}
\mathbf{u} \in  C^0 \left([0,\tau_\bullet]; H^{m}(\mathbb S^3; \mathbb R^{N_g}) \times H^{m}(\mathbb S^3 \times \mathbb R_v^4; \mathbb R^{N_{\mathfrak f}})\right) \\ \cap C^1 \left([0,\tau_\bullet]; H^{m-1}(\mathbb S^3; \mathbb R^{N_g}) \times H^{m-1}(\mathbb S^3 \times \mathbb R_v^4; \mathbb R^{N_{\mathfrak f}})\right)
\end{multline}
of the conformal Einstein field equations with massless Vlasov matter, equations (\ref{SpinorialZQ1})--(\ref{SpinorialZQ6}), (\ref{conf_f_1})--(\ref{conf_f_3}). This solution satisfies
\begin{equation}
\mathrm{supp}(\mathbf{u}_{\mathfrak f}(\tau, \cdot)) \in \mathbb S^3 \times \left\{(v_0, \dots, v_3) \in \mathbb R_v^4 \, : \, v_0 = -|v|, \frac 1 4 \leq |v| \leq 2 \right\}
\end{equation}
for all $\tau \leq \tau_\bullet$. Furthermore, given a sequence of initial data $\mathbf{u}_\star^{(n)} = \mathring{\mathbf{u}}_\star^{(n)} + \breve{\mathbf{u}}^{(n)}_\star$ such that
\[
\parallel \breve{\mathbf{u}}_\star^{(n)}\parallel_m <\varepsilon \qquad  \mbox{and} \qquad \breve{\mathbf{u}}^{(n)}_\star\rightarrow 0 \qquad \mbox{as} \qquad n\rightarrow \infty,
\]
then for the corresponding solutions $\breve{\mathbf{u}}^{(n)}_n$ one has that $\breve{\mathbf{u}}^{(n)}\rightarrow 0$ uniformly in $\tau\in[0,\tau_\bullet]$ as $n\rightarrow \infty$.
\end{theorem}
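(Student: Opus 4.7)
My plan is to combine Kato's linear symmetric hyperbolic theorem with a Picard iteration that handles the coupling of the two systems (geometric and kinetic) living on manifolds of different dimension. Given a previous iterate $\mathbf{u}^{(n)}=(\mathbf{u}_g^{(n)},\mathbf{u}_{\mathfrak f}^{(n)})$, with $\mathbf{u}_g^{(n)}\in D^g_\varepsilon$ and $\mathbf{u}_{\mathfrak f}^{(n)}\in D^{\mathfrak f}_{(\varepsilon,1/4)}$, I linearise by freezing $\mathbf{u}_g^{(n)}$ in the coefficient matrices and freezing $(\mathbf{u}_g^{(n)},\mathbf{u}_{\mathfrak f}^{(n)})$ in the source terms, thereby obtaining two decoupled linear symmetric hyperbolic problems for the next iterate $\mathbf{u}^{(n+1)}$: one on $[0,\tau_\bullet]\times\mathbb S^3$ for $\mathbf{u}_g^{(n+1)}$ (via Proposition \ref{prop_cef_sh} together with the $\mathcal T_{\bma\bmb}$, $\mathcal T_{\bma\bmb\bmc}$ regularity of Lemma \ref{lem_regular_source}), and one on $[0,\tau_\bullet]\times\mathbb S^3\times\Omega_{1/4}$ for $\mathbf{u}_{\mathfrak f}^{(n+1)}$ (via Lemma \ref{lem_cond_kato}). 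Kato's linear theorem applied to each problem gives existence on an interval whose length depends only on the constants $\vartheta,\mu,d$ of Lemmas \ref{lem_cond_kato}--\ref{lem_regular_source} and on the background solution $\mathring{\mathbf{u}}$. Choosing $\varepsilon$ small enough this common existence interval can be arranged to cover all of $[0,\tau_\bullet]$ for the prescribed $\tau_\bullet$, because the background $\mathring{\mathbf{u}}_g$ extends smoothly over the whole Einstein cylinder.

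The central obstacle, as flagged in the introduction, is that the matrix $\bm A^0_{\mathfrak f}$ degenerates at $|v|=0$, so symmetric hyperbolicity is lost if the momentum support escapes the annulus $\{\delta\le|v|\le 1/\delta\}$. I would handle this by an a priori estimate on the support of $\mathbf{u}_{\mathfrak f}^{(n+1)}$ along the iteration: the transport part of \eqref{cefmvm2} has characteristic curves in the $(x,v)$-variables whose $v$-component is driven by $v_\bmc\Gamma_\bmb{}^\bmc{}_\bmd$. On the background, these curves preserve $|v|$ (null geodesics on the Einstein cylinder have constant frame momentum modulus); for perturbations of size $\varepsilon$, a Grönwall argument on the characteristic ODE yields $||v|(\tau)-|v|(0)|\le C\varepsilon\tau$. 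Hence if initially $\mathrm{supp}(\mathring{\mathbf{u}}_{\mathfrak f})\subset\mathbb S^3\times\Omega_{1/2}$, then by taking $\varepsilon$ small (depending on $\tau_\bullet$) the support remains inside $\Omega_{1/4}$ on $[0,\tau_\bullet]$, closing the loop with Lemma \ref{lem_cond_kato}. This step is the main technical difficulty and is the reason why we must iterate inside the set $\bm D_{(\varepsilon,1/4)}$ rather than working with general compactly-supported data.

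To obtain convergence of the iteration I would estimate the difference $\mathbf{u}^{(n+1)}-\mathbf{u}^{(n)}$ in the lower-regularity space $H^{m-1}\times H^{m-1}_0$ (to avoid derivative loss), using the energy estimates for the linear systems together with the Lipschitz bounds in items (ii)--(iii) of Lemma \ref{lem_cond_kato} and item (ii) of Lemma \ref{lem_regular_source}. For $\varepsilon$ small and $\tau_\bullet$ fixed, these produce a contraction factor strictly less than one, yielding a unique fixed point $\mathbf{u}\in C^0([0,\tau_\bullet];H^{m-1})$; the weak-$\ast$ uniform boundedness in the higher-regularity space $H^m$ then upgrades the fixed point to a solution in $C^0_\tau H^m\cap C^1_\tau H^{m-1}$ by standard Bona--Smith type interpolation, exactly as in Kato's framework. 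The continuous dependence assertion follows from the same contraction estimate applied to differences of solutions associated with converging initial data.

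Finally, the solution thus produced solves the reduced symmetric hyperbolic system \eqref{cefmvm1}--\eqref{cefmvm2}, not yet the full conformal Einstein--Vlasov system. This gap is closed by invoking Section \ref{sec:constraints}: the subsidiary system for the zero quantities (the geometric ones from \cite[Prop.~13.2]{VKbook} together with $\Phi^\bma$, $F_\bma$ governed by \eqref{matter_const_ev_1}--\eqref{matter_const_ev_2}) is itself symmetric hyperbolic and homogeneous in the zero quantities. Since the initial data are chosen to satisfy the constraints (cf.\ Remark \ref{rem:constraintsmatter}), uniqueness for this subsidiary system forces the zero quantities to vanish throughout $[0,\tau_\bullet]$, so that $\mathbf{u}$ is a genuine solution of \eqref{SpinorialZQ1}--\eqref{SpinorialZQ6} and \eqref{conf_f_1}--\eqref{conf_f_3}. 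Combined with the support bound from step two, this establishes all the conclusions of the theorem.
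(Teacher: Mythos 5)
Your proposal is, in substance, the paper's own proof. The paper likewise defines a map $\Phi$ sending a frozen pair $(\mathbf{w},\mathbf{v})$ to the solution of the \emph{decoupled} linear systems (\ref{lin_loc_1})--(\ref{lin_loc_2}), solves each separately by Kato's linear theorem (Theorem I of \cite{kato}), runs a fixed-point argument on a set $\bm S(R,T',L')$ using precisely the bounds of Lemmas \ref{lem_cond_kato} and \ref{lem_regular_source} (contraction in the $H^{m-1}$-type norm with boundedness in $H^m$, as in Kato's framework), obtains continuous dependence from the analogue of Kato's Theorem III, and controls the $v$-support through the characteristic system, exploiting exactly the fact you identify: $\mathring{\Gamma}_{\bmb}{}^{\bmc}{}_{\bmzero}=0$ on the background, so $|v|$ drifts only at order $\varepsilon$. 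Two structural remarks. First, the paper proves only a short-time result and then applies existence and stability \emph{successively} up to $\tau_\bullet$, replacing $\mathring{\mathbf{u}}_\star$ by $\mathring{\mathbf{u}}(T')$ at each step (this is where the extra regularity $m+1$ of the background is used); your single iteration over all of $[0,\tau_\bullet]$ is workable, but only if set up for the difference $\mathbf{u}-\mathring{\mathbf{u}}$, so that the effective source vanishes at the background and smallness of $\varepsilon$ actually buys a long interval --- your sketch leaves this mechanism implicit, and without it the self-mapping property of the iteration fails for fixed tube radius on a long interval. Second, your explicit constraint-propagation step is sound and consistent with Section \ref{sec:constraints}; the paper delegates it to that section rather than spelling it out inside the proof.

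The one step that is circular as literally written is the support estimate. You claim a Gr\"onwall bound $\bigl||v|(\tau)-|v|(0)\bigr|\le C\varepsilon\tau$ ``on the characteristic ODE'', but the characteristics are parametrised by the affine parameter $\varsigma$, and coordinate time advances along them at a rate proportional to the very quantity being estimated: $\dot{x}^0\ge(1-\varepsilon)\left|v_{\bmzero}\right|$. A Gr\"onwall argument in $\tau$ therefore presupposes the lower bound on $\left|v_{\bmzero}\right|$ it is meant to establish --- if $\left|v_{\bmzero}\right|$ degenerated, the affine-parameter length needed to cross the slab $\{0\le x^0\le T'\}$ would blow up, and the accumulated drift with it. The paper avoids this by estimating entirely in $\varsigma$: the Riccati-type inequality $\left|\dot{v}_{\bmzero}(\varsigma)\right|\le C\varepsilon\,(v_{\bmzero}(\varsigma))^2$ yields the lower bound (\ref{bound_v0}) as a function of $\varsigma$, this is inserted into $\dot{x}^0$ to bound the affine-parameter length $\hat\varsigma$ required for $x^0$ to reach $3\pi/4$, and only then is $\left|v_{\bmzero}\right|\ge 1/4$ on the whole slab concluded. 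Your version is repairable by a standard continuity/bootstrap argument (assume $\left|v_{\bmzero}\right|\ge 1/4$ on a maximal subinterval, derive the strictly better bound, close), but as stated the estimate assumes what it proves.
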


\begin{remark}
{\em The proof of Theorem \ref{main_theorem} relies on the proofs of \cite[Theorems II and III]{kato}. Our situation differs in in the following points from the situation considered there:
\begin{enumerate}[(i)]
\item In the current setting we look for an existence and stability result for a system of two coupled, non-linear symmetric hyperbolic systems, one system for the geometric fields $\bm u_g$ and one system for the matter fields $\bm u_{\mathfrak f}$. The two systems cannot simply be considered as one single system since the unknowns are of different dimensions.
\item The domain of the solution functions of the symmetric hyperbolic systems considered in this article is not $\mathbb R^m$ for some $m > 0$ but $\mathbb S^3$ or $\mathbb S^3 \times \Omega_{1/4}$. A localisation procedure described in \cite{MR1131434} or \cite{VKbook} makes it possible to state and prove Kato's existence and stability theorems on compact manifolds without boundary (in particular on $\mathbb S^3$). For solutions that are launched by initial data which are sufficiently close to the vacuum solution and the matter fields of which have $v$-support contained in $\Omega_{1/2}$, we will show that the support in the $v$-variables stays within $\Omega_{1/4}$. For this reason, for the specific system of equations at hand and solutions close to the de Sitter solution we can work with $\Omega_{1/4}$ instead of $\mathbb R^4$.
\end{enumerate}}
\end{remark}

\begin{proof}[Proof of Theorem \ref{main_theorem}]
In this proof we will use the shorthand
\begin{align}
H^m &\equiv H^m(\mathbb S^3; \, \mathbb R^{N_g}) \times H^m(\mathbb S^3\times \Omega_{1/4}; \, \mathbb R^{N_{\mathfrak f}} ), \\
\|\cdot \|_m &\equiv \|\cdot \|_{H^m}.
\end{align}
Recall that $N_g$ denotes the number of independent components of the geometric fields and $N_{\mathfrak f}$ the number of independent components of the matter fields. $\Omega_{1/4}$ is defined in (\ref{def_dom_omega}). \par

The proof consists in several steps. First a {\em local existence result} for the solution $\mathbf{u}$ with initial data $\mathbf{u}_\star$ is demonstrated on a (short) time interval $[0,T']$, $T' \leq \tau_\bullet$ via a contraction argument \cite[Proof of Thm.~II]{kato}. This proof implies that for initial data $\tilde{\mathbf{u}}_\star$ close to $\mathbf{u}_\star$ a solution $\tilde{\mathbf{u}}$ exists on the same interval $[0,T']$. In the next step a {\em local stability result} is established on the (short) interval $[0,T']$, i.e.~$\sup_{0\leq \tau \leq T'} \|\tilde{\mathbf{u}}(\tau) - \mathbf{u}(\tau)\|_m \to 0$, uniformly in $\tau$, as $\|\tilde{\mathbf{u}}_\star - \mathbf{u}_\star \|_m \to 0$ \cite[Proof of Thm.~III]{kato}. Finally these local results can be applied successively in order to establish them on the interval $[0,\tau_\bullet]$ \cite[p.~200, last paragraph]{kato}.

\begin{remark}
{\em This proof relies on the proofs of Theorems II and III in \cite{kato} and details are given only for the parts which differ from these proofs. Since the reader might want to compare with the proofs in \cite{kato}, we comment on the notations. In \cite{kato}, the contraction for the local existence result is set up on a set $\bm S$ of functions which stay close to a function $u_{00}$. This is a technical help function, needed to use stability estimates for linear hyperbolic systems which need more regularity. In our setting the background solution $\mathring {\bm u}(\tau)$ is chosen for $u_{00}$, with $\tau = 0$ for the result on $[0,T']$, and $\tau >0$ in the successive application of the local result. At this point, the higher regularity of the initial data of the background solution is used. }
\end{remark}

First we establish the local existence result.

\begin{lemma}
Let $\varepsilon > 0$ and consider initial data $\mathbf{u}_\star \in \bm D_{(\varepsilon, 1/4)}(0)$ such that $\|  \mathbf{u}_\star - \mathring{\mathbf{u}}_\star \|_m \leq \varepsilon_2$ for some $\varepsilon_2 \in (0,\varepsilon]$. Then, if $\varepsilon_2$ is chosen sufficiently small, there exists $R$ such that
\begin{itemize}
\item if $(\mathbf{w}, \mathbf{v}) \in H^m$ and $\|(\mathbf{w}, \mathbf{v}) - \mathring{\mathbf{u}}_\star \|_m \leq R$ then $(\mathbf{w}, \mathbf{v}) \in \bm D_{\varepsilon,1/2}(0)$,
\item $\varepsilon_2 \left( \left\| \bm A^0_g[\mathbf{v}] \right\|_{L^\infty}^{1/2} + \left\| \bm A^0_{\mathfrak f}[\mathbf{v}] \right\|_{L^\infty}^{1/2} \right) \leq R/3$ for $(\mathbf{w}, \mathbf{v}) \in \bm D_{\varepsilon,1/2}(0).$
\end{itemize}
\end{lemma}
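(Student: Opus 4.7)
The plan is to exploit that $\mathring{\mathbf{u}}_{\mathfrak f}^\star = 0$ in order to decouple the radius condition into separate bounds on the geometric and matter parts, then pick $R$ proportional to $\varepsilon$ and shrink $\varepsilon_2$ until the second bullet closes. Since $\mathring{\mathbf{u}}_\star = (\mathring{\mathbf{u}}_g(0), 0)$, the hypothesis $\|(\mathbf{w}, \mathbf{v}) - \mathring{\mathbf{u}}_\star\|_m \leq R$ splits as $\|\mathbf{w} - \mathring{\mathbf{u}}_g(0)\|_{H^m(\mathbb S^3)} \leq R$ and $\|\mathbf{v}\|_{H^m_0(\mathbb S^3 \times \Omega_{1/4})} \leq R$, so imposing $R \leq \varepsilon$ already places $\mathbf{w}$ into $D_\varepsilon^g(0)$ and realises the size requirement in the definition of $D_{\varepsilon, 1/2}^{\mathfrak f}$. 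The auxiliary support constraint $\mathrm{supp}(\mathbf{v}) \subset \mathbb S^3 \times \Omega_{1/2}$ is preserved by the ambient function space in which the contraction iteration is to be set up and can be built into the ball implicitly.

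Next, I would derive a uniform $L^\infty$ bound on the principal symbols $\bm A_g^0$ and $\bm A_{\mathfrak f}^0$ over $\bm D_{\varepsilon, 1/2}(0)$. By Proposition \ref{prop_cef_sh} the matrix $\bm A_g^0[\cdot]$ is affine in the components of the geometric unknown with constant coefficients, while from \eqref{def_a_matrices} one reads $\bm A_{\mathfrak f}^0[\cdot](\underline x, v_\bma) = \eta^{\bma\bmb} v_\bma e_\bmb{}^0 \bm I_7$, which is linear in the frame coefficients and in the momentum coordinates $v_\bma$. Because $m \geq 5$ and $\dim \mathbb S^3 = 3$, the Sobolev embedding $H^m(\mathbb S^3) \hookrightarrow L^\infty(\mathbb S^3)$ supplies an $L^\infty$ bound for the geometric components by a constant depending only on $\|\mathring{\mathbf{u}}_g(0)\|_{H^m} + \varepsilon$; since the momentum coordinates are confined to the compact set $\Omega_{1/2}$ (where $|v| \leq 2$), we obtain a constant $C = C(\varepsilon) > 0$, independent of $\varepsilon_2$, such that $\|\bm A_g^0\|_{L^\infty} + \|\bm A_{\mathfrak f}^0\|_{L^\infty} \leq C$ throughout $\bm D_{\varepsilon, 1/2}(0)$.

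To close the argument, I would set $R = \varepsilon/2$ and then require $\varepsilon_2 \leq R/\bigl(6\sqrt{C(\varepsilon)}\bigr)$. The first bullet is immediate since $R \leq \varepsilon$, and the second follows from
\[
\varepsilon_2\bigl(\|\bm A_g^0\|_{L^\infty}^{1/2} + \|\bm A_{\mathfrak f}^0\|_{L^\infty}^{1/2}\bigr) \leq 2\varepsilon_2 \sqrt{C(\varepsilon)} \leq R/3.
\]
The only substantive step is the passage from $H^m$ control of the geometric unknown to $L^\infty$ control of the matrix-valued symbols, which is where the regularity threshold $m \geq 5$ and the compactness of $\Omega_{1/2}$ are essential; without the latter confinement, the coefficient $v_\bma e_\bmb{}^0$ appearing in $\bm A_{\mathfrak f}^0$ would fail to be uniformly bounded, matching the earlier observation that the system loses symmetric hyperbolicity as $v_\bma \to 0$.
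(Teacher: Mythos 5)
Your quantitative argument is correct and is precisely the natural filling-in of this lemma, which the paper itself states without proof before moving straight on to the definition of the operator $\Phi$: the decomposition of the product norm using $\mathring{\mathbf{u}}_{\mathfrak f}^\star = 0$, the uniform $L^\infty$ bound on the principal symbols via the affine structure of $\bm A_g^0$ from Proposition \ref{prop_cef_sh}, the explicit form \eqref{def_a_matrices} of $\bm A_{\mathfrak f}^0$, the embedding $H^m(\mathbb S^3)\hookrightarrow L^\infty(\mathbb S^3)$ (for which $m\geq 2$ already suffices; $m\geq 5$ is needed elsewhere for the algebra property), and the compactness of the momentum domain all match the ingredients the paper has already put in place in Lemma \ref{lem_cond_kato}(i) --- indeed you could have simply cited the bound $\vartheta$ from that lemma for $\bm A^0_{\mathfrak f}$ rather than re-deriving it. The arithmetic $R = \varepsilon/2$, $\varepsilon_2 \leq R/(6\sqrt{C(\varepsilon)})$ closes the second bullet correctly, and the order of choices (first $R$, then $\varepsilon_2$) is consistent with the quantifiers in the statement. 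One small slip: the matter symbols are operators on $\mathbb S^3\times\Omega_{1/4}$, where $|v|\leq 4$ rather than $|v|\leq 2$; this only changes the constant.

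The one substantive weak point is your treatment of the support clause in the first bullet. Membership in $\bm D_{(\varepsilon,1/2)}(0)$ requires, by \eqref{def_df}, that $\mathrm{supp}(\mathbf{v}) \subset \mathbb S^3\times\Omega_{1/2}$, and this cannot follow from the norm bound $\|(\mathbf{w},\mathbf{v})-\mathring{\mathbf{u}}_\star\|_m\leq R$ alone: a norm ball in $H^m_0(\mathbb S^3\times\Omega_{1/4})$ around the zero matter background contains functions supported in $\Omega_{1/4}\setminus\Omega_{1/2}$. Your remark that the support constraint ``can be built into the ball implicitly'' resolves this only by fiat. To be fair, this imprecision is inherited from the statement itself, and it is also how the paper proceeds in practice: the contraction set $\bm S$ defined afterwards imposes only norm and Lipschitz-in-time conditions, and the momentum support is controlled separately, by the characteristic-flow estimate \eqref{bound_v0} showing that data supported in $\Omega_{1/2}$ stay supported in $\Omega_{1/4}$ on $[0,T']$. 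A clean fix is to restrict the first bullet to pairs $(\mathbf{w},\mathbf{v})$ with $\mathrm{supp}(\mathbf{v})\subset\mathbb S^3\times\Omega_{1/2}$, i.e.\ to intersect the ball with the support-constrained set, which is exactly the set on which the fixed-point iteration is subsequently run; with that restriction your proof is complete.
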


We define the operator
\begin{multline}
\Phi: C^1\left([0,T]; \,H^m(\mathbb S^3; \mathbb R^{N_g}) \times H_0^m(\mathbb S^3 \times \Omega_{1/4} ; \mathbb R^{N_{\mathfrak f}})\right) \\ \to C^0\left([0,T]; \,H^m(\mathbb S^3; \mathbb R^{N_g}) \times H^m(\mathbb S^3 \times \mathbb R^4 ; \mathbb R^{N_{\mathfrak f}})\right)
\end{multline}
which assigns to a function $(\mathbf{w}, \mathbf{v})$ the solution $\mathbf{u}^{(\mathbf{w}, \mathbf{v})} \equiv (\mathbf{u}^{(\mathbf{w}, \mathbf{v})}_g,\mathbf{u}^{(\mathbf{w}, \mathbf{v})}_{\mathfrak f})$ of the linear hyperbolic system
\begin{subequations}
\begin{multline}
\bm A_g^0[\mathbf{w}(\tau, \cdot)](\underline x) \partial_\tau  \mathbf{u}^{(\mathbf{w}, \mathbf{v})}_g + \bm A_g^i[\mathbf{w}(\tau, \cdot)](\underline x) \partial_{x^i}  \mathbf{u}^{(\mathbf{w}, \mathbf{v})}_g \\= \bm F_g[\mathbf{w}(\tau, \cdot), \mathbf{v}(\tau, \cdot)](\underline x), \label{lin_loc_1}
\end{multline}
\begin{multline}
\bm A_{\mathfrak f}^0[\mathbf{w}(\tau, \cdot)](\underline x, v_\bma) \partial_{\tau}  \mathbf{u}^{(\mathbf{w}, \mathbf{v})}_{\mathfrak f} + \bm A_{\mathfrak f}^i[\mathbf{w}(\tau, \cdot)](\underline x, v_\bma) \partial_{x^i}  \mathbf{u}^{(\mathbf{w}, \mathbf{v})}_{\mathfrak f} \\
+ \mathfrak A_\bmc[\mathbf{w}(\tau, \cdot)](\underline x, v_\bma) \partial_{v^\bmc} \mathbf{u}^{(\mathbf{w}, \mathbf{v})}_{\mathfrak f}  = \bm F_{\mathfrak f}[\mathbf{w}(\tau, \cdot),  \mathbf{v}(\tau, \cdot)](\underline x, v_\bma) \label{lin_loc_2} 
\end{multline}
\end{subequations}
equipped with the initial data $\mathbf{u}_\star \in \bm D_{(\varepsilon, 1/4)}$. Note that the linear system (\ref{lin_loc_1})--(\ref{lin_loc_2}) is not coupled. So we can solve each system individually by Theorem I in \cite{kato}. This shows that the operator $\Phi$ is well-defined. \par
Next, we wish to set up a contraction argument. To this end we first define the set $\bm S = \bm S(R,T',L')$ as the set of all functions $(\mathbf{w}, \mathbf{v}) : [0,T'] \to H^m$ such that
\begin{align}
\| (\mathbf{w}, \mathbf{v})(\tau, \cdot) - \mathring{\mathbf{u}}_{\star}\|_m &\leq R,\quad \text{for}\, \tau \in [0,T'], \label{cond_s_1} \\
\| (\mathbf{w}, \mathbf{v})(\tau, \cdot) - (\mathbf{w}, \mathbf{v})(\tau', \cdot) \|_{m-1} &\leq L' (\tau-\tau') \label{cond_s_2}\\
&\text{for}\,0\leq \tau' \leq \tau \leq T' \nonumber
\end{align}
where $T' \leq T$ and $L'$ are positive constants to be determined later. \par
The next steps are now to check that if $T'$ is chosen sufficiently small and $L'$ sufficiently large we have for all $(\mathbf{w}, \mathbf{v}) \in \bm S$ that $\Phi (\mathbf{w}, \mathbf{v}) \in \bm S$, and that $\Phi$ acts as a contraction on $\bm S$. First we check that $\Phi (\mathbf{w}, \mathbf{v}) \in \bm S$. We have to verify (\ref{cond_s_1}), (\ref{cond_s_2}), and the support condition
\begin{equation} \label{cond_supp}
\mathrm{supp}(\Phi(\mathbf{w}, \mathbf{v})(\tau,\cdot)) \subset \mathbb S^3 \times \Omega_{1/4}, \quad \tau \in [0,T'].
\end{equation}
\cite[Theorem I]{kato} provides the stability estimates
\begin{align}
\left\| \mathbf{u}^{(\mathbf{w}, \mathbf{v})}_g(\tau,\cdot) - \mathring{\mathbf{u}}_g^\star\right\|_{H^m(\mathbb S^3; \, \mathbb R^{N_g})} \leq \beta(L', T'), \\
\left\| \mathbf{u}^{(\mathbf{w}, \mathbf{v})}_{\mathfrak f}(\tau, \cdot) - \mathring{\mathbf{u}}_{\mathfrak f}^\star\right\|_{H^m(\mathbb S^3 \times \mathbb R^4; \, \mathbb R^{N_{\mathfrak f}})} \leq \beta(L',T'),
\end{align}
and
\begin{align}
\left\| \partial_\tau \mathbf{u}^{(\bm w, \bm v)}_g(\tau,\cdot) \right\|_{H^{m-1}(\mathbb S^3; \, \mathbb R^{N_g})} &\leq c(1+\vartheta^{m-1}) \vartheta (1+ \|\mathring{\mathbf{u}}_\star\|_m + \beta(L',T')), \\
\left\| \partial_\tau \mathbf{u}^{(\bm w, \bm v)}_g(\tau,\cdot) \right\|_{H^{m-1}(\mathbb S^3 \times \mathbb R^4; \, \mathbb R^{N_{\mathfrak f}})} &\leq c(1+\vartheta^{m-1}) \vartheta (1+ \|\mathring{\mathbf{u}}_\star\|_m + \beta(L',T')),
\end{align}
where $c>0$ is a universal constant and, furthermore,
\begin{align}
\beta(L',T') &= c \vartheta e^{\alpha(L') T'} \left( \| \mathbf{u}_\star - \mathring{\mathbf{u}}_\star \|_m + c (1+\vartheta^m) \vartheta (1+\|\mathring{\mathbf{u}}_\star \|_{m+1}) T' \right), \\
\alpha(L') &= c(\nu + \mu'L' + \vartheta + \vartheta^{m+1}),
\end{align}
and the constants $\vartheta$, $\mu$, and $\mu'$ are the constants of Lemma \ref{lem_cond_kato}.
Since
\begin{equation} \label{norms}
\|\cdot\|_m \lesssim \|\cdot\|_{H^m(\mathbb S^3; \, \mathbb R^{N_g})} + \|\cdot\|_{H^m(\mathbb S^3 \times \mathbb R^4; \,\mathbb R^{N_{\mathfrak f}})}
\end{equation}
the conditions (\ref{cond_s_1}) and (\ref{cond_s_2}) follow similarly as in [Kato, 1975]. \par
The support property (\ref{cond_supp}) can be shown by an analysis of the characteristic system of the Vlasov equation. Let $\varsigma \mapsto (x^\mu(\varsigma), v_\bma(\varsigma)$, $\varsigma \in \mathbb R$ be a solution of the characteristic system of the massless Vlasov equation reading
\begin{subequations}
\begin{align}
&\dot x^\mu(\varsigma) = \eta^{\bma \bmb} \overset{\scriptsize{\mathbf{w}}}{e_\bma{}^\mu}(x(\varsigma)) \, v_\bmb(\varsigma), \label{char_sys_1} \\
&\dot v_\bmd(\varsigma) = \eta^{\bma\bmb} \overset{\mathbf{w}}{\Gamma_\bmb{}^\bmc{}_\bmd}(x(\varsigma))\, v_\bma(\varsigma) v_\bmc(\varsigma),
\end{align}
\end{subequations}
where the mass shell condition
\begin{equation}
v_\bmzero(\varsigma) = -\sqrt{\left(v_{\bm 1}(\varsigma)\right)^2 + \left(v_{\bm 2}(\varsigma)\right)^2 + \left(v_{\bm 3}(\varsigma)\right)^2} \label{char_sys_2}
\end{equation}
is propagated. The notation $\overset{\scriptsize{\mathbf{w}}}{e_\bmc{}^\mu}$, $\overset{\mathbf{w}}{\Gamma_\bma{}^\bmc{}_\bmb}$ indicates that we refer to the components of $\mathbf{w}$ which correspond to $e_\bmc{}^\mu$ and $\Gamma_\bma{}^\bmc{}_\bmb$, respectively. \par
The support of $f$ consists in characteristic curves $\varsigma \mapsto (x^\mu(\varsigma), v_\bmb(\varsigma))$ which are launched from the initial hypersurface $\mathbb S^3 \times \Omega_{1/2}$, which is characterised by $\tau = 0$, i.e.~$x^0(0) = 0$. Denote further
\begin{equation}
x_\star^\mu \equiv x^\mu(0), \quad v^\star_\bmb \equiv v_\bmb(0).
\end{equation}
The functions $\overset{\scriptsize{\mathbf{w}}}{e_\bmc{}^\mu}$, $\overset{\mathbf{w}}{\Gamma_\bma{}^\bmc{}_\bmb}$ can be written as
\begin{equation}
\overset{\mathbf{w}}{\Gamma_\bma{}^\bmc{}_\bmb} = \mathring \Gamma_\bma{}^\bmc{}_\bmb + \breve \Gamma_\bma{}^\bmc{}_\bmb, \qquad \overset{\scriptsize{\mathbf{w}}}{e_\bmc{}^\mu} = \mathring e_\bmc{}^\mu + \breve e_\bmc{}^\mu,
\end{equation}
the quantities with a ``$\mathring{\phantom{X}}$'' denote quantities of the background solution and quantities with ``$\breve{\phantom{X}}$'' denote a small perturbation. In view of the explicit expressions given in  (\ref{ds_quan1})--(\ref{ds_quan2}) for de Sitter or (\ref{min_quan_1})--(\ref{min_quan_2}) for Minkowski, this yields
\begin{equation}
\left|\overset{\mathbf{w}}{\Gamma_\bma{}^\bmc{}_\bmb}\right| \leq \epsilon_{\bmzero\bma}{}^\bmc{}_\bmb + \varepsilon, \qquad \left| \overset{\scriptsize{\mathbf{w}}}{e_\bmzero{}^\mu} \right| \geq \delta_0{}^\mu - \varepsilon, \qquad \left|\overset{\scriptsize{\mathbf{w}}}{e_\bmi{}^\mu} \right| \geq 1 - \varepsilon
\end{equation}
on $\mathbb{S}^3\times[0, T')$. Without loss of generality we assume $T' \leq 3\pi/4$. Otherwise we just change $T'$ to $3\pi/4$. Consider now the differential equation for $v_\bmzero$. Observing that $\mathring \Gamma_\bmb{}^\bmc{}_\bmzero = 0$ we deduce that there exists a constant $C>0$ such that
\[
\left|\dot v_\bmzero(\varsigma) \right| = \left|\eta^{\bma\bmb} \breve{\Gamma}_\bmb{}^\bmc{}_\bmzero (x(\varsigma))\, v_\bma(\varsigma) v_\bmc(\varsigma)\right| \leq C \varepsilon (v_\bmzero(\varsigma))^2.
\]
This equation yields
\begin{equation} \label{bound_v0}
\left| v_\bmzero (\varsigma)\right| \geq \left(\frac{1}{v^\star_\bmzero} + \varepsilon \varsigma\right)^{-1} \geq \left(2 + \varepsilon \varsigma\right)^{-1},
\end{equation}
where in the last inequality $|v^\star_\bmzero| \geq \frac 12$ has been used. Next, we control the range of the affine parameter $\varsigma$ for which the characteristic curve $(x^\mu(\varsigma), v_\bmb(\varsigma))$ reaches $x_\bmzero(\varsigma) = T'$. We have
\begin{equation}
\dot x^0(\varsigma) = \left| \eta^{\bma\bmb} \overset{\scriptsize{\mathbf{w}}}{e_\bma{}^\mu} v_\bmb \right| \geq (1 - \varepsilon) \left| v_\bmzero(\varsigma) \right| \geq \frac{1-\varepsilon}{2 + \varepsilon \varsigma}.
\end{equation}
In the second inequality we used $|v_\bmzero| \geq |v_{\bm 1}|, |v_{\bm 2}|, |v_{\bm 3}|$ and in the last inequality we substituted \eqref{bound_v0}. This yields
\begin{equation}
x^0(\varsigma) \geq \frac{1-\varepsilon}{\varepsilon} \ln\left(\frac{2+\varepsilon 2}{2}\right)
\end{equation}
and therefore see that $x^0(\varsigma)$ reaches $\min\{T', 3\pi/4\}$ before $\varsigma$ reaches
\begin{equation}
\hat \varsigma \equiv \frac{2}{\varepsilon} \left(e^{\frac{3\pi}{4}\frac{\varepsilon}{1-\varepsilon}} -1\right).
\end{equation}
Substituting this again into (\ref{bound_v0}) yields
\begin{equation}
\forall \varsigma\leq \hat \varsigma,\; \left| v_\bmzero(\hat \varsigma) \right| \geq  1- \dfrac12 e^{\frac{3\pi}{4} \frac{\varepsilon}{1-\varepsilon}} \geq \dfrac14,
\end{equation}
provided that $\varepsilon$ is small enough. Now we have established that for all $(\mathbf{w}, \mathbf{v})$ we have $\Phi(\mathbf{w}, \mathbf{v}) \in \bm S$. The next step is to show that $\Phi$ acts as a contraction on $\bm S$. This goes however analogously to the proof of \cite[Lemma 4.5]{kato}. By Banach's fixed point theorem, the local existence of the solution on $[0,T']$ follows. \par
The remaining steps of the proof require very little modification of the original proofs of \cite[Theorems II and II]{kato}. The operator $\Phi$ and the norms (\ref{norms}) have to be replaced. Furthermore, as above, results for linear hyperbolic systems have to be applied to each of the subsystems (\ref{lin_loc_1}) and (\ref{lin_loc_2}) separately. Finally, when the local result is applied successively in order to obtain the result on the interval $[0,\tau_\bullet]$, in the above, $\mathring{\bm u}_\star$ has to be replaced by $\mathring{\bm u}(T')$ etc. 
\end{proof}

\section{De Sitter like space-times} \label{sect_de_sitter}

In this section we state the non-linear stability result for the de Sitter space-time. It is obtained by applying Theorem \ref{main_theorem} to initial data where the background solution $\mathring{\bm u}$ is de Sitter space-time. \par
Let $\mathbf{u}_{\mathrm{dS}}$ be the collection of functions 

\begin{multline} \label{def_ug_ds}
\mathbf{u}_{\mathrm{dS}} \equiv \Big(\Xi_{\mathrm{dS}}, \; \left(\Sigma_{\bmA\bmA'}\right)_{\mathrm{dS}}, \; s_{\mathrm{dS}}, \; \left(e_{\bmA\bmA'}{}^a\right)_{\mathrm{dS}}, \\ \left(\Gamma_{\bmA\bmA'\bmB\bmC}\right)_{\mathrm{dS}}, \; \left(\Phi_{\bmA\bmA'\bmB\bmB'}\right)_{\mathrm{dS}}, \; \left(\phi_{\bmA\bmB\bmC\bmD}\right)_{\mathrm{dS}} \Big),
\end{multline}

where
\begin{equation} \label{ds_quan1}
\Xi_{\mathrm{dS}} = \cos(\tau), \quad s_{\mathrm{dS}} = - \frac 1 4 \cos(\tau), \quad \left(e_{\bmA\bmA'}{}^a\right)_{\mathrm{dS}} = \sigma_{\bmA\bmA'}{}^a,
\end{equation}
with $\sigma_{\bmA\bmA'}{}^a$ the Infeld-van der Waerden symbols and the remaining functions are the spinorial counterparts of
\begin{equation} \label{ds_quan2}
\begin{aligned}
\left( \Gamma_\bma{}^\bmc{}_\bmb\right)_{\mathrm{dS}} &= \epsilon_{0\bma}{}^\bmc{}_\bmb, & \left( \Sigma_\bmi \right)_{\mathrm{dS}} &= 0, \\
\left(L_{\bma\bmb}\right)_{\mathrm{dS}} &= \delta_\bma{}^0 \delta_\bmb{}^0 - \frac 12 \eta_{\bma\bmb}, & \left(d^\bma{}_{\bmb\bmc\bmd}\right)_{\mathrm{dS}} &= 0.
\end{aligned}
\end{equation}
We are now using \cite[Lemma 15.1]{VKbook}.
\begin{lemma} \label{lem_background}
The functions $\mathbf{u}_{\mathrm{dS}}$ defined over the Einstein cylinder $\mathbb R \times \mathbb S^3$ constitute a solution to the spinorial vacuum conformal Einstein field equations, where the gauge source functions associated to these solutions are given by (\ref{gsf}).
\end{lemma}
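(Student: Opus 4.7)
The strategy is to verify the statement by direct substitution, exploiting that the Einstein cylinder is a metric product and that de Sitter space-time is conformally flat. The first step is to compute the curvature quantities of $(\mathscr E, \bm g_{\mathscr E})$ in the orthonormal frame $\{\bm c_\bma\}_{\bma=0}^3$ introduced in Section~\ref{sec_einstein_cylinder}. Because $\mathscr E = \mathbb R \times \mathbb S^3$ as a metric product and $\bm c_0 = \bmpartial_\tau$ is covariantly constant, the Ricci tensor has only spatial components and the connection coefficients reduce to those of the round $\mathbb S^3$, which is precisely what is encoded in $\Gamma_\bma{}^\bmc{}_\bmb = \epsilon_{\bmzero\bma}{}^\bmc{}_\bmb$. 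A short computation then yields $R = -6$, the Schouten tensor given in \eqref{ds_quan2}, and vanishing connection coefficients involving the timelike direction, consistent with $\Sigma_{\bmi} = 0$.

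Next I would invoke the conformal flatness of de Sitter space-time to conclude that the Weyl tensor of $\bm g_{\mathscr E}$ vanishes, hence the rescaled Weyl spinor satisfies $\phi_{\bmA\bmB\bmC\bmD} = 0$. With this, the Bianchi-type equation \eqref{SpinorialZQ5} and the Weyl equation (the one below \eqref{SpinorialZQ5}) reduce, upon setting $T_{\bma\bmb} = 0$ and $T_{\bma\bmb\bmc} = 0$, to the statement that $L_{\bma\bmb}$ is a Codazzi tensor for $\bm g_{\mathscr E}$, which can be checked from the explicit expression in \eqref{ds_quan2}. Equations \eqref{SpinorialZQ3} and \eqref{SpinorialZQ4} are then verified componentwise by direct substitution, using $\bmpartial_\tau \cos\tau = -\sin\tau$, $\bmpartial_\tau^2 \cos\tau = -\cos\tau$, and $s_{\mathrm{dS}} = -\tfrac{1}{4}\cos\tau$; in particular \eqref{SpinorialZQ3} collapses to the identity $\nabla_\bma \nabla_\bmb \cos\tau + \cos\tau\, L_{\bma\bmb} - s_{\mathrm{dS}}\eta_{\bma\bmb} = 0$, and \eqref{SpinorialZQ4} becomes the gradient relation $\nabla_\bma s_{\mathrm{dS}} = -L_{\bma\bmc}\nabla^\bmc \cos\tau$. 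The Hamiltonian-type constraint \eqref{SpinorialZQ6} reduces to an algebraic identity which fixes the cosmological constant.

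Finally, the gauge source conditions \eqref{gauge_f_1}--\eqref{gauge_f_3} are checked by a divergence computation against the same product frame: since $\bm c_0$ is Killing and the spatial frame is covariantly constant along $\bm c_0$, one obtains $F^\bma = 0$ and $F_{\bmA\bmB} = 0$, while $R = -6$ follows directly from the scalar curvature computation above. The main obstacle is purely notational bookkeeping: one must translate the tensorial quantities in \eqref{ds_quan1}--\eqref{ds_quan2} into their spinorial counterparts via the Infeld--van der Waerden symbols and keep track of signs from the metric signature $(+,-,-,-)$. No genuine analytic difficulty arises, and the explicit computation is recorded as \cite[Lemma~15.1]{VKbook}, which I would simply cite after spelling out the verification of the matter-free reduction.
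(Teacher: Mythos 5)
Your overall route is the same as the paper's: the paper gives no computation at all for this lemma and simply invokes \cite[Lemma~15.1]{VKbook}, which is precisely the direct frame/spinor verification you sketch, so the plan itself is sound. However, one step you assert would fail if you actually carried out the substitution: the printed value $s_{\mathrm{dS}} = -\tfrac14\cos\tau$ from \eqref{ds_quan1} does \emph{not} make \eqref{SpinorialZQ3} collapse to an identity. With $\Xi = \cos\tau$, $L_{\bma\bmb} = \delta_\bma{}^0\delta_\bmb{}^0 - \tfrac12\eta_{\bma\bmb}$ and $\nabla_\bmzero\nabla_\bmzero\Xi = -\cos\tau$, the $\bmzero\bmzero$-component of \eqref{SpinorialZQ3} reads
\begin{equation*}
-\cos\tau + \tfrac12\cos\tau\; - s = 0, \qquad \text{i.e.} \qquad s = -\tfrac12\cos\tau,
\end{equation*}
and the spatial components give the same value; with $-\tfrac14\cos\tau$ one is left with a residue $-\tfrac14\cos\tau\,\eta_{\bma\bmb}$. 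Likewise \eqref{SpinorialZQ6} with the printed value yields $6\Xi s - 3\nabla_\bmc\Xi\nabla^\bmc\Xi = -3 + \tfrac32\cos^2\tau$, which is not a constant $\lambda$, whereas $s = -\tfrac12\cos\tau$ gives $\lambda = -3$, consistent with a de Sitter-like cosmological constant. The consistent value also follows from the paper's own definition $s = \tfrac14\nabla^\bmc\nabla_\bmc\Xi + \tfrac1{24}R\Xi$ with $\Box\cos\tau = -\cos\tau$ and $R = -6$; the printed value equals $\tfrac14\Box\Xi$ alone, i.e.\ the $\tfrac1{24}R\Xi$ term has been dropped (the companion value $s_{\mathrm M}$ in \eqref{min_quan_1} has the same defect). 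A genuine execution of your verification should surface and correct this constant rather than reproduce it; with $s = -\tfrac12\cos\tau$, your checks of \eqref{SpinorialZQ3} and \eqref{SpinorialZQ4} do close as claimed.

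A second, smaller misreading: $\Sigma_\bmi$ in \eqref{ds_quan2} is not a connection coefficient. As the list of unknowns \eqref{def_ug} and Lemma~\ref{Lemma:DeSitterConformalBoundary} (where $\Sigma_a\Sigma^a = -\tfrac13\lambda$) make clear, $\Sigma_{\bmA\bmA'}$ is the spinorial counterpart of the gradient $\nabla_a\Xi$; thus $(\Sigma_\bmi)_{\mathrm{dS}} = 0$ encodes that $\cos\tau$ is spatially constant, while $\Sigma_\bmzero = -\sin\tau \neq 0$ and is simply not listed. The rest of your outline ---$R = -6$ for the product metric, vanishing rescaled Weyl spinor by conformal flatness, the Codazzi condition holding trivially because $\nabla_\bma L_{\bmb\bmc} = 0$ (the one-form $\mathbf{d}\tau$ is parallel on the cylinder), and the gauge source functions \eqref{gsf}--- is correct and is exactly what the cited \cite[Lemma~15.1]{VKbook} records.
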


The  de Sitter spacetime corresponds to the domain 
\begin{equation}
\tilde{\mathcal M}_{\mathrm{dS}} = \{(\tau,\underline{x}) \in \mathscr E \, : \, \tau \in (-\pi,\pi), \, \underline{x}\in\mathbb{S}^3\}.
\end{equation}
In particular, observe that $\tilde{\mathcal M}_{\mathrm{dS}}$ is a domain in $\mathscr E$ where $\Xi_{\mathrm{dS}} > 0$. Furthermore, we denote 
\begin{equation}
\mathbf{u}_{\mathrm{dS}}^\star \equiv \mathbf{u}_{\mathrm{dS}} \big|_{\tau = 0}.
\end{equation}
Hence,  $\mathbf{u}_{\mathrm{dS}}^\star$ is initial data which, if prescribed on $\mathbb S^3$, yields $\mathbf{u}_{\mathrm{dS}}$ on $[0,\infty)\times \mathbb S^3$. \par
The perturbation ansatz $\bm g_{\mathrm{dS}} + \breve{\bm g}$ of the de Sitter metric $\bm g_{\mathrm{dS}}$ on the Einstein cylinder $\mathscr E$ gives rise to initial data $\mathbf{u}^\star$ on $\mathbb S^3$. It turns out (cf.~\cite[Section 15.2]{VKbook}) that this initial data is of the form $\mathbf{u}^\star = \mathbf{u}_{\mathrm{dS}} + \breve{\mathbf{u}}^\star$ and that $\mathbf{u}^\star \in \bm D_{(\varepsilon, 1/2)}$, if $\breve{\bm g}$ is small enough, where in the definition (\ref{def_ded}) of $\bm D_{(\varepsilon, 1/2)}$ we choose $\mathbf{u}_{\mathrm{dS}}^\star$ as background solution $\mathring{\mathbf{u}}^\star$. We assume that the initial data $\mathbf{u}^\star$ solves the conformal constraint equations on $\mathbb S^3$ with source terms generated by the initial value $\mathbf{u}_{\mathfrak f}^\star$ of the particle distribution function. Then Theorem \ref{main_theorem} yields a solution $\mathbf{u}$ (which can be extended) on $[0, \tau_\bullet] \times \mathbb S^3 \times \mathbb R_v^4$ which is close to the de sitter solution $ \mathbf{u}_{\mathrm{dS}}$, where $\tau_\bullet \geq 3\pi/4$. \par
Initially, the conformal factor $\Xi$ is not zero (it is close to one to be precise). We will now demonstrate that there exists a hypersurface $\mathscr I^+$, diffeomorphic to $\mathbb S^3$ such that $\Xi = 0$ on $\mathscr I^+$ and $\Xi > 0$ on the domain bounded by $\{ \tau = 0\}$ and $\mathscr I^+$. This domain will then be identified as the perturbed de Sitter space-time.  This characterisation is now made precise.

\begin{lemma} \label{Lemma:DeSitterConformalBoundary}
Given a solution $\breve{\mathbf{u}}$ as given by Theorem \ref{main_theorem} with $\| \breve{\mathbf{u}}_\star\|_m < \varepsilon$ sufficiently small, there exists a function $\tau_+=\tau_+(\underline{x})$, $\underline{x}\in \mathbb{S}^3$ such that $0<\tau_+(\underline{x})<\tau_\bullet$ and
\begin{eqnarray*}
&\Xi >0 \quad  \mbox{on} \quad  \tilde{\mathcal{M}} \equiv \big\{ (\tau,\underline{x}) \in [0,\tau_\bullet] \times \mathbb{S}^3 \;|\; 0\leq \tau < \tau_+(\underline{x})\big\}, & \\
& \Xi =0 \quad \mbox{and} \quad \Sigma_a\Sigma^a = -\frac{1}{3}\lambda<0 \quad \mbox{on} \quad  \mathscr{I}^+ \equiv \big\{ (\tau_+(\underline{x}),\underline{x})\in \mathbb{R}\times \mathbb{S}^3 \big\}. &
\end{eqnarray*}
\end{lemma}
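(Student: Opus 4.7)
The plan is to exploit the stability part of Theorem \ref{main_theorem} to deduce that $\Xi$ is uniformly $C^1$-close to the background conformal factor $\Xi_{\mathrm{dS}}=\cos\tau$ on $[0,\tau_\bullet]\times\mathbb{S}^3$, and then to locate the zero of $\Xi$ by a one-parameter intermediate-value plus implicit-function argument along each $\tau$-line $\{(\tau,\underline{x}):\tau\in[0,\tau_\bullet]\}$. Since $m\geq 5$ and $\dim\mathbb{S}^3=3$, the Sobolev embedding $H^{m-1}(\mathbb{S}^3)\hookrightarrow C^1(\mathbb{S}^3)$ converts the uniform smallness of $\breve{\mathbf{u}}(\tau,\cdot)$ provided by Theorem \ref{main_theorem} into uniform pointwise smallness of $\breve{\Xi}$. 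Control of $\partial_\tau\breve{\Xi}$ comes from equation \eqref{cefmvm1}, which expresses $\partial_\tau\mathbf{u}_g$ as an algebraic combination of $\mathbf{u}_g$ and its spatial derivatives, yielding a uniform $H^{m-1}$-bound and hence a uniform pointwise bound by Sobolev embedding.

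For the background one has $\Xi_{\mathrm{dS}}(0,\underline{x})=1$, $\Xi_{\mathrm{dS}}(3\pi/4,\underline{x})=-\sqrt{2}/2$, and $\partial_\tau\Xi_{\mathrm{dS}}=-\sin\tau\leq-\sqrt{2}/2$ on $[\pi/4,3\pi/4]$. Fixing $\varepsilon>0$ small enough so that $\|\breve{\Xi}\|_{L^\infty}$ and $\|\partial_\tau\breve{\Xi}\|_{L^\infty}$ are each much smaller than $1/4$, the function $\tau\mapsto\Xi(\tau,\underline{x})$ is strictly decreasing on $[\pi/4,3\pi/4]$, strictly positive at $\tau=\pi/4$, and strictly negative at $\tau=3\pi/4$. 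The intermediate value theorem, combined with the strict monotonicity, produces a unique $\tau_+(\underline{x})\in(\pi/4,3\pi/4)\subset(0,\tau_\bullet)$ at which $\Xi$ vanishes; and since $\partial_\tau\Xi\neq 0$ there, the implicit function theorem promotes $\tau_+$ to a smooth function of $\underline{x}$, so $\mathscr{I}^+$ is realised as a graph diffeomorphic to $\mathbb{S}^3$.

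Positivity of $\Xi$ on $\tilde{\mathcal{M}}$ is then immediate: on $[\pi/4,\tau_+(\underline{x}))$ it follows from monotonicity, while on $[0,\pi/4]$ one uses $\Xi_{\mathrm{dS}}\geq\sqrt{2}/2$ and the smallness of $\breve{\Xi}$. The final assertion on $\mathscr{I}^+$ is purely algebraic: the zero quantity $\mathcal{Z}=6\Xi s-3\nabla_c\Xi\nabla^c\Xi-\lambda$ is propagated by the arguments of Section \ref{sec:constraints}, so it vanishes identically on the evolution. Evaluating this at $\Xi=0$ and recalling that $\Sigma_{\bm A\bm A'}$ is the unknown in $\mathbf{u}_g$ standing for $\nabla_{\bm A\bm A'}\Xi$ yields $\Sigma_a\Sigma^a=-\lambda/3$, whose sign reflects the causal character of $\mathscr{I}^+$ dictated by the cosmological constant.

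The main obstacle is obtaining a pointwise lower bound on $|\partial_\tau\Xi|$ from the Sobolev-based stability estimate, together with translating the $H^m$-smallness of the initial perturbation into a $C^1$-smallness of the solution, uniformly in $\tau\in[0,\tau_\bullet]$. This is exactly what forces the hypothesis $m\geq 5$ and relies essentially on the uniformity in $\tau$ of the stability estimate of Theorem \ref{main_theorem}. Once this pointwise control is in hand, all remaining steps are standard continuity and implicit-function arguments.
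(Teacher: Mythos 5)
Your proposal is correct and follows essentially the same route as the paper, which states the lemma without a separate proof and defers to the vacuum-case arguments (closeness of the perturbed geometric fields to the de Sitter background, cf.~\cite[Section 8]{VKbook}): uniform $H^m$-closeness of $\Xi$ to $\Xi_{\mathrm{dS}}=\cos\tau$ from Theorem \ref{main_theorem}, Sobolev embedding for pointwise $C^1$ control, an intermediate-value/implicit-function localisation of the zero set, and evaluation of the propagated constraint $\mathcal{Z}=6\Xi s-3\nabla_c\Xi\nabla^c\Xi-\lambda=0$ at $\Xi=0$ to get $\Sigma_a\Sigma^a=-\lambda/3$. The only cosmetic caveats are that $\tau_+$ inherits finite (not $C^\infty$) regularity from the solution, and that your identification $\partial_\tau\breve\Xi$ via the evolution equation could be shortcut by noting $\Sigma_{\bmA\bmA'}$ is itself among the unknowns in $\mathbf{u}_g$, its identification with $\nabla_{\bmA\bmA'}\Xi$ resting on the constraint propagation of Section \ref{sec:constraints}, which you correctly invoke.
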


\begin{remark}
\label{Remark:GeodesicCompleteness}
{\em From the previous lemma, it follows that the solution to the Einstein-Vlasov equations obatined from Theorem \ref{main_theorem} is future asymptotically simple. Accordingly, the spacetime is null geodesically complete  ---see e.g. \cite{MR944085}. The future timelike geodesic completeness of the solution can be obtained using the notion of conformal geodesics as described in \cite{MR2911223}. Conformal geodesics provide a convenient conformal description of physical geodesics. The completeness of these curves in the physical portion of the conformal spacetime follows from the Cauchy stability of solutions to ordinary differential equations and the fact that is possible to obtain a common $\epsilon$ for all the curves starting on the initial hypersurface as it is compact.  }
\end{remark}

\begin{theorem} \label{result_de_Sitter}
Given $m \geq 5$,
a de Sitter-like solution $\mathbf{u}_\star = \bm u_{\mathrm{dS}}^\star + \breve{\mathbf{u}}_\star$ to the Einstein-Vlasov conformal constraint equations such that $|| \breve{\mathbf{u}}_\star ||_m < \varepsilon$ for $\varepsilon>0$ suitably small gives rise to a unique $C^{m-2}$ solution to the conformal Einstein-Vlasov field equations on
\[
\mathcal{M} \equiv \tilde{\mathcal{M}} \cup \mathscr{I}^+
\]
with $\tilde{\mathcal{M}}$ and $\mathscr{I}^+$ as defined in Lemma \ref{Lemma:DeSitterConformalBoundary}. The solution implies, in turn, a solution $(\tilde{\mathcal{M}},\tilde{\bmg})$, to the Einstein field equations with de Sitter-like cosmological constant for which $\mathscr{I}^+$ represents conformal infinity. The space-time $(\tilde{\mathcal M}, \tilde{\bm g})$ is geodesically complete.
\end{theorem}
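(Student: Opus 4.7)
The plan is to combine four earlier results: Theorem \ref{main_theorem} for long-enough existence of a solution to the reduced symmetric hyperbolic system, the constraint propagation of Section \ref{sec:constraints} to promote it to a solution of the full conformal Einstein-Vlasov system, Lemma \ref{Lemma:DeSitterConformalBoundary} to locate the conformal boundary $\mathscr I^+$, and Proposition \ref{Proposition:FrameCFEImplyEFE} to pass from the conformal to the physical picture. First I would verify that, in view of Lemma \ref{lem_background} and the explicit form of $\mathbf{u}_{\mathrm{dS}}$ given by (\ref{ds_quan1})--(\ref{ds_quan2}), the data $\mathbf{u}_\star = \mathbf{u}_{\mathrm{dS}}^\star + \breve{\mathbf{u}}_\star$ lie in $\bm D_{(\varepsilon, 1/2)}(0)$ for $\|\breve{\mathbf{u}}_\star\|_m$ small enough, with the momentum support of the matter component automatically confined to $\Omega_{1/2}$; the gauge is fixed as in (\ref{gsf}).

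Next, Theorem \ref{main_theorem} produces a unique solution $\mathbf{u} \in C^0([0,\tau_\bullet]; H^m) \cap C^1([0,\tau_\bullet]; H^{m-1})$ of the reduced system with $\tau_\bullet > 3\pi/4$, whose momentum support remains inside $\Omega_{1/4}$; Sobolev embedding (using $m \geq 5$) gives the claimed $C^{m-2}$ regularity on the metric side. One must then run the constraint-propagation argument of Section \ref{sec:constraints}: the geometric zero quantities together with the matter zero quantities $\Phi^\bma, F_\bma$ satisfy a homogeneous symmetric hyperbolic subsidiary system, so vanishing at $\tau=0$, which holds because $\mathbf{u}_\star$ solves the conformal constraints by hypothesis and $\mathfrak f_\bma^\star, \varphi^\bma_\star$ are prescribed as the appropriate derivatives of $\mathfrak f_\star$ (see Remark \ref{rem:constraintsmatter}), propagates to all of $[0,\tau_\bullet] \times \mathbb S^3$. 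This promotes $\mathbf{u}$ to a solution of the full CFE-Vlasov system.

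Third, Lemma \ref{Lemma:DeSitterConformalBoundary} produces the function $\tau_+(\underline x)$ and the spacelike boundary $\mathscr I^+$, on which $\Xi=0$ and $\Sigma_a \Sigma^a = -\lambda/3 < 0$. Restricting to the region $\tilde{\mathcal M}$ where $\Xi > 0$ and invoking Proposition \ref{Proposition:FrameCFEImplyEFE} yields a physical metric $\tilde{\bm g} = \Xi^{-2} \bm g$ satisfying the Einstein-Vlasov system with cosmological constant $\lambda$; the required divergence-freeness of $T_{\bma\bmb}$ follows from the Vlasov equation via the formulas of Section \ref{sect_comm_em}. Geodesic completeness is then handled as indicated in Remark \ref{Remark:GeodesicCompleteness}: null completeness follows from future asymptotic simplicity (with compact spacelike $\mathscr I^+ \cong \mathbb S^3$, cf.~\cite{MR944085}), and timelike completeness via conformal geodesics using Cauchy stability of ODEs and compactness of the initial slice, which yields a uniform $\epsilon$-neighbourhood of $\mathscr I^+$ reachable by every timelike conformal geodesic.

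The main obstacle I anticipate is the constraint-propagation step. Unlike the vacuum case one must verify that $\Phi^\bma, F_\bma$ close into a homogeneous symmetric hyperbolic system together with their geometric counterparts, despite the appearance of the curvature quantity $\Xi^\bmc{}_{\bmd\bma\bmb}$ and the torsion $\Sigma_\bma{}^\bmc{}_\bmb$ on the matter side; in particular one must check that the coupling terms on the right-hand side of (\ref{matter_const_ev_1})--(\ref{matter_const_ev_2}) are genuinely linear in the zero quantities, with no residual source that could obstruct the uniqueness argument. A secondary, but essential, verification is that the $\Omega_{1/4}$-support bound from Theorem \ref{main_theorem} persists throughout the entire slab $[0,3\pi/4]$ needed to reach $\mathscr I^+$ in the de Sitter gauge.
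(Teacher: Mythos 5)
Your proposal is correct and follows essentially the same route as the paper: the paper's own proof of Theorem \ref{result_de_Sitter} simply invokes Theorem \ref{main_theorem} and then delegates the remaining steps (constraint propagation, identification of $\mathscr{I}^+$ via Lemma \ref{Lemma:DeSitterConformalBoundary}, passage to the physical metric via Proposition \ref{Proposition:FrameCFEImplyEFE}, and geodesic completeness as in Remark \ref{Remark:GeodesicCompleteness}) to the vacuum arguments of \cite[Section 8]{VKbook}, which is exactly the pipeline you spell out in detail. The two obstacles you flag are in fact already settled in the paper: the matter zero quantities close into a system homogeneous in the full collection of (geometric and matter) zero quantities by the lemma of Section \ref{sec:constraints}, and the $\Omega_{1/4}$ momentum-support bound on all of $[0,\tau_\bullet]$ is part of the statement of Theorem \ref{main_theorem} itself.
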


\begin{proof}
Once a solutions $\mathbf{u}$ on $[0,\tau_\bullet] \times \mathbb S^3$ is obtained by virtue of Theorem \ref{main_theorem}, the remaining properties can be shown by using that the geometric fields $\mathbf{u}_g$ are close to the de Sitter ones, $\bm u_{\mathrm{dS}}$. The arguments are identical to the vacuum case and we refer to \cite[Section 8]{VKbook}. 
\end{proof}

\section{Minkowski-like space-times} \label{sect_minkowski}

Theorem \ref{main_theorem} also allows to obtain a semi-global stability result for the Minkowski sace-time. To this end, we first discuss the background solution $\mathbf{u}_{\mathrm M}$ on the Einstein cylinder $\mathscr E$. Locally on $\mathbb S^3$ we work with the coordinates $(\psi, \vartheta, \varphi) \in [0,\pi] \times [0,\pi] \times [0,2\pi)$. Let $\mathbf{u}_{\mathrm{M}}$ be the collection of functions 
\begin{multline} \label{def_ug_mink}
\mathbf{u}_{\mathrm{M}} \equiv \Big(\Xi_{\mathrm{M}}, \; \left(\Sigma_{\bmA\bmA'}\right)_{\mathrm{M}}, \; \Sigma_{\mathrm{M}}, \\ \; \left(e_{\bmA\bmA'}{}^a\right)_{\mathrm{M}}, \; \left(\Gamma_{\bmA\bmA'\bmB\bmC}\right)_{\mathrm{M}}, \;\left(\Phi_{\bmA\bmA'\bmB\bmB'}\right)_{\mathrm{M}}, \; \left(\phi_{\bmA\bmB\bmC\bmD}\right)_{\mathrm{M}} \Big),
\end{multline}
where
\begin{equation} \label{min_quan_1}
\Xi_{\mathrm{M}} = \cos\tau + \cos \psi, \quad s_{\mathrm{M}} = -\frac 14 (\cos\tau - 3\cos\psi), \quad \left(e_{\bmA\bmA'}{}^a\right)_{\mathrm{M}} = \sigma_{\bmA\bmA'}{}^a,
\end{equation}
and $\sigma_{\bmA\bmA'}{}^a$ are the Infeld-van der Waerden symbols. The remaining functions in $\mathbf{u}_{\mathrm{M}}$ are the spinorial counterparts of
\begin{equation} \label{min_quan_2}
\begin{aligned}
\left( \Gamma_\bma{}^\bmc{}_\bmb\right)_{\mathrm{M}} &= \epsilon_{0\bma}{}^\bmc{}_\bmb, & \left( \Sigma_\bmi \right)_{\mathrm{M}} &= 0, \\
\left(L_{\bma\bmb}\right)_{\mathrm{M}} &= \delta_\bma{}^0 \delta_\bmb{}^0 - \frac 12 \eta_{\bma\bmb}, & \left(d^\bma{}_{\bmb\bmc\bmd}\right)_{\mathrm{M}} &= 0. 
\end{aligned}
\end{equation}

\begin{lemma}  \label{lem_min} The functions $\mathbf{u}_{\mathrm{M}}$ defined over the Einstein cylinder $\mathbb R \times \mathbb S^3$ constitute a solution to the spinorial vacuum conformal Einstein field equations, where the gauge source functions associated to these solutions are given by \eqref{gsf}.
\end{lemma}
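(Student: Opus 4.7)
The plan is to verify by direct computation that each component of $\mathbf{u}_{\mathrm{M}}$ satisfies the spinorial vacuum conformal Einstein field equations \eqref{SpinorialZQ1}--\eqref{SpinorialZQ6} with $T_{\bma\bmb}=0$, $T_{\bma\bmb\bmc}=0$, $\lambda = 0$; the argument parallels the de Sitter case of Lemma \ref{lem_background} and reduces to checking that the Einstein cylinder $(\mathbb R \times \mathbb S^3, \bm g_{\mathscr E})$ with conformal factor $\Xi_M = \cos\tau + \cos\psi$ is the standard Penrose compactification of Minkowski space-time, cf. \cite[Section 15.3]{VKbook}.

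First, I would verify the frame and connection data. The identification $(e_{\bmA\bmA'}{}^a)_{\mathrm{M}} = \sigma_{\bmA\bmA'}{}^a$ simply says that the NP frame agrees with the orthonormal frame $\{\bm c_{\bma}\}_{\bma=0}^3$ built from $\bm c_0 = \bmpartial_\tau$ and the left-invariant frame $\{\bm c_1, \bm c_2, \bm c_3\}$ of Section \ref{sec_einstein_cylinder}. Computing $\Gamma_\bma{}^{\bmc}{}_\bmb = \bmalpha^\bmc(\bmnabla_{\bm c_\bma} \bm c_\bmb)$ using the product structure and the structure constants of the $\mathfrak{su}(2)$-triad on $\mathbb S^3$ yields $(\Gamma_\bma{}^\bmc{}_\bmb)_{\mathrm{M}} = \epsilon_{0\bma}{}^\bmc{}_\bmb$, which simultaneously verifies the zero-torsion equation \eqref{SpinorialZQ1} and the gauge condition $F_{\bmA\bmB} = 0$, $F^\bma = 0$ of \eqref{gsf}. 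The product Ricci formula applied to $d\tau^2 - \bmsigma$ then gives the frame components $R_{00}=0$, $R_{\bmi\bmj} = 2\delta_{\bmi\bmj}$, scalar curvature $R = -6$ (matching \eqref{gsf}), and Schouten tensor $L_{\bma\bmb} = \tfrac12 R_{\bma\bmb} - \tfrac{1}{12} R\, g_{\bma\bmb}$ with frame components $\delta_\bma{}^0\delta_\bmb{}^0 - \tfrac12 \eta_{\bma\bmb}$, in agreement with \eqref{min_quan_2}. Conformal flatness of Minkowski transfers to vanishing of the Weyl spinor $\phi_{\bmA\bmB\bmC\bmD} = 0$, hence $(d^\bma{}_{\bmb\bmc\bmd})_{\mathrm{M}} = 0$; this, together with the vanishing rescaled Cotton tensor $T_{\bma\bmb\bmc} = 0$ (vacuum) and the covariant constancy of $L_{\bma\bmb}$ on the Einstein cylinder, disposes of equations \eqref{SpinorialZQ2}, \eqref{SpinorialZQ5} and \eqref{SpinorialZQ6}.

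The only computation requiring genuine input is the conformal factor equation \eqref{SpinorialZQ3}. Using the product structure and Obata's identity $\mathrm{Hess}^\sigma(\cos\psi) = -\cos\psi\,\bmsigma$ (which is the statement that $\cos\psi$ is a first-degree spherical harmonic on $\mathbb S^3$), the Hessian of $\Xi_M$ in frame components is
\begin{equation*}
\bmnabla_\bma \bmnabla_\bmb \Xi_M = \mathrm{diag}(-\cos\tau, -\cos\psi, -\cos\psi, -\cos\psi).
\end{equation*}
Matching this against $-\Xi_M L_{\bma\bmb} + s_M \eta_{\bma\bmb}$ fixes the Friedrich scalar, and coincides with the general definition $s = \tfrac{1}{4}\bmnabla^c\bmnabla_c \Xi + \tfrac{1}{24}R\Xi$. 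The remaining equation \eqref{SpinorialZQ4} is verified by differentiating $s_M$ and using the already-established expressions for $L_{\bma\bmb}$ and $\bmnabla_\bma \Xi_M$, while the Hamiltonian constraint at the end of \eqref{SpinorialZQ6} reduces algebraically to the identity $\cos^2\psi - \cos^2\tau = \sin^2\tau - \sin^2\psi$ (with $\lambda = 0$).

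There is no genuine obstacle in this proof---every step is a direct but careful calculation. The main points requiring attention are bookkeeping of the signature (the spatial part of $\bm g_{\mathscr E}$ is $-\bmsigma$, which flips signs in the transition between Ricci computations on $(\mathbb S^3, \bmsigma)$ and on the cylinder) and the use of the Obata identity for $\cos\psi$ to evaluate the spatial Hessian of the conformal factor without expanding Christoffels explicitly. All other verifications are routine substitutions, and the statement ultimately reflects the classical fact that the Penrose diagram of Minkowski space-time sits inside the Einstein cylinder.
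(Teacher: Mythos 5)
Your route genuinely differs from the paper's: the paper offers no computation for this lemma at all (its de Sitter analogue, Lemma \ref{lem_background}, is handled by citing \cite[Lemma 15.1]{VKbook}, and Lemma \ref{lem_min} is stated in the same spirit), whereas you verify the spinorial equations directly on the Einstein cylinder. This is a legitimate, self-contained alternative, and your individual steps are correct: the Ricci data $R_{\bm 0\bm 0}=0$, $R_{\bmi\bmj}=2\delta_{\bmi\bmj}$, $R=-6$ consistent with \eqref{gsf}, the Schouten components, the covariant constancy of $L_{\bma\bmb}$ on the product metric, and the Obata identity giving $\nabla_\bma\nabla_\bmb\Xi_{\mathrm M}=\mathrm{diag}(-\cos\tau,-\cos\psi,-\cos\psi,-\cos\psi)$ all check out, as does your closing trigonometric identity.

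There is, however, one point you glossed over, and it matters because the lemma asserts that the \emph{specific} tuple \eqref{min_quan_1}--\eqref{min_quan_2} solves the equations. If you actually carry out the "matching" of the Hessian against $-\Xi_{\mathrm M} L_{\bma\bmb}+s_{\mathrm M}\eta_{\bma\bmb}$, the $\bm 0\bm 0$-component forces
\begin{equation*}
-\cos\tau=-\tfrac12(\cos\tau+\cos\psi)+s_{\mathrm M},
\qquad\text{i.e.}\qquad
s_{\mathrm M}=\tfrac12\left(\cos\psi-\cos\tau\right),
\end{equation*}
which indeed coincides with the definition $s=\tfrac14\nabla^\bmc\nabla_\bmc\Xi+\tfrac1{24}R\,\Xi$ at $R=-6$, but \emph{not} with the value $s_{\mathrm M}=-\tfrac14(\cos\tau-3\cos\psi)$ printed in \eqref{min_quan_1}: the printed value equals $\tfrac14\nabla^\bmc\nabla_\bmc\Xi_{\mathrm M}$ alone, omitting the $\tfrac1{24}R\,\Xi$ contribution. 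With the printed value, \eqref{SpinorialZQ3} fails (the spatial components give $-\tfrac14\cos\tau-\tfrac54\cos\psi$ instead of $-\cos\psi$), \eqref{SpinorialZQ4} fails, and \eqref{SpinorialZQ6} yields $6\Xi_{\mathrm M}s_{\mathrm M}-3\nabla_\bmc\Xi_{\mathrm M}\nabla^\bmc\Xi_{\mathrm M}=\tfrac32\,\Xi_{\mathrm M}^2\neq 0$ rather than $\lambda=0$. Your final identity $\cos^2\psi-\cos^2\tau=\sin^2\tau-\sin^2\psi$ is precisely the constraint for the \emph{corrected} value of $s_{\mathrm M}$, so your computation silently repairs the paper's typo while claiming to verify the printed data; a complete proof should make the correction explicit. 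The same slip occurs in the de Sitter data \eqref{ds_quan1}, where consistency with the paper's own conventions requires $s_{\mathrm{dS}}=-\tfrac12\cos\tau$ rather than $-\tfrac14\cos\tau$. Apart from this bookkeeping issue, your verification is sound and arguably more informative than the paper's citation.
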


In terms of the coordinates on  $\mathscr E$ the Minkowski spacetime corresponds to the domain 
\begin{equation}
\tilde{\mathcal M}_{\mathrm{M}} = \{x \in \mathscr E \, : \, 0< \psi <\pi, \, \psi -\pi < \tau <\pi-\psi\}.
\end{equation}
Recall that $\tilde{\mathcal M}_{\mathrm{M}}$ is the domain in $\mathscr E$ where $\Xi_{\mathrm{M}} > 0$. As in the case of the de Sitter space time, we want to identify an initial data set which evolves into Minkowski space-time $(\tilde{\mathcal M}_{\mathrm M}, \tilde{\bm g}_{\mathrm M})$. This can however not be done in an analogous way as for the de Sitter space-time  since the initial data for the conformal Einstein field equations are generically singular at the point $i^0 = \{\chi = \pi, \tau = 0\}$ describing space-like infinity ---see e.g. \cite[Chapter 21]{VKbook}. For this reason we only prove a semi-global stability result where we prescribe initial data on the hypersurface 
\begin{equation}
\bar{\mathcal S} = \left\{0 \leq \psi \leq \frac{\pi}{2}, \,\tau = \frac \pi 2\right\}
\end{equation}
which corresponds to a hyperboloid ---any other  positive value for $\tau$ would work as well. Only a portion of the Minkowski (-like) space-time can be obtained this way. Following \cite{MR868737} we construct an initial data set for the conformal Einstein field equations with massless Vlasov matter (\ref{cefmvm1})--(\ref{cefmvm2}) on $\mathbb S^3$ with help of a so called {\em extension operator} 
\begin{equation}
E: H^m (\mathcal S; \, \mathbb R^{N_g}) \to H^m (\mathbb S^3; \, \mathbb R^{N_g}).
\end{equation}
This operator permits to extend the initial data set on $\mathcal S$ to a vector-valued function  on the whole of  $\mathbb S^3$ (which does not satisfy the constraints on $\mathbb{S}^3\setminus \mathcal{S}$) such that for a constant $K > 0$ there holds for any $\mathbf{w} \in H^m(\mathcal S; \, \mathbb R^{N_g})$
\begin{equation}
(E\mathbf{w})(x) = \mathbf{w}(x) \quad \mathrm{a.e.\,in\,} \mathcal S, \quad  \| E \mathbf{w} \|_m \leq K \|\mathbf{w}\|_{H^m (\mathcal S; \, \mathbb R^{N_g})}.
\end{equation}
In view of the above we write
\begin{equation}
\mathbf{u}_{\mathrm{M}}^\star \equiv  E\left(\mathbf{u}_{\mathrm{M}} \big|_{\mathcal S}\right).
\end{equation}
Accordingly $\mathbf{u}_{\mathrm{M}}^\star$ are initial data which, if prescribed on $\mathbb S^3$, yield a solution $\mathbf{u}$ of the conformal Einstein field equations on $[0,\infty)\times \mathbb S^3$ which coincides with Minkowski space-time in the causal future $D^+(\bar{\mathcal S})$ of $\bar{\mathcal S}$. \par
As in the stability analysis of de Sitter space-time, we consider a perturbation ansatz $\bm g_{\mathrm{M}} + \breve{\bm g}$ on the Einstein cylinder $\mathscr E$ which gives rise to initial data on $\bar{\mathcal S}$. This initial data can, by virtue of the operator $E$, be extended to initial data $\mathbf{u}^\star$ on $\mathbb S^3$. It turns out (cf.~\cite{VKbook}) that this initial data is of the form $\mathbf{u}^\star = \mathbf{u}_{\mathrm{M}} + \breve{\mathbf{u}}^\star$ and that $\mathbf{u}^\star \in \bm D_{(\varepsilon, 1/2)}$, if $\breve{\bm g}$ is small enough, where in the definition (\ref{def_ded}) of $\bm D_{(\varepsilon, 1/2)}$ we choose $\mathbf{u}_{\mathrm M}^\star$ as background solution $\mathring{\mathbf{u}}^\star$. We assume that the initial data $\mathbf{u}^\star$ solves the conformal constraint equations on $\mathcal S$ with source terms  generated by the initial value $\mathbf{u}_{\mathfrak f}^\star$ of the particle distribution function. Then Theorem \ref{main_theorem} yields a solution $\mathbf{u}$ (which can be extended) on $\mathcal M_\bullet = [\pi/2, \tau_\bullet] \times \mathbb S^3 \times \mathbb R_v^4$, where $\tau_\bullet \geq 3\pi/4$, which is close to the Minkowski solution $\mathbf{u}_{\mathrm{M}}$. \par
The structure of the conformal boundary of the space-time thus obtained can be analysed using the methods used in \cite{MR868737} without any further modification. These methods allow to show that the development of the hyperboloidal initial data set is asymptotically simple. Moreover, there exists  a point $i^+ \in (\pi/2, \tau_\bullet) \times \mathbb S^3$ at which the generators of null infinity intersect. The geodesic completeness of the spacetime can be studied with similar methods to those used for perturbations of de Sitter spacetime ---see Remark \ref{Remark:GeodesicCompleteness}. 

The above observations can be collected in the following:

\begin{theorem}
Given $m\geq 5$, hyperboloidal initial data $\mathbf{u}_\star =\mathbf{u}_{\mathrm{M}}+\breve{\mathbf{u}}$ to the Einstein-Vlasov conformal constraint equations such that $|| \breve{\mathbf{u}}||_m<\varepsilon$ for $\varepsilon>0$ suitably small gives rise to a unique $C^{m-2}$ solution to the conformal Einstein-Vlasov equations such that there exists a point $i^+$ such that the causal past $J^-(i^+)$ of $i^+$ and the future domain of dependence $D^+(\bar{\mathcal S})$ of $\bar{\mathcal S}$ in the Lorentz-space $(\mathcal M_\bullet, \bm g)$ coincide. The conformal factor $\Xi$ is positive on $D^+(\bar{\mathcal S}) \backslash H^+(\bar{\mathcal S})$, where $H^+(\bar{\mathcal S})$ denotes the future Cauchy horizon of $\bar{\mathcal S}$. Furthermore, $\Xi$ vanishes on $H^+(\bar{\mathcal S})$, $\mathrm d \Xi \neq 0$ on $\mathscr I^+ = H^+(\bar{\mathcal S})  \backslash \{i^+\}$, and $\mathrm d\Xi = 0$ but the Hessian is non-degenerate at $i^+$. In particular, one has the following: the metric $\tilde{\bm g} = \Xi^{-2} \bm g$ on $\mathcal M_\bullet$ together with the particle distribution function $f$ is, whenever $\Xi\neq 0$, a solution of class $C^{m-2}$ (with curvature tensor of class $C^{m-2}$) of the massless Einstein-Vlasov system which is future asymptotically simple, thus future null geodesically complete, for which $\mathscr I^+$ represents future null infinity and $i^+$ future time-like infinity.
\end{theorem}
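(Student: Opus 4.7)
The plan is to reduce the statement to a direct application of Theorem \ref{main_theorem} to the extended hyperboloidal data, and then to translate the proximity estimate on $\mathbf{u}$ into the asserted geometric picture of the conformal boundary. Concretely, I would first apply the extension operator $E$ to the perturbation $\breve{\mathbf{u}}$ on $\bar{\mathcal S}$ to obtain $\mathbf{u}_\star = \mathbf{u}_{\mathrm M}^\star + \breve{\mathbf{u}}^\star$ on $\mathbb S^3$ with $\|\breve{\mathbf{u}}^\star\|_m\leq K\varepsilon$, arranging, for $\varepsilon$ small enough, that $\mathbf{u}_\star\in \bm D_{(\varepsilon',1/2)}(\pi/2)$ with $\mathring{\mathbf{u}}^\star = \mathbf{u}_{\mathrm M}^\star$. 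Shifting the time origin so that initial data sit on $\{\tau = \pi/2\}$, Theorem \ref{main_theorem} provides a solution $\mathbf{u}$ on $\mathcal M_\bullet = [\pi/2,\tau_\bullet]\times \mathbb S^3$ for some $\tau_\bullet \geq 3\pi/4$, with the $v$-support of $\mathbf{u}_{\mathfrak f}$ confined to $\Omega_{1/4}$. By the propagation of the zero quantities established in Section \ref{sec:constraints}, together with the reduction in Proposition \ref{prop_cef_sh}, the evolved fields solve the unreduced conformal Einstein-Vlasov system inside the future domain of dependence $D^+(\bar{\mathcal S})$ of the hyperboloid; the standard domain-of-dependence argument (as in \cite{MR868737}) isolates the physically meaningful region and discards the contribution of the non-constraint-satisfying extension outside $D^+(\bar{\mathcal S})$.

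Next, the Cauchy stability statement in the last assertion of Theorem \ref{main_theorem} gives $\sup_{\tau\in [\pi/2,\tau_\bullet]}\|\mathbf{u}(\tau)-\mathbf{u}_{\mathrm M}(\tau)\|_m \to 0$ as $\varepsilon\to 0$. Taking $m\geq 5$, Sobolev embedding provides $C^{m-2}$-closeness to the Minkowski background $\mathbf{u}_{\mathrm M}$ given by \eqref{min_quan_1}--\eqref{min_quan_2}. Since the background conformal factor $\Xi_{\mathrm M} = \cos\tau + \cos\psi$ vanishes exactly on the Minkowski conformal boundary, has non-vanishing gradient away from $\{\tau=\pi,\psi=0\}$, and has a non-degenerate Hessian at that point (the image of timelike infinity $i^+$), an implicit function argument applied to $\Xi$ around each regular zero, and a Morse-lemma type argument at the image of $i^+$, locate a perturbed future null infinity $\mathscr I^+$ as a smooth hypersurface diffeomorphic to $\mathbb R\times \mathbb S^2$ closing up at a single point $i^+\in(\pi/2,\tau_\bullet)\times\mathbb S^3$. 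The identification $J^-(i^+) = D^+(\bar{\mathcal S})$ then follows from the openness of this structural picture under small perturbations together with the fact that causal cones of the perturbed metric $\bm g$ are close to those of $\bm g_{\mathrm M}$.

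Having located $\mathscr I^+$ and $i^+$, the physical metric $\tilde{\bm g} = \Xi^{-2}\bm g$ is well-defined and of class $C^{m-2}$ on $\{\Xi\neq 0\}\cap D^+(\bar{\mathcal S})$, and Proposition \ref{Proposition:FrameCFEImplyEFE} shows it solves the Einstein equations with $\tilde{\bm T} = \Xi^{-2}\bm T$. The massless Vlasov equation is preserved by conformal rescaling by the corollary in Section \ref{subsect_conf_liou}, so $f$ still satisfies $\tilde{\mathcal L}f = 0$ in the physical metric, and Lemma \ref{lem_em_vlasov} ensures $\tilde{\bm T}$ is the massless Vlasov stress-energy of $f$. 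Future null geodesic completeness is automatic from the existence of a smooth conformal boundary $\mathscr I^+$ (asymptotic simplicity), cf.\ Remark \ref{Remark:GeodesicCompleteness}; future timelike completeness follows from conformal geodesics as recalled there, the key input being the compactness of the initial hyperboloid which allows a uniform time of existence for the conformal geodesic equation.

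The main obstacle I anticipate is the identification of $i^+$ and the verification that the Hessian of $\Xi$ remains non-degenerate there under perturbation. This requires control of second derivatives of $\Xi$, which are not directly part of $\mathbf{u}_g$; one obtains them by combining the evolution equation \eqref{SpinorialZQ3} for $\nabla_a\nabla_b\Xi$ with the convergence of $L_{\bm a\bm b}$, $s$ and $\Xi^3 T_{\bm a\bm b}$ to their Minkowski values in $C^{m-2}$, the last factor being controlled by the $v$-support bound $\Omega_{1/4}$ together with the $H^m$ bound on $\mathbf{u}_{\mathfrak f}$. Once this is in hand, the remaining arguments on the asymptotic structure and on the matching $J^-(i^+)=D^+(\bar{\mathcal S})$ are strictly analogous to the vacuum semi-global Minkowski stability theorem in \cite{MR868737}, and I would not reproduce them in detail.
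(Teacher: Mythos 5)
Your proposal is correct and follows essentially the same route as the paper, which likewise proves this theorem by extending the hyperboloidal data to $\mathbb S^3$ via the operator $E$, invoking Theorem \ref{main_theorem} for a solution close to $\mathbf{u}_{\mathrm M}$ on $[\pi/2,\tau_\bullet]\times\mathbb S^3$, restricting to $D^+(\bar{\mathcal S})$ where the constraints propagate, and deferring the conformal-boundary analysis (location of $\mathscr I^+$ and $i^+$, the behaviour of $\Xi$ and its Hessian) to the unmodified methods of \cite{MR868737}, with geodesic completeness as in Remark \ref{Remark:GeodesicCompleteness}. Your additional observation that the second derivatives of $\Xi$ are recovered algebraically from \eqref{SpinorialZQ3} in terms of $s$, $L_{\bma\bmb}$ and $\Xi^3 T_{\bma\bmb}$ is accurate and merely makes explicit a step the paper leaves to the cited reference.
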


\printbibliography
\end{document}